\documentclass{article}
\usepackage[letterpaper, margin=1in]{geometry}
\usepackage{graphicx} % Required for inserting images
\usepackage{xcolor}
\usepackage{hyperref}
\hypersetup{colorlinks=true, linkcolor=blue, citecolor=green}
\usepackage{enumitem}
\usepackage{placeins}
\usepackage{setspace}
\usepackage{amsmath,amsfonts,amsthm,amssymb,xcolor,mathtools}
\usepackage{comment, algorithm, algpseudocode}
\usepackage[style=alphabetic,natbib=true, maxbibnames=99]{biblatex}
\usepackage{tikz}
\usetikzlibrary{decorations.pathreplacing}
\usepackage{dsfont}

\newtheorem{theorem}{Theorem}[section]
\newtheorem{lemma}[theorem]{Lemma}
\newtheorem{corollary}[theorem]{Corollary}

\newtheorem{definition}[theorem]{Definition}
\newtheorem*{remark}{Remark}

\newcommand{\F}{\mathbb{F}}

\newcommand{\OPT}{\operatorname{OPT}}
\newcommand{\lb}{\left(}
\newcommand{\rb}{\right)}

\newcommand{\veps}{\varepsilon}

\newcommand{\bb}{\mathbb}

\newcommand{\E}{\mathbb E}

\newcommand{\lam}{\lambda}
\newcommand{\Lam}{\Lambda}
\newcommand{\Lk}{\Lambda_k}

\newcommand{\Dr}{\Delta_R}
\newcommand{\Dl}{\Delta_L}
\newcommand{\D}{\operatorname{D}}

\newcommand{\ph}{\varphi}

\newcommand{\bone}{\mathbf{1}}
\newcommand{\Skr}{S_k^{(r)}}

\newcommand{\Span}{\operatorname{span}}

\newcommand{\one}{\mathbf 1}
\newcommand{\calH}{\mathcal H}

\newcommand{\lf}{\lfloor}
\newcommand{\rf}{\rfloor}

\title{Concentration from Product Moments via an Additional Element of Randomness}

\author{
Michael Saks\thanks{Department of Mathematics, Rutgers University, Piscataway, NJ 08854.}
\and
Aravind Srinivasan\thanks{Department of Computer Science, University of Maryland, College Park, MD 20742.}
\and
Renata Valieva\thanks{Department of Mathematics, University of Maryland, College Park, MD 20742.}
}
\date{}

\begin{document}

\maketitle

\begin{abstract}
The standard method of exponential moments for proving concentration bounds can often be replaced by an argument based on elementary symmetric polynomials. We introduce an additional element of randomness into this framework, which reduces the problem to bounding product moments over a uniformly sampled set of indices.

We show that this approach gives useful bounds in three settings. For read-$\Delta$ families under limited independence, we obtain bounds governed by the degrees of randomly induced dependency subgraphs, improving the dependence on worst-case degrees. For random binary linear hashing with (semi-)random inputs, we derive fixed-bin and maximum-load bounds by controlling the rank defect of random tuples of input keys. Finally, for stochastic processes, we show how decay of product moments yields concentration bounds, recovering the spectral and mixing-time scales for finite-state Markov chains.
   % The standard method of exponential moments for concentration bounds can often be replaced by the use of the elementary symmetric polynomials; we add an element of randomness to the latter, that reduces concentration to bounding product moments along a uniformly-sampled set of indices.
   % We show that this approach gives useful bounds in three settings: read-$\Delta$ families under limited independence, linear hashing with (semi-)random inputs, and  stochastic processes.
      % We then show that along with various additional ideas that we develop, this template gives useful bounds in three settings: read-$\Delta$ families under limited independence, linear hashing with (semi-)random inputs, and  stochastic processes.
\end{abstract}
\tableofcontents

\bigskip

\section{Introduction}
% general idea; SSS rootes
The Chernoff--Hoeffding bounds~\cite{chernoffbound, hoeffdingbound} are among the basic tools in concentration inequalities and randomized algorithms. In their standard form, they are proved by applying Markov's inequality to the exponential moment $\E[e^{\lam X}]$ of a sum $X=X_1+\ldots+X_n$ of independent bounded random variables. Many algorithmic applications, however, produce random variables that are not fully independent: the variables may satisfy only limited independence, may be negatively dependent, may arise from a dependent rounding scheme, or may be functions of a smaller set of independent random variables. Extensions of Chernoff-type bounds have been developed in several such settings, including limited independence~\cite{siamdm/SchmidtSS95,DBLP:conf/focs/BellareR94,approx/Skorski22}, negative dependence~\cite{DBLP:journals/rsa/DubhashiR98, siamcomp/PanconesiS97}, read-\(k\) families~\cite{rsa/GavinskyLSS15}, and dependent rounding methods such as level-set rounding, bipartite dependent rounding, pipage rounding, swap rounding, and randomized/iterated rounding frameworks~\cite{Srinivasan2001DistributionsOL, jacm/GandhiKPS06, focs/ChekuriVZ10, soda/HarveyO14, DBLP:conf/stoc/Bansal19}.

A particularly relevant starting point for us is the work of Schmidt, Siegel, and Srinivasan~\cite{siamdm/SchmidtSS95}.  Instead of using the full exponential moment $\E\left[e^{\lam \sum_i X_i}\right]$ to bound the tail of $\sum_i X_i$, they use a low-degree polynomial with non-negative coefficients whose expectation can be bounded under limited independence. As one consequence, Chernoff--Hoeffding type bounds can still be obtained when the variables are only \(k\)-wise independent, for an appropriate choice of \(k\).

\subsection{Product-moments framework}
\label{sec:prod-mom-framework}
The starting point of our framework is the same low-degree viewpoint, but we phrase it directly in terms of product moments.  For simplicity, suppose first that $X_i\in\{0,1\}$---the same reduction applies for variables in
\([0,1]\) as well.  We state the framework for upper tails.  In the applications below,
the corresponding lower-tail statements can be obtained by applying the same
argument to the complementary variables $(1-X_i)$, so we do not state them
separately.

Let
$$
X=X_1+\cdots+X_n.
$$
If $X\ge a$, then at least $\binom{a}{k}$ products of the form $X_{i_1}\cdots X_{i_k}$ are equal to one.  Therefore, 
$$
\Pr[X\ge a]
\le
\frac{\E S_k(X_1,\ldots,X_n)}{\binom{a}{k}}
$$
by Markov's inequality, where
$$
S_k(X_1,\ldots,X_n)
=
\sum_{1\le i_1<\cdots<i_k\le n}
X_{i_1}\cdots X_{i_k}
$$
is the $k$-th elementary symmetric polynomial.  Equivalently, if $I$ is a uniformly random $k$-sized subset of $[n]$, that we sometimes refer to as a \emph{$k$-subset}, then
$$
\E S_k(X_1,\ldots,X_n)
=
\binom{n}{k}\,
\E_I\E\prod_{i\in I}X_i.
$$
Thus, the upper-tail problem reduces to bounding the product moment of the variables indexed by a random small set $I$. Note that under $k$-wise independence, every $k$-tuple behaves well, and the underlying random variables $(X_i)_{i\in I}$ are independent, implying that $\E[\prod_{i\in I}X_i] = \prod_{i\in I}\E[X_i]$---which recovers the usual Chernoff bounds for an appropriately chosen $k$~\cite{siamdm/SchmidtSS95}. In more general dependent settings, however, it is too much to ask that every $k$-tuple is independent.  The natural question instead is whether a \emph{typical} random $k$-tuple behaves well, which is why averaging over the sampled index set is a useful viewpoint. 

We show that this simple addition of an element of randomness, via the random choice of $I$, leads---in conjunction with additional tools that we develop---to a variety of applications. As an example, we show in Section~\ref{sec:RWs} that if we write $I = \{i_1, i_2, \ldots, i_k\}$ where $i_1 < i_2 < \cdots < i_k$, then the sequence of \emph{gaps} 
\[ (i_1, i_2 - i_1, i_3 - i_2, \ldots, i_k - i_{k-1}) \]
is \emph{negatively associated}, and use this to develop new tail bounds for a family of stochastic processes. Thus, the fact that $I$ can be viewed as \emph{random}, turns out to be nontrivially useful. 

\subsection{Read-$\Delta$ families under Limited Independence}
% *discussion of previous work/motivation* Initialized by Gav; a different technique in Duppala. bounds are essentially optimal; of interest if Xi's are not fully independent
In this setting, we study concentration for a sum of dependent random variables
\(Y_1,\ldots,Y_n\).  Each \(Y_i\), viewed as an ``output", is a function of some subset of underlying
random variables \(X_1,\ldots,X_m\), which we assume satisfy a limited-independence
condition.  We say that \(Y_1,\ldots,Y_n\) form a read-\(\Delta\) family if each
underlying variable \(X_j\) influences at most \(\Delta\) of the variables \(Y_i\).

This natural model was introduced by Gavinsky et al.~\cite{rsa/GavinskyLSS15}, who proved Chernoff-type bounds for read-\(\Delta\) families.  Their result shows that if \(Y_1,\ldots,Y_n\) are Boolean read-\(\Delta\)
variables over \textbf{independent} underlying random variables \(X_1,\ldots,X_m\),
then the usual Chernoff exponent degrades by a factor of \(\Delta\):
\[
        \Pr\left[\sum_i Y_i \ge \alpha n\right]
        \le
        \exp\left(-\frac{n}{\Delta}D(\alpha\|p)\right),
\]
where \(p\) is the average of the expectations of the \(Y_i\)'s and $\exp(x)$ denotes $e^x$; 
also, for $p,q\in(0,1)$, we have the usual KL-divergence definition 
$$
D(q\|p):=
q\ln \frac{q}{p}
+
(1-q)\ln \frac{1-q}{1-p}.
$$
This loss of a factor of $\Delta$ is tight in general: as an example take \(Y_1=\cdots=Y_\Delta=X_1\), \(Y_{\Delta+1}=\cdots=Y_{2\Delta}=X_2\), and so on.  Then the sum of the \(Y_i\)'s is exactly \(\Delta\) times a sum of only \(n/\Delta\) independent variables, so one cannot hope for an exponent better than \(n/\Delta\). This powerful bound is often a useful complement to other tail bounds for sums of dependent random variables such as \cite{Janson2004LargeDF}. 

The original proof of~\cite{rsa/GavinskyLSS15} is based on Shearer's lemma and an entropy method.  A later proof of the same read-\(\Delta\) inequality by Duppala et al.~\cite{duppala_et_al:LIPIcs.ITCS.2025.47} follows more directly from an exponential-moment method when the underlying random variables are Boolean: they show that if \(F_1,\ldots,F_n\) are nonnegative and read-\(\Delta\), then
\[
        \E\left[\prod_{i=1}^n F_i\right]
        \le
        \left(\prod_{i=1}^n \E[F_i^\Delta]\right)^{1/\Delta}.
\]
Applying this with \(F_i=\exp(\lambda Y_i)\) for a suitable $\lambda$ gives the same Chernoff exponent \(n/\Delta\).

\paragraph{Limited independence in the underlying variables.}
The next natural question is whether concentration still holds when the underlying variables \(X_1,\ldots,X_m\) are not fully independent.  This is the setting studied in this work.  We assume that the \(X_j\)'s are only \(r\)-wise independent, while the output variables still form a read-\(\Delta\) family over the \(X_j\)'s.

A direct argument gives a first baseline.  If each output \(Y_i\) depends on at most \(\Delta_R\) underlying $X_j$ and $X_j$ influences at most \(\Delta_L\) outputs, then a set of \(k\) output variables depends on at most \(k\Delta_R\) underlying variables.  Thus, as long as \(k\le r/\Delta_R\), the relevant underlying variables are fully independent, and the read-\(\Delta_L\) Chernoff argument applies.  
Theorem~\ref{thm: general baseline} shows that this simple baseline yields
\[
 \Pr\left[\sum_i Y_i\ge \alpha n\right]
        \le
        \exp\left(
            -\frac{\lfloor r/\Delta_R\rfloor}{\Delta_L}D(\alpha\|p)
        \right).
\]

However, this baseline is often pessimistic. Using our framework, we are only concerned with the result on average for a \emph{uniformly random} $k$-tuple of $Y$'s.  Specifically,
the relevant object is the random subgraph induced by the $k$ randomly-sampled right vertices in the bipartite dependency graph between the \(X\)'s and the \(Y\)'s.

This random induced subgraph can improve on the worst-case bounds in two ways. First, the sampled output variables may have much smaller read degree than $\Dl$. Even if some underlying variable influences \(\Delta_L\) outputs in total, it may influence only a few of the randomly sampled outputs. In that case, the read-\(\Delta_L\) loss in the Chernoff exponent can be replaced by the read degree observed in the sampled subgraph.

For example, even in the sparse $d$-regular case with \(m=n\), while the baseline argument gives an exponent of order $D(\alpha\|p)\cdot\frac{r}{d^2}$, our bound instead gives an exponent
$$
D(\alpha\|p)\cdot
\Omega\left(
\min\left\{
\frac{r}{d},
\sqrt{\frac{r}{d}\ln\frac{n}{r}}
\right\}
\right).
$$
In general, we show that this approach allows us to remove dependence on the worst-case read degree $\Dl$.

% {\color{magenta} Renata: we can also state this in the $(\Dl,\Dr)$-regular case and replace $r/d$ by $r/\Dr$ in the bound above to emphasize that the exponent no longer depends on $\Dl$.}

There is a second possible improvement. The $k$ sampled outputs may depend on many fewer underlying variables than the
crude upper bound \(k\Delta_R\). Since the product-moment argument only needs independence of the underlying variables in the induced subgraph, even when \(k>r/\Delta_R\), the
sampled outputs may still depend on at most \(r\) underlying variables. In this case, the product degree increases from \(r/\Delta_R\) to \(k\), and the corresponding contribution to the exponent improves by the same factor.

Section~\ref{sec:read} gives a detailed description and additional regimes of interest. 

\subsection{Linear hashing under input randomness}
Hashing gives a natural setting where concentration bounds are needed. Consider the balls-and-bins problem in which \(m\) keys are mapped to \(n\) bins. The maximum load controls, for instance, the worst-case lookup time in hashing with chaining. Under a truly random hash function, the hash values of any fixed distinct keys are independent and uniform, and the usual balls-and-bins analysis gives the optimal maximum-load scale. Such a hash function is too expensive to store, so a long line of work studies simpler hash families, including universal hashing, \(k\)-wise independent hashing, tabulation-based hashing, and linear hashing~\cite{DBLP:conf/stoc/CarterW77, DBLP:journals/jcss/WegmanC81, DBLP:journals/siamcomp/Siegel04, DBLP:conf/stoc/PatrascuT11, DBLP:journals/jacm/AlonDMPT99}.

In this work, we focus on one of the simplest hash families, binary linear hashing.  Let $n=2^\ell$, and choose a uniformly random linear map
$$
        h:\mathbb F_2^u\to \mathbb F_2^\ell .
$$
Equivalently, $h(v)=Av$ for a random $\ell\times u$ matrix $A$ over $\mathbb F_2$.  This family is universal and it is pairwise independent on nonzero distinct inputs. It requires only $u \log_2 n$ bits to describe, although it is not $3$-wise independent in general.  

The worst-case model fixes an arbitrary key set and then samples the hash function. Jaber, Kumar, and Zuckerman~\cite{10.1145/3717823.3718208} showed that random binary linear hashing achieves the fully random maximum-load scale up to constants. In addition, they showed a tail bound of the form
$$
        \Pr[L_{\max}\ge R\OPT(m,n)]\le O(R^{-2}),
$$
where $\OPT(m,n)$ denotes the expected maximum-load. Bshouty~\cite{bshouty2026notesecondorderexpectedmaximumload} subsequently sharpened the maximum-load tail bound by optimizing the potential method of~\cite{10.1145/3717823.3718208}. He also gave an independent fixed-bin analysis: for a fixed bin \(y\), the load \(Load_h(y)=|B\cap h^{-1}(y)|\) satisfies a sharp tail bound of order \(2^{-a^2}\) at threshold \(2^a\). This fixed-bin estimate is essentially tight, but using it to control the maximum load requires a union bound over all $2^\ell$ bins, thus losing a factor of $2^\ell$. As a result, it is weaker than the potential-based maximum-load tail bounds, except at sufficiently large thresholds. For example, in the balanced case \(m=n=2^\ell\), the union bound becomes useful only once \(a\gtrsim \sqrt{\ell}\). 
% Bshouty's work sharpens the maximum-load tail bound of~\cite{10.1145/3717823.3718208}; optimal bounds are not known here. % \textcolor{red}{Please check if my parenthesized sentence here is correct.} {\color{cyan} I believe so.}

\paragraph{Input randomness and semi-randomness.}
We ask what can be proved in an intermediate model where the hash family is still linear, but the input data themselves have some (weak) randomness. This is close in spirit to the block-source framework of Chor and Goldreich~\cite{CG88} and to the work of Chung, Mitzenmacher, and Vadhan~\cite{Chung2008WhySH}, where the data items may be correlated and far from uniform, but each new item retains enough conditional entropy given the past. Under suitable conditional entropy assumptions, simple universal hash functions can behave much like ideal hashing for several classical applications, including chaining, linear probing, balanced allocations, and Bloom filters~\cite{Chung2008WhySH}. Another recent example of entropy formulations in modeling semi-random inputs appears in \cite{kumar-etal-subset-entropy}.
% (See \cite{kumar-etal-subset-entropy} for another recent example of entropy formulations in modeling semi-random inputs.) 

A random linear map behaves like ideal hashing on linearly independent tuples: if \(v_1,\ldots,v_k\) are linearly independent, then \(h(v_1),\ldots,h(v_k)\) are independent uniform elements of \(\mathbb F_2^\ell\). Conversely, linear dependencies among the \(v_i\)’s impose the corresponding linear dependencies among their hash values. Thus the obstruction is rank defect: a sampled tuple of input keys behaves ideally only as long as it has span dimension close to $k$.

The cleanest sufficient condition is a pointwise conditional min-entropy assumption.  We say that the input sequence $R_1,\ldots,R_m\in\mathbb F_2^u\setminus\{0\}$ is a $p_{\rm src}$-pointwise block source if, for every $i$, every history $r_1,\ldots,r_{i-1}$ in the support of $R_{<i}$, and every possible key $x$,
\begin{equation}\label{def:pointwise-block-source}
        \Pr[R_i=x\mid R_1=r_1,\ldots,R_{i-1}=r_{i-1}]
        \le p_{\rm src}.
\end{equation}

For linear hashing, this pointwise condition is useful because it immediately gives a subspace-hit bound.  If $V\le \mathbb F_2^u$ has dimension $d$, then
$$
        \Pr[R_i\in V\mid R_{<i}]
        \le |V|p_{\rm src}
        \le 2^d p_{\rm src}.
$$
In particular, after $j$ previously selected keys have been revealed, the probability that the next selected key lies in their span is at most $2^j p_{\rm src}$.

This fits naturally into the product-moment framework of Section~\ref{sec:prod-mom-framework}: to bound the upper tail of the load of a fixed bin $y$, set
$$
        X_i:=\mathbf 1_{\{h(R_i)=y\}},
        \qquad
        L_y=\sum_{i=1}^m X_i .
$$
Our upper tail for $L_y$ would be in terms of
$$
        \mathbb E_I \mathbb E \prod_{i\in I} X_i,
$$
where $I$ is a uniformly random $k$-subset of $[m]$.  Conditional on the realized input keys, this product moment is small whenever the selected keys have high rank: if the selected tuple has span dimension \(d\), then the probability that all selected keys land in \(y\) is at most \(n^{-d}\). Thus the upper-tail problem reduces to bounding the rank defect of a random small tuple of input keys, parametrized by \(p_{\rm src}\).

Thus, our contribution here sits between two benchmarks.  If the input keys are fully random, then a fixed bin has the usual binomial-type tail.  If the input set is worst-case, random linear hashing still has the correct maximum-load scale by~\cite{10.1145/3717823.3718208}, but the best-known tail bounds are weaker.  Our goal is to understand an intermediate regime: the hash family is still linear, but the input source has enough randomness that a random small tuple of keys is unlikely to have large rank defect.

% The comparison with previous work is summarized in Tables~\ref{tab:single} and~\ref{tab:maxload}. 
The fixed-bin result of~\cite{bshouty2026notesecondorderexpectedmaximumload} is the closest in spirit to ours: it counts ordered linearly independent tuples inside a heavy bin and applies Markov's inequality. Their estimate is sharper in the worst-case-input, fixed-bin setting, while our rank-defect analysis does better under semi-random-input assumptions. 
% In particular, given the bounds in Table~\ref{tab:single}, we see that our bound is comparable when $p_{\rm src} = n^{-1.5}$ and only gets better if we have more randomness.

% \begin{table}[t]
% \centering
% \small
% \begin{tabular}{c|c|c|c}
% \textbf{Reference} & \textbf{Hash family} & \textbf{Input model} & \textbf{Tail at threshold $2^a$} \\
% \hline
% Ideal hashing
% & truly random function
% & worst-case
% & $2^{-\Theta(a2^a)}$
% \\[1.2ex]

% \cite{bshouty2026notesecondorderexpectedmaximumload}
% & random linear map
% & worst-case
% & $ 2^{-a^2+O(1)}$
% \\[1.2ex]

% This work
% & random linear map
% & random source
% & $2^{\frac{-a^2}{2}-a\eta\ell+O(a)}$
% \end{tabular}
% \caption{Comparison of fixed-bin upper-tail bounds in the balanced case $m=n=2^\ell$ with $a\leq \ell$ and min-entropy parameter $\log\frac{1}{p_{\rm src}} = (1+\eta)\ell$.}
% \label{tab:single}
% \end{table}

To illustrate the comparison, consider the balanced case \(m=n=2^\ell\). Bshouty's fixed-bin estimate gives a bound of the form 
\begin{equation}
\label{eqn:semi-random-example-bshouty}
\Pr[L_y\geq 2^A]\leq 2^{-A^2+O(1)}.
\end{equation}
In our random-input model, the corresponding bound is of the form
\begin{equation}
\label{eqn:semi-random-example}
\Pr[L_y\geq 2^A]\leq 2^{\frac{-A^2}{2}-A\eta\ell+O(A)},
\end{equation}
where $\eta$ is defined through the min-entropy
$\log_2\frac{1}{p_{\rm src}}=(1+\eta)\ell$. As an illustration, suppose $p_{\rm src}=n^{-3/2}$.\footnote{Since the domain in hashing is often much larger than the range $[n]$, some mild randomness in the input seems reasonable to assume as in~\cite{Chung2008WhySH}, compared to the restrictive case where the universe has size $n$.} Then \(\eta=1/2\), and the exponent in \eqref{eqn:semi-random-example} becomes
\[
-\frac{A^2}{2}-\frac{A\log_2 n}{2}+O(A).
\]
Hence, since $A \leq \log_2 n$, with $A \ll \log_2 n$ in most regimes of interest, our tail bound is at least as strong as, and often much stronger than, \(2^{-A^2}\), as the lower-order term ``$O(A)$" in (\ref{eqn:semi-random-example}) is not of much consequence.

\subsection{Stochastic processes}
Finally, we apply the product-moment framework to concentration for dependent stochastic processes, with Markov chains being one useful example.  Chernoff--Hoeffding bounds for Markov chains and expander walks have a long history~\cite{DBLP:conf/focs/Gillman93, DBLP:journals/cpc/Kahale97, Lezaud1998ChernofftypeBF, Len2004OptimalHB, DBLP:journals/cpc/Wagner08}
% Gillman, Kahale, Lezaud, Le'on--Perron, Healy, and Wagner
.  These bounds typically depend on the spectral gap: for a walk of length $t$ and observables of stationary mean $\mu$, one obtains tail bounds  of the form
$$
\Pr\left[\left|X-\mu t\right|\geq \delta\mu t\right]
\leq
C_\varphi
\exp\left(-\Omega\left((1-\lambda)\delta^2\mu t\right)\right)
$$
for $0\leq \delta\leq 1$, up to a factor depending on the initial distribution $\varphi$, with a corresponding linear dependence on $\delta$ for larger deviations, when $\delta>1$. This dependence on the spectral gap is asymptotically optimal, even for simple reversible chains---and our goal is not to improve it.

We recover concentration of the same scale from low-degree product moments.  For a random walk $(v_1,\ldots,v_t)$ and functions $f_i:V\to[0,1]$, the objective becomes to upper bound
\[
\E\left[\prod_{j=1}^k f_{i_j}(v_{i_j})\right]
\]
for any set of times $i_1<\cdots<i_k$.  In the spectral setting we prove such product bounds directly; the dependence on the gaps $i_{j+1}-i_j$ captures the decay of correlations/dependence along the walk. Since the product-moment reduction averages over a uniformly sampled set of times, these gaps are typically large enough to turn this decay into a useful bound on the averaged product moment, and hence into concentration.

We emphasize that this viewpoint does not have to be tied to Markov processes. What the product-moment argument relies on is a bound on the product moment above, with the bound improving as the indices $i_1<\cdots<i_k$ become more separated. For Markov chains, such estimates follow from spectral contraction, or from mixing properties. In other settings, analogous inputs may appear as weak coupling, decay of influence, or spatial mixing. For example, correlation decay for the hard-core model~\cite{DBLP:conf/stoc/Weitz06} and block-wise correlation decay for colorings of sparse random graphs~\cite{DBLP:conf/icalp/Yin14} provide graph-indexed analogues of this principle. 

We then return to the Markov-chain setting and show that our approach  recovers mixing-time-based bounds. 
Chung et al.~\cite{Chung2012ChernoffHoeffdingBF} proved a Chernoff--Hoeffding bound depending only on the total-variation mixing time $T=T(\epsilon)$, which is useful in settings where the spectral norm is unknown, hard to estimate, or uninformative.  In particular, for a walk started from an initial distribution $\varphi$ and observables of stationary mean $\mu$, their bound gives
\[
\Pr\left[X\geq (1+\delta)\mu t\right]
\leq
C_\varphi
\exp\left(-\frac{\delta^2\mu t}{72T}\right)
\]
for $0\leq\delta\leq1$, again with a linear dependence on $\delta$ for larger deviations.  They also show that the dependence on $T$ is optimal up to constants for mixing-time-only bounds.  Thus, without additional assumptions, one should not expect to improve the asymptotic dependence on $t/T$.

Following the approach in~\cite{Chung2012ChernoffHoeffdingBF} combined with the product-moment method,  we recover asymptotically the same bound
\[
\Pr\left[X\geq (1+\delta)\mu t\right]
\leq
C_\varphi
\exp\left(-\frac{\delta^2\mu t}{6(1-\mu)T}\right),\qquad 0<\delta\ll 1, 
\]
which improves upon the constant in the exponent of ~\cite{Chung2012ChernoffHoeffdingBF} if $\mu$ is not too close to $1$. 

\section{Preliminaries}
In this section we state some notation and auxiliary facts used throughout the paper.  The first
subsection gives the product-moment reductions that drive all three applications.
The remaining subsections record the standard tools needed for the read-\(\Delta\),
hashing, and Markov-chain settings, respectively.  Throughout the paper, \(\log\) denotes the base-\(2\) logarithm, and \(\ln\) denotes the natural logarithm.

\subsection{From product moments to upper tails}

Let $z_1,\ldots,z_n\in[0,1]$. For $0\leq k\leq n$, define the $k$-th
elementary symmetric polynomial by
$$
S_k(z_1,\ldots,z_n)
:=
\sum_{1\leq i_1<\cdots<i_k\leq n}
z_{i_1}\cdots z_{i_k}.
$$

\begin{lemma}[\cite{siamdm/SchmidtSS95}]\label{lem: z-cont}
Let $z_1,\ldots,z_n\in[0,1]$, and suppose
$
\sum_{i=1}^n z_i\geq a
$.
Then, for every integer $1\leq k\leq \lfloor a\rfloor$,
$$
S_k(z_1,\ldots,z_n)\geq \binom{\lf a\rf}{k}.
$$
Moreover, let $z:=\sum_{i=1}^nz_i$, then 
$$
S_k(z_1,\ldots,z_n)\geq\binom{\lf z\rf}{k} + (z-\lf z \rf)\cdot\binom{\lf z\rf}{k-1} =:\binom{z}{k}_{\mathrm{lin}}
,
$$
where $\binom{a}{b}:=0$ if $b<0$ or $b>a$ with $\binom{0}{0}=1$.
\end{lemma}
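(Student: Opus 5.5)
The plan is to prove the second (stronger) inequality first and then deduce the first from it. The basic tool will be a smoothing argument: among all vectors $z\in[0,1]^n$ with a fixed coordinate sum $s=\sum_i z_i$, I will locate a minimizer of $S_k$.

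\emph{Step 1: pairwise exchange.} Fix two coordinates $i\neq j$ and hold all others fixed. Then
$$S_k(z)=A+(z_i+z_j)B+z_iz_j C,$$
where $A,B,C\ge 0$ are elementary symmetric polynomials of the remaining $n-2$ coordinates; in particular $C=S_{k-2}$ of those coordinates. Now replace $(z_i,z_j)$ by $(z_i+t,\,z_j-t)$, which preserves the sum. The product $z_iz_j$ is concave in $t$ and is minimized at an endpoint of the feasible interval, that is, when one of the two coordinates hits $0$ or $1$. Hence moving to that endpoint does not increase $S_k$.

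Repeating this while at least two coordinates lie strictly in $(0,1)$ strictly reduces the number of fractional coordinates. We therefore reach a vector $z'$ with the same sum $s$ and $S_k(z')\le S_k(z)$, in which all coordinates are in $\{0,1\}$ except at most one. Since the sum is $s$, such a $z'$ has exactly $\lfloor s\rfloor$ ones and one coordinate equal to $s-\lfloor s\rfloor$ (this may be $0$). Compactness is not even needed here, since the procedure is finite.

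\emph{Step 2: evaluate.} For this extremal vector, split the $k$-subsets according to whether they contain the fractional coordinate:
$$S_k(z')=\binom{\lfloor s\rfloor}{k}+(s-\lfloor s\rfloor)\binom{\lfloor s\rfloor}{k-1}.$$
The remaining coordinates are zero and contribute nothing. Taking $s=z$ gives $S_k(z)\ge\binom{z}{k}_{\mathrm{lin}}$.

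\emph{Step 3: the first claim.} If $z\ge a$, then $\lfloor z\rfloor\ge\lfloor a\rfloor$, so
$$\binom{z}{k}_{\mathrm{lin}}\ge\binom{\lfloor z\rfloor}{k}\ge\binom{\lfloor a\rfloor}{k},$$
using nonnegativity of the second term and monotonicity of $\binom{m}{k}$ in $m$. The convention $\binom{a}{b}=0$ handles the edge cases, for example $k-1>\lfloor z\rfloor$.

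The only point I expect to need care is the exchange step: checking that $S_k$ is affine in each coordinate, so that the decomposition with $C\ge 0$ is valid. It is, because each monomial uses distinct indices. One also has to note that the endpoint reached keeps both coordinates in $[0,1]$, which holds by how the feasible interval for $t$ is defined.
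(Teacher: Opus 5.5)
Your proof is correct. The decomposition $S_k=A+(z_i+z_j)B+z_iz_jC$, with $A,B,C$ the elementary symmetric polynomials of degrees $k,k-1,k-2$ in the other coordinates, is valid. The concave product $(z_i+t)(z_j-t)$ is minimized at an endpoint of the feasible interval. Each move strictly reduces the number of fractional coordinates, and the extremal vector evaluates to exactly $\binom{z}{k}_{\mathrm{lin}}$. The only imprecision is the phrase ``one coordinate equal to $s-\lfloor s\rfloor$'' when $s=n$, where no zero coordinate exists; there the fractional term vanishes anyway, so nothing breaks.

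The paper does not prove this lemma; it quotes it from Schmidt--Siegel--Srinivasan, so there is no in-paper argument to match. For comparison, there is a shorter probabilistic route. It is closer in spirit to how the paper later uses $\binom{\cdot}{k}_{\mathrm{lin}}$ in the mixing-time theorem. Take independent $X_i\sim\mathrm{Bernoulli}(z_i)$ and $X=\sum_iX_i$. Multilinearity gives $S_k(z_1,\ldots,z_n)=\E S_k(X_1,\ldots,X_n)=\E\binom{X}{k}$. The map $x\mapsto\binom{x}{k}_{\mathrm{lin}}$ is the piecewise-linear interpolation of $m\mapsto\binom{m}{k}$, with nondecreasing slopes $\binom{m}{k-1}$, so it is convex and nondecreasing. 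Jensen then yields $\E\binom{X}{k}\ge\binom{\E X}{k}_{\mathrm{lin}}=\binom{z}{k}_{\mathrm{lin}}$.

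That route takes two lines and isolates exactly the convexity property the paper reuses. Your smoothing route needs more bookkeeping, but it also produces a vector in $[0,1]^n$ that attains the bound. So it shows $\binom{z}{k}_{\mathrm{lin}}$ is the exact minimum of $S_k$ at a fixed coordinate sum.
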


This immediately gives the basic product-moment reduction.

\begin{lemma}[\cite{siamdm/SchmidtSS95}]\label{lem: tail reduction}
Let $X_1,\ldots,X_n\in[0,1]$ be random variables, and set
$
X:=\sum_{i=1}^n X_i
$.
Then, for every $a>0$ and every integer $1\leq k\leq \lfloor a\rfloor$,
$$
\Pr[X\geq a]
\leq
\frac{\E S_k(X_1,\ldots,X_n)}{\binom{ a}{k}}.
$$
\end{lemma}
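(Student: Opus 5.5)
The plan is to show that on the event $\{X\ge a\}$ the random variable $S_k(X_1,\ldots,X_n)$ is pointwise at least $\binom{a}{k}$, and then apply Markov's inequality. Markov applies because $S_k\ge 0$, since every $X_i\in[0,1]$. The pointwise bound comes from Lemma~\ref{lem: z-cont}, applied to the realized values $z_i=X_i(\omega)$. Explicitly, once we know $S_k\ge\binom{a}{k}\,\mathbf 1_{\{X\ge a\}}$, taking expectations gives $\E S_k\ge\binom{a}{k}\Pr[X\ge a]$. Dividing by $\binom{a}{k}$ is legitimate because it is positive for $1\le k\le\lfloor a\rfloor$.

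Since $a$ may be non-integral, I will handle the possible meanings of $\binom{a}{k}$.

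\emph{Meaning $\binom{\lfloor a\rfloor}{k}$.} The first part of Lemma~\ref{lem: z-cont} gives $S_k(X_1(\omega),\ldots,X_n(\omega))\ge\binom{\lfloor a\rfloor}{k}$ whenever $\sum_i X_i(\omega)\ge a$, and we are done.

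\emph{Meaning the generalized binomial $a(a-1)\cdots(a-k+1)/k!$.} I use the stronger second part of Lemma~\ref{lem: z-cont}: with $z=X(\omega)$ we have $S_k\ge\binom{z}{k}_{\mathrm{lin}}$. Two elementary facts then finish this case.
\begin{enumerate}
\item The function $z\mapsto\binom{z}{k}_{\mathrm{lin}}$ is nondecreasing. It is the piecewise-linear interpolation of the nondecreasing integer sequence $\binom{j}{k}$, so $z\ge a$ gives $\binom{z}{k}_{\mathrm{lin}}\ge\binom{a}{k}_{\mathrm{lin}}$.
\item The polynomial $p(x)=x(x-1)\cdots(x-k+1)/k!$ satisfies $\binom{a}{k}_{\mathrm{lin}}\ge p(a)$ for $a\ge k$.
\end{enumerate}

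For the second fact, note that $p$ agrees with $\binom{\cdot}{k}_{\mathrm{lin}}$ at the integers $\lfloor a\rfloor$ and $\lfloor a\rfloor+1$. Also, $p$ is convex on $[k-1,\infty)$: all its roots $0,\ldots,k-1$ are real and lie at or below $k-1$, so by Rolle's theorem the roots of $p''$ also lie at or below $k-1$. As $p''$ has positive leading coefficient, $p''\ge 0$ to the right of $k-1$. Since $\lfloor a\rfloor\ge k\ge k-1$, the chord of $p$ between these two integers lies above the graph of $p$, which gives the inequality.

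The only delicate step is this convexity comparison, and it is short. Everything else is a direct citation of Lemma~\ref{lem: z-cont} followed by Markov's inequality.
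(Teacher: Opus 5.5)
Your proof is correct and takes the same approach as the paper: apply Lemma~\ref{lem: z-cont} pointwise on the event $\{X\ge a\}$, then use Markov's inequality for the nonnegative variable $S_k$. Your extra convexity argument, showing $\binom{a}{k}_{\mathrm{lin}}\ge a(a-1)\cdots(a-k+1)/k!$ for $a\ge k$, is a valid refinement. The paper leaves this point implicit when it writes $\binom{a}{k}$ for non-integral $a$; your argument shows the bound holds under either reading of $\binom{a}{k}$.
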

Equivalently, our \emph{basic template} leveraged throughout here is that if $I$ is a uniformly random $k$-subset of $[n]$, sampled
independently of $X_1,\ldots,X_n$, then
\begin{equation}\label{eq:sym-prod}
\Pr[X\geq a]
\leq
\frac{\E_I\E\left[\prod_{i\in I}X_i\right]}
{\binom{a}{k}/\binom{n}{k}}.
\end{equation}

We will also use the following exponential version of the same reduction.
\begin{lemma}\label{lem:sym-exp}
Let $X_1,\ldots,X_n\in\{0,1\}$, with
$
X:=\sum_{i=1}^n X_i
$. Let $1\leq k\leq \alpha n$ and $\lambda>0$. If $I$ is a
uniformly random $k$-subset of $[n]$, then
$$
% \label{eq:exp-sym}
\Pr[X\geq \alpha n]
\leq
e^{-\lambda\alpha k}
\E_I\E\left[
\exp\left(\lambda\sum_{i\in I}X_i\right)
\right].
$$
\end{lemma}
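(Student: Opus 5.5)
Since $e^{\lambda\sum_{i\in I}X_i}\ge 0$, I will apply Markov's inequality to the random variable $Z:=\E_I\left[\exp\left(\lambda\sum_{i\in I}X_i\right)\right]$. Here the expectation is over $I$ only, with $X_1,\ldots,X_n$ held fixed. Then $\E Z$ is exactly the right-hand side without the factor $e^{-\lambda\alpha k}$. It therefore suffices to show the pointwise statement: whenever $X\ge \alpha n$ (and $X_i\in\{0,1\}$), we have $Z\ge e^{\lambda\alpha k}$. Markov then gives $\Pr[X\ge\alpha n]\le \Pr[Z\ge e^{\lambda\alpha k}]\le e^{-\lambda\alpha k}\E Z$.

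For the pointwise bound, fix a realization with $X=s\ge \alpha n$. Let $H:=\sum_{i\in I}X_i$, the number of ones hit by the random $k$-subset $I$. Then $H$ is hypergeometric, and $\E_I H = ks/n\ge \alpha k$. Since $x\mapsto e^{\lambda x}$ is convex, Jensen's inequality gives
\[
Z=\E_I e^{\lambda H}\ \ge\ e^{\lambda \E_I H}\ \ge\ e^{\lambda\alpha k},
\]
using $\lambda>0$. This finishes the pointwise claim.

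Some remarks on the details. The hypothesis $1\le k\le \alpha n$ is not really needed for this argument, beyond $I$ being well defined for $k\le n$. The Boolean assumption is used only to express $\E_I H=ks/n$; the same argument works for $[0,1]$-valued variables. The independence of $I$ from the $X_i$ ensures that $\E_I\E=\E\,\E_I$ by Fubini, since everything is nonnegative.

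The only step I expect to need care is the justification that conditioning on the $X$'s and averaging over $I$ commutes with the outer expectation. Everything else is Jensen plus Markov. An alternative route would combine Lemma~\ref{lem: tail reduction}-style counting with $e^{\lambda H}$, but Jensen on the hypergeometric mean is the cleanest approach.
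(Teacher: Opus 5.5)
Your proof is correct and is essentially the paper's argument: both apply Markov's inequality to $\E_I\bigl[\exp\bigl(\lambda\sum_{i\in I}X_i\bigr)\bigr]=S_k(e^{\lambda X_1},\ldots,e^{\lambda X_n})/\binom{n}{k}$ and lower-bound it on $\{X\ge\alpha n\}$ by applying Jensen's inequality to a hypergeometric count. The only difference is cosmetic: the paper first reduces to exactly $\lceil\alpha n\rceil$ ones, whereas you use the realized count directly via $\E_I H=ks/n$, which, as you note, also extends verbatim to $[0,1]$-valued variables.
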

\begin{proof}
We have
$$
\E S_k(e^{\lambda X_1},\ldots,e^{\lambda X_n})
=
\binom{n}{k}
\E_I\E\left[
\exp\left(\lambda\sum_{i\in I}X_i\right)
\right].
$$
On the event $X\geq \alpha n$, at least $\lceil\alpha n\rceil$ of the variables $X_i$ are equal to $1$.
Hence, if $Z\sim\operatorname{Hypergeometric}(n,\lceil\alpha n\rceil,k)$, then
$$
S_k(e^{\lambda X_1},\ldots,e^{\lambda X_n})
\geq
\binom{n}{k}\E e^{\lambda Z}.
$$
By Markov's inequality,
$$
\Pr[X\geq \alpha n]
\leq
\frac{
\E_I\E\left[
\exp\left(\lambda\sum_{i\in I}X_i\right)
\right]
}{
\E e^{\lambda Z}
}.
$$
Since $\E Z\geq \alpha k$, Jensen's inequality gives
$$
\E e^{\lambda Z}\geq e^{\lambda\alpha k},
$$
which concludes the proof.
\end{proof}

\subsection{Read-$\Delta$ families}
\begin{definition}
Let $X_1,\ldots,X_m$ be random variables. For each $i\in[n]$, let $P_i\subseteq[m]$, and let $f_i:\{0,1\}^{P_i}\to[0,1]$ be a function of $X_{P_i}$. We say that $Y_i=f_i(X_{P_i})$, $i\in[n]$, form a read-$\Delta$ family if
$$
|\{i:j\in P_i\}|\le \Delta
$$
for each $j\in[m]$. In other words, each random variable $X_j$ influences at most $\Delta$ of $Y_i$'s.
\end{definition}

\begin{lemma}[{\cite{duppala_et_al:LIPIcs.ITCS.2025.47}}]\label{lem: read-Delta}
Let \(F_1,\ldots,F_n\) be a nonnegative read-\(\Delta\) family over independent Boolean random variables \(X_1,\ldots,X_m\). Then
$$
\mathbb{E}\left[\prod_{j=1}^{n}F_j\right]
\le
\left(\prod_{j=1}^{n}\mathbb{E}\left[F_j^\Delta\right]\right)^{1/\Delta}.
$$
\end{lemma}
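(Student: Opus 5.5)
We prove the inequality by induction on the number $m$ of underlying independent variables, revealing one $X_j$ at a time and applying H\"older's inequality only to the (at most $\Delta$) factors that actually depend on the revealed variable. It is convenient to prove a slightly stronger, weighted statement that survives conditioning. For each $j$, each $F_i$ depends on the coordinates $X_{P_i}$. Define the ``remaining'' functions by integrating out variables one at a time, and carry along exponents: we claim that for nonnegative functions $F_1,\dots,F_n$ of independent $X_1,\dots,X_m$ with each $X_j$ read by at most $\Delta$ of them,
\[
\E\Big[\prod_{i=1}^n F_i\Big]\le \prod_{i=1}^n \big(\E[F_i^{\Delta}]\big)^{1/\Delta}.
\]
The Boolean assumption plays no essential role beyond measurability; we only use independence.

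Step 1 (base case $m=0$): all $F_i$ are constants, and equality holds. Step 2 (induction): condition on $X_1,\dots,X_{m-1}$ and integrate out $X_m$. Let $A=\{i: m\in P_i\}$, so $|A|\le\Delta$. Then
\[
\E\Big[\prod_i F_i\Big]=\E_{X_{<m}}\Big[\prod_{i\notin A}F_i\cdot \E_{X_m}\Big[\prod_{i\in A}F_i\Big]\Big],
\]
and by the generalized H\"older inequality with $|A|$ factors, each with exponent $\Delta$ (valid since $\sum_{i\in A}1/\Delta\le 1$ on a probability space),
\[
\E_{X_m}\Big[\prod_{i\in A}F_i\Big]\le\prod_{i\in A}G_i,\qquad G_i:=\big(\E_{X_m}[F_i^{\Delta}]\big)^{1/\Delta}.
\]
Each $G_i$ is a nonnegative function of $X_{P_i\setminus\{m\}}$, so the family $\{F_i\}_{i\notin A}\cup\{G_i\}_{i\in A}$ is read-$\Delta$ over $X_1,\dots,X_{m-1}$.

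Step 3: apply the induction hypothesis to this new family. This gives the bound $\prod_{i\notin A}(\E F_i^\Delta)^{1/\Delta}\prod_{i\in A}(\E G_i^\Delta)^{1/\Delta}$. Since $\E_{X_{<m}}[G_i^\Delta]=\E_{X_{<m}}\E_{X_m}[F_i^\Delta]=\E[F_i^\Delta]$ by independence (Fubini), the bound equals $\prod_i(\E F_i^\Delta)^{1/\Delta}$, closing the induction.

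The main subtlety, and the step I would check most carefully, is that replacing $F_i$ by $G_i$ preserves the read-$\Delta$ structure and the quantity $\E[\cdot^\Delta]$. This works precisely because $G_i^\Delta$ is the partial expectation of $F_i^\Delta$, so no loss accumulates across steps. The other point to verify is the H\"older step when $|A|<\Delta$: it follows by padding with the constant function $1$ so that the exponents sum to $1$.
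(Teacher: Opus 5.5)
Your proof is correct and complete. The paper does not prove this lemma: it imports it from Duppala et al.\ and uses it as a black box, so there is no in-paper argument to compare against. What you give is the standard proof, namely induction on the number of underlying coordinates, with H\"older applied only to the at most $\Delta$ factors that read the coordinate being integrated out. (This is the equal-exponent case of Finner's generalized H\"older inequality.)

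The two points you flag are handled correctly:
\begin{enumerate}
\item $G_i=(\E_{X_m}[F_i^\Delta])^{1/\Delta}$ depends only on $X_{P_i\setminus\{m\}}$, so read degrees can only drop.
\item $\E[G_i^\Delta]=\E[F_i^\Delta]$ by Fubini, so nothing is lost across steps. Padding with the constant $1$ justifies H\"older when $|A|<\Delta$.
\end{enumerate}

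Two small remarks. First, the ``slightly stronger, weighted statement'' you announce is never needed. The lemma as stated is already quantified over all nonnegative read-$\Delta$ families, so it is closed under the replacement $F_i\mapsto G_i$, and you in fact just prove the lemma itself. Second, the Boolean hypothesis does buy one thing you dismiss: a finite probability space. With it, all conditional expectations and moments are finite, and no integrability or $0\cdot\infty$ conventions are required. For general independent $X_j$ the same argument goes through once those conventions and the measurability of the partial integrals $G_i$ (via Tonelli) are addressed.
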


\subsection{Pointwise block sources}
\begin{definition}[\cite{Chung2008WhySH}]\label{def:block-source}
Let $U$ be a finite universe. A random sequence
$R_1,\ldots,R_m\in U$ is a $p_{\rm src}$-pointwise block source if, for every
$i\in[m]$, every realization $r_1,\ldots,r_{i-1}$ in the support of $R_{<i}$,
and every $x\in U$,
$$
\Pr[R_i=x\mid R_1=r_1,\ldots,R_{i-1}=r_{i-1}]\leq p_{\rm src}.
$$
Equivalently, every block has conditional min-entropy at least
$\log(1/p_{\rm src})$.
\end{definition}

In the linear hashing application, $U=\F_2^u\setminus\{0\}$. If
$V\leq \F_2^u$ is a subspace of dimension $d$, then
\[
\Pr[R_i\in V\mid R_{<i}]\leq |V|p_{\rm src}\leq 2^d p_{\rm src}.
\]
In particular, after $j$ previously selected vectors have been revealed, the
probability that the next vector lies in their span is at most $2^j p_{\rm src}$.

\subsection{Markov chains}
Let \(M\) be an ergodic Markov chain on a finite state space \([n]\), with
transition matrix \(M(x,y)\) and stationary distribution \(\pi\). We use the
function convention: for \(f:[n]\to\mathbb R\),
\[
(Mf)(x):=\sum_{y\in[n]}M(x,y)f(y).
\]
The stationary distribution satisfies
\(
\sum_{x\in[n]}\pi(x)M(x,y)=\pi(y)\),
\(y\in[n].
\)

\begin{definition}
For \(\varepsilon>0\), the \(\varepsilon\)-mixing time of \(M\) is
\[
T(\varepsilon)
:=
\min\left\{
t\ge 0:
\max_{x\in[n]}
\|M^t(x,\cdot)-\pi\|_{\operatorname{TV}}
\le \varepsilon
\right\},
\]
where the total variation distance for two probability distributions \(\nu,\eta\) on \([n]\), is defined as
\[
\|\nu-\eta\|_{\operatorname{TV}}
:=
\max_{A\subseteq[n]}|\nu(A)-\eta(A)|
=
\frac12\sum_{x\in[n]}|\nu(x)-\eta(x)|.
\]
\end{definition}
We write \(L_2(\pi)\) for the Hilbert space of functions
\(f:[n]\to\mathbb R\) with inner product and norm, respectively
\[
\langle f,g\rangle_\pi
:=
\sum_{x\in[n]}\pi(x)f(x)g(x),
\qquad
\|f\|_{2,\pi}
:=
\langle f,f\rangle_\pi^{1/2}.
\]
For a probability distribution \(\varphi\) on \([n]\), define
\(
\|\varphi\|_{\pi^{-1}}
:=
\left(\sum_{x\in[n]}\frac{\varphi(x)^2}{\pi(x)}\right)^{1/2}.
\)
Equivalently, if \(h=d\varphi/d\pi\), then
\(
\|\varphi\|_{\pi^{-1}}=\|h\|_{2,\pi}
\).
We also write
\(
\mathbb E_\pi f:=\sum_{x\in[n]}\pi(x)f(x)
\).

The constant function equal to one is denoted by \(\mathbf 1\). The subspace
orthogonal to \(\mathbf 1\) is precisely the set of mean-zero functions:
\[
\mathbf 1^\perp
=
\{f\in L_2(\pi):\mathbb E_\pi f=0\}.
\]
Since \(\pi\) is stationary, \(M\) preserves this subspace:
\(
\mathbb E_\pi Mf=\mathbb E_\pi f.
\)

\begin{definition}
Following \cite{Chung2012ChernoffHoeffdingBF}, define the spectral norm, or spectral expansion, of
\(M\) by
\[
\lambda(M)
:=
\sup_{\substack{f\in L_2(\pi),\ f\neq 0\\ \mathbb E_\pi f=0}}
\frac{\|Mf\|_{2,\pi}}{\|f\|_{2,\pi}}.
\]
Equivalently, \(\lambda(M)\) is the operator norm of \(M\) restricted to the
mean-zero subspace \(\mathbf 1^\perp\). Thus, for every mean-zero \(f\),
\[
\|Mf\|_{2,\pi}\le \lambda(M)\|f\|_{2,\pi}.
\]
\end{definition}
Iterating the last inequality gives
\[
\|M^g f\|_{2,\pi}\le \lambda(M)^g\|f\|_{2,\pi},
\qquad g\ge 0,
\]
for every mean-zero \(f\).

The \textit{time reversal} of \(M\) with respect to \(\pi\) is the transition matrix
\(\widetilde M\) defined by
\[
\widetilde M(x,y)
:=
\frac{\pi(y)M(y,x)}{\pi(x)}.
\]
Then \(\widetilde M\) is also a Markov chain with stationary distribution
\(\pi\). Moreover, \(\widetilde M\) is the \(L_2(\pi)\)-adjoint of \(M\):
\[
\langle f,Mg\rangle_\pi
=
\langle \widetilde M f,g\rangle_\pi,
\qquad f,g\in L_2(\pi).
\]
In particular, reversibility is exactly the condition
\(
M=\widetilde M,
\)
or equivalently,
\[
\pi(x)M(x,y)=\pi(y)M(y,x),
\qquad x,y\in[n].
\]

In other words, when \(M\)  is reversible, it is self-adjoint on \(L_2(\pi)\), and
\(\lambda(M)\) is the largest absolute value of a nontrivial eigenvalue of \(M\). Otherwise,
\(\lambda(M)^2\) is the largest nontrivial eigenvalue of \(\widetilde M M\) on the mean-zero subspace.

\begin{lemma}[\cite{Chung2012ChernoffHoeffdingBF}]\label{lem: M^T}
If \(T=T(\varepsilon)\), then
\(
\lambda(M^T)\le \sqrt{2\varepsilon}.
\)
In particular, for \(\varepsilon\le 1/8\),
\[
\lambda(M^T)\le \frac12.
\]
\end{lemma}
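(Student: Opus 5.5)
The plan is to bound $\|M^T f\|_{2,\pi}$ directly for a mean-zero $f$ by an explicit kernel estimate. Write $P:=M^T$ and $D(x,y):=P(x,y)-\pi(y)$. The hypothesis $T=T(\varepsilon)$ gives the row bound
\[
\sum_y|D(x,y)|=2\|P(x,\cdot)-\pi\|_{\operatorname{TV}}\le 2\varepsilon \quad\text{for every } x.
\]
Splitting $D=D^+-D^-$ into positive and negative parts, and using $\sum_y D(x,y)=0$, each row of $D^{\pm}$ has mass at most $\varepsilon$. Stationarity of $\pi$ for $P$ gives the column identity $\sum_x\pi(x)D(x,y)=0$. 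Since $0\le D^+(x,y)\le P(x,y)$ and $0\le D^-(x,y)\le\pi(y)$, it follows that
\[
\sum_x\pi(x)D^{\pm}(x,y)\le\pi(y).
\]
Because $\mathbb E_\pi f=0$, we have $Pf(x)=\sum_y D(x,y)f(y)$, so the whole problem is a Schur-test estimate for the signed kernel $D$ on $L_2(\pi)$.

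\textbf{Key step: a sign decomposition.} Split $f=f^+-f^-$ and write $Pf(x)=A(x)-B(x)$, where
\[
A:=D^+f^+ + D^-f^-,\qquad B:=D^+f^- + D^-f^+ .
\]
Both $A$ and $B$ are nonnegative, so $|Pf(x)|^2\le\max(A,B)^2\le A(x)^2+B(x)^2$; this avoids the factor $2$ from $(a-b)^2\le 2a^2+2b^2$. Next apply Cauchy--Schwarz to $A(x)$ and $B(x)$ separately, weighting by the kernel entries. Summing against $\pi(x)$ and using the column bounds $\sum_x\pi(x)D^\pm(x,y)\le\pi(y)$, the four pieces $f^\pm$ against $D^\pm$ recombine into $\|f^+\|_{2,\pi}^2+\|f^-\|_{2,\pi}^2=\|f\|_{2,\pi}^2$. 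The target is
\[
\|Pf\|_{2,\pi}^2\le 2\varepsilon\|f\|_{2,\pi}^2 ,
\]
which gives $\lambda(M^T)\le\sqrt{2\varepsilon}$. The special case follows immediately: for $\varepsilon\le 1/8$ we get $\sqrt{2\varepsilon}\le 1/2$.

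\textbf{Main obstacle: the constant.} A naive Schur test gives row mass $2\varepsilon$ and column mass $2\pi(y)$ for $|D|$, hence only $2\sqrt\varepsilon$. The intended gain is a factor $2$ from the sign decomposition, but the row mass entering the Cauchy--Schwarz bound for $A(x)$ must be counted carefully. $A$ mixes $D^+$ and $D^-$, so the crude row mass is $2\varepsilon$, and the factor must instead be recovered on the column side.

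\begin{itemize}[label={}]
\item \emph{First attempt.} Bound $A(x)^2\le(\sum_y D^+(x,y)+\sum_y D^-(x,y))\cdot(\cdots)$ with total row mass $2\varepsilon$. Then check whether $A^2+B^2$ summed against $\pi$ sees each column of $D^+$ and of $D^-$ with mass at most $\pi(y)$ only once. The point is that $f^+(y)$ and $f^-(y)$ are never both nonzero, so each $y$ contributes to only one of the two terms per kernel.
\item \emph{Fallback.} If that accounting fails, apply the same argument to $\|Pf\|^2_{2,\pi}=\langle f,\widetilde M^T M^T f\rangle_\pi$. Here the self-adjoint kernel $\widetilde PP-\Pi$, with $\Pi$ the rank-one projection onto constants, has symmetric row and column bounds, so the Schur test gives the sharp constant without losing a factor in the interpolation.
\end{itemize}
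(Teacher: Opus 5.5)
The paper gives no proof of this lemma; it is quoted from Chung et al. Judged on its own, your Fallback is a valid proof, but your Key step does not deliver the constant it claims.

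\textbf{The Key step only gives $4\varepsilon$.} With $S_\pm:=\{f^\pm>0\}$, Cauchy--Schwarz gives $A(x)^2\le a(x)\alpha(x)$ and $B(x)^2\le b(x)\beta(x)$. Here $a(x)=D^+(x,S_+)+D^-(x,S_-)$ and $b(x)=D^+(x,S_-)+D^-(x,S_+)$, while $\alpha=D^+(f^+)^2+D^-(f^-)^2$ and $\beta=D^+(f^-)^2+D^-(f^+)^2$.

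The column bounds give $\sum_x\pi(x)\alpha(x)\le\|f\|_{2,\pi}^2$ \emph{and} $\sum_x\pi(x)\beta(x)\le\|f\|_{2,\pi}^2$. So the four pieces add up to $2\|f\|_{2,\pi}^2$, not $\|f\|_{2,\pi}^2$. Every column of $D^+$ and every column of $D^-$ is used once: a column $y\in S_+$ enters $A$ through $D^+$ and $B$ through $D^-$. Your ``each column seen once'' observation is therefore already priced into this count.

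Since $a(x)$ and $b(x)$ can each equal $2\varepsilon$, what you actually get is $\|M^Tf\|_{2,\pi}^2\le4\varepsilon\|f\|_{2,\pi}^2$. That is the naive Schur bound $2\sqrt\varepsilon$, which is only $1/\sqrt2$ at $\varepsilon=1/8$.

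Bounding row factors and column weights separately cannot recover the loss, because $\sqrt{2\varepsilon}$ is sharp. Take a product state space split into halves $X_\pm$ of $\pi$-mass $1/2$ and disjoint sets $B_\pm$ of $\pi$-mass $\varepsilon$. Let $P(x,y)=\pi(y)\bigl(1\pm(\mathbf 1_{B_+}(y)-\mathbf 1_{B_-}(y))\bigr)$ for $x\in X_\pm$. Every row is at TV distance exactly $\varepsilon$, and $f=\mathbf 1_{B_+}-\mathbf 1_{B_-}$ gives $\|Pf\|_{2,\pi}^2=2\varepsilon\|f\|_{2,\pi}^2$.

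In this example $a=2\varepsilon$ on $X_+$ and $b=2\varepsilon$ on $X_-$. Your chain of inequalities is tight only because $\sum_x\pi(x)D^+(x,y)=\pi(y)/2$ on $B_\pm$. Bounding this column weight by $\pi(y)$, as your accounting does, throws away exactly that factor $1/2$. Rescuing the Key step would need an extra argument that couples large row factors $a,b$ to small column weights, and none is given.

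\textbf{The Fallback is correct and complete; make it the main argument.} It is the standard symmetrization argument for this lemma. Let $P:=M^T$, let $\widetilde P$ be its time reversal, and set $K:=\widetilde PP-\Pi$.

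First, $\widetilde P$ is stochastic by stationarity of $\pi$. So each row $(\widetilde PP)(x,\cdot)=\sum_y\widetilde P(x,y)P(y,\cdot)$ is a convex combination of rows of $P$, which gives $\sum_z|K(x,z)|\le2\varepsilon$.

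Second, $K$ is self-adjoint in $L_2(\pi)$, i.e.\ $\pi(x)K(x,z)=\pi(z)K(z,x)$. Hence $\sum_x\pi(x)|K(x,z)|\le2\varepsilon\pi(z)$, and the Schur test gives operator norm at most $2\varepsilon$ on $L_2(\pi)$.

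For mean-zero $f$ we have $\Pi f=0$, so $\|Pf\|_{2,\pi}^2=\langle f,\widetilde PPf\rangle_\pi=\langle f,Kf\rangle_\pi\le2\varepsilon\|f\|_{2,\pi}^2$. This gives $\lambda(M^T)\le\sqrt{2\varepsilon}$, and $\sqrt{2\cdot 1/8}=1/2$.

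Squaring first makes the row and weighted-column bounds coincide. That is precisely the symmetry the direct Schur test on $D$, and your sign decomposition, cannot exploit.
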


\subsection{Negative association}
We recall the notion of negative association and one standard fact that will be used in two places: in the read-\(\Delta\) analysis, and to average product bounds over randomly sampled time indices in the Markov-chain application.
\begin{definition}
A collection of random variables $X_1, \ldots, X_m$ is said to be negatively associated if for any $I, J \subset[m], I \cap J=\emptyset$ and any pair of non-decreasing (or non-increasing) functions $f:\bb R^I \rightarrow \mathbb{R}, g:\bb R^J \rightarrow \mathbb{R}$, $$ \mathbb{E}\left[f\left((X_i)_{i\in I}\right) g\left((X_j)_{j\in J}\right)\right] \leq \mathbb{E}\left[f\left((X_i)_{i\in I}\right)\right] \mathbb{E}\left[g\left((X_j)_{j\in J}\right)\right] . $$ 
\end{definition}

\begin{lemma}[\cite{joagdev1983negative}]\label{lemma: NA}
Let \(Y_0,\ldots,Y_k\) be independent integer-valued random variables with log-concave probability mass functions. Then the joint conditional distribution of \((Y_0,\ldots,Y_k)\) given \(\sum_jY_j=m\) is negatively associated.
\end{lemma}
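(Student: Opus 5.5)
The statement to prove is the Joag-Dev--Proschan fact of Lemma~\ref{lemma: NA}: if $Y_0,\ldots,Y_k$ are independent integer-valued random variables with log-concave mass functions, then their conditional law given $\sum_j Y_j=m$ is negatively associated. The plan is to reduce to a two-variable case via a conditional decomposition, and then build up the general case by induction on the number of variables.

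\textbf{Step 1: the two-variable case.} Let $U,V$ be independent and log-concave, and condition on $U+V=s$. The conditional mass of $U=u$ is proportional to $p(u)q(s-u)$. Log-concavity of $q$ gives, for $u<u'$ and $s<s'$,
\[
p(u)q(s-u)\,p(u')q(s'-u') \ge p(u')q(s-u')\,p(u)q(s'-u).
\]
Hence the conditional law of $U$ given $U+V=s$ is stochastically increasing in $s$ (monotone likelihood ratio). By the symmetric argument, the same holds for $V$.

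\textbf{Step 2: induction and closure properties.} Fix disjoint index sets $A,B$ with $A\cup B=\{0,\ldots,k\}$, which we may assume since $f,g$ ignore any other coordinates. Set $S_A=\sum_{A}Y_i$ and $S_B=\sum_B Y_i$.
\begin{itemize}
\item $S_A$ and $S_B$ are independent, and each is log-concave, because convolution preserves discrete log-concavity.
\item Condition on $S_A=a$, so that $S_B=m-a$. The two blocks are then conditionally independent, each distributed as its own conditional law given its own sum.
\end{itemize}
For nondecreasing $f,g$, define $F(a)=\E[f(Y_A)\mid S_A=a]$ and $G(b)=\E[g(Y_B)\mid S_B=b]$. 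Then
\[
\E[fg\mid \textstyle\sum_j Y_j=m]=\E[F(S_A)\,G(m-S_A)\mid S_A+S_B=m].
\]
We need $F$ and $G$ to be nondecreasing, i.e.\ that conditioning a log-concave vector on a larger sum stochastically increases it coordinatewise-monotonically. This is the \emph{Efron monotonicity} statement, and we prove it by induction on the block size using Step 1. Split off one coordinate $Y_i$ against the rest; by Step 1, the split point is stochastically increasing in the total. Given the split, the rest is handled by the induction hypothesis, and a coupling composes the two monotone steps.

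\textbf{Step 3: concluding.} Given Step 2, $F(S_A)$ is nondecreasing and $G(m-S_A)$ is nonincreasing in the single variable $S_A$. Chebyshev's sum inequality for one real variable then yields
\[
\E[fg\mid\cdot]\le \E[F\mid\cdot]\,\E[G\mid\cdot],
\]
which is exactly $\E f\,\E g$ under the conditional law. The case of two nonincreasing functions is identical.

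\textbf{Main obstacle.} The hard step is Step 2: Efron's theorem that $\E[\phi(Y_A)\mid S_A=a]$ is nondecreasing in $a$ for monotone $\phi$. The difficulty is that the induction must produce a monotone coupling of the entire vector, not only of its marginals. I would construct this coupling sequentially: use Step 1's likelihood-ratio ordering to couple the first coordinate monotonically in $a$, and then couple the remaining coordinates conditionally on their own sum using the inductive hypothesis. Log-concavity of partial sums, again from convolution closure, keeps Step 1 applicable at each stage.
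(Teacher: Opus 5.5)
The paper does not prove this lemma; it only cites Joag-Dev and Proschan. Your argument is essentially their original proof, and it is correct. It has the same three ingredients: Efron's monotonicity theorem for sums of log-concave variables, conditional independence of the two blocks given their block sums, and Chebyshev's inequality in the single variable $S_A$.

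\textbf{The coupling in the induction.} This is the one step you should make explicit. An arbitrary coupling that makes the first coordinate increase is not enough. If $Y_1$ increases by more than $a'-a$, the remaining sum $a-Y_1$ decreases, and the inductive hypothesis then pushes the rest of the vector down.

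What you need is a coupling with
\[
Y_1^{(a)}\le Y_1^{(a')}\le Y_1^{(a)}+(a'-a).
\]
Step 1 already supplies this. You proved two orderings: $U\mid a$ is stochastically increasing in $a$, and so is $V=a-U\mid a$. The quantile coupling $Y_1^{(a)}=F_a^{-1}(W)$, $Y_1^{(a')}=F_{a'}^{-1}(W)$, with a single uniform $W$, realizes both orderings at once.

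\textbf{A cleaner formulation.} State the inductive hypothesis in expectation form. Set $\Psi(y_1,s)=\E[\phi(y_1,Y')\mid S'=s]$. It is nondecreasing in $y_1$ because $\phi$ is, and nondecreasing in $s$ by induction. You then only need the bivariate coupling of $(Y_1,a-Y_1)$ above, not a coupling of the whole vector.

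\textbf{Definition of log-concavity.} The four-point inequality in Step 1 requires log-concavity in the sense that includes contiguous support, with no internal zeros. This is the standard convention, and it is the one the paper uses when it checks the geometric law.
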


\section{Read-$\Delta$ Families under Limited Independence}\label{sec:read}
This section proves the read-\(\Delta\) concentration bounds under limited independence. We begin with the regular case. Theorem~\ref{thm:d1-d2-simple} gives the direct baseline obtained by applying the read-\(d_L\) exponential-moment bound to a random \(k\)-subset, with \(k\) limited by the amount of independence. Theorem~\ref{thm:d1-d2-improved} then improves this when the randomly sampled subgraph typically has smaller maximum left-degree than the worst-case parameter \(d_L\), and Corollary~\ref{cor:regular improved dL} gives a concrete sparse-regime consequence. We then extend the argument to general dependency graphs in Theorem~\ref{thm:general-improvement}, where the two bad events are that the sampled outputs expose too many underlying variables or have too large an induced read degree. Corollaries~\ref{cor:general improved dR}, \ref{cor:general improved dL}, and~\ref{cor:general improved both} give explicit parameter choices illustrating the resulting improvements over the worst-case \(\Delta_L,\Delta_R\) bound.

\subsection{The regular case}\label{sec:regular case}

\begin{theorem}\label{thm:d1-d2-simple}
Let $X_1,\ldots,X_m\in\{0,1\}$ be $r$-wise independent random variables, and let $Y_1,\ldots,Y_n\in\{0,1\}$ be a read-$d_L$ family with $\E[Y_i]=p$ for all $i\in[n]$. Suppose that the dependency graph between the $X$'s and the $Y$'s is a $d_L$-$d_R$ regular bipartite graph, so that $d_Lm=d_Rn$. Assume $r\ge d_R$. Then, for every $0<p<\alpha<1$,
$$
\Pr\left[\sum_{i=1}^n Y_i\ge \alpha n\right]\leq \exp\left(-\frac{k}{d_L}\D(\alpha\|p)\right),\qquad k:=\min\left\{
\left\lfloor \frac r{d_R}\right\rfloor,
\lfloor\alpha n\rfloor
\right\}.
$$
\end{theorem}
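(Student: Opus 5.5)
We will combine the exponential form of the product-moment reduction, Lemma~\ref{lem:sym-exp}, with the read-$\Delta$ product bound of Lemma~\ref{lem: read-Delta}, applied to the sampled subfamily. Fix $k$ as in the statement; note $1\le k\le \alpha n$, so Lemma~\ref{lem:sym-exp} applies. For any $\lambda>0$ it gives
\[
\Pr\left[\sum_i Y_i\ge \alpha n\right]\le e^{-\lambda\alpha k}\,\E_I\E\left[\prod_{i\in I}e^{\lambda Y_i}\right],
\]
where $I$ is a uniformly random $k$-subset of $[n]$. Thus it suffices to bound, for every fixed $k$-subset $I$, the product moment $\E\prod_{i\in I}e^{\lambda Y_i}$ by $(1-p+pe^{\lambda d_L})^{k/d_L}$. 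We do not need the randomness of $I$ here: the worst case over $I$ is already good enough.

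Fix $I$ with $|I|=k$. The variables $(Y_i)_{i\in I}$ depend only on $X_{P}$, where $P=\bigcup_{i\in I}P_i$. By $d_R$-regularity, $|P|\le k d_R\le \lfloor r/d_R\rfloor d_R\le r$. Since the $X_j$ are $r$-wise independent, the vector $X_P$ has exactly the distribution of independent Booleans with the same marginals. So $\E\prod_{i\in I}e^{\lambda Y_i}$ equals the same expectation computed under fully independent $X_P$. The family $F_i=e^{\lambda Y_i}$, $i\in I$, is nonnegative and still read-$d_L$, since each $X_j$ influences at most $d_L$ of all outputs and hence at most $d_L$ of the selected ones. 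Lemma~\ref{lem: read-Delta} therefore yields
\[
\E\prod_{i\in I}e^{\lambda Y_i}\le\prod_{i\in I}\left(\E e^{\lambda d_L Y_i}\right)^{1/d_L}=\left(1-p+pe^{\lambda d_L}\right)^{k/d_L},
\]
using $Y_i\in\{0,1\}$ and $\E Y_i=p$. This marginal computation is valid because each single $Y_i$ depends on only $d_R\le r$ variables, so its law is unchanged.

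Combining, and writing $\mu=\lambda d_L$, we get
\[
\Pr\left[\sum_i Y_i\ge\alpha n\right]\le\left(e^{-\mu\alpha}(1-p+pe^{\mu})\right)^{k/d_L}.
\]
Optimizing in the standard Chernoff way, we take $e^{\mu}=\frac{\alpha(1-p)}{p(1-\alpha)}>1$, which is valid since $\alpha>p$. This makes the bracket equal to $e^{-\D(\alpha\|p)}$, giving the claimed bound $\exp\!\left(-\frac{k}{d_L}\D(\alpha\|p)\right)$.

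The only delicate step is the transfer from $r$-wise independence to the setting of Lemma~\ref{lem: read-Delta}, which assumes independent underlying variables. We handle it by observing that the expectation of a function of $X_P$ depends only on the joint law of $X_P$, and this law is a product law because $|P|\le r$. The assumption $r\ge d_R$ ensures $k\ge1$. Everything else is routine.
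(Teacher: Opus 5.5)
Your proposal is correct and follows the paper's proof essentially step for step. Both arguments bound the product moment of every $k$-subset via Lemma~\ref{lem: read-Delta}, using $kd_R\le r$ to get independence of the exposed $X_j$'s; they then plug this into Lemma~\ref{lem:sym-exp} and optimize at $\lambda d_L=\ln\frac{\alpha(1-p)}{p(1-\alpha)}$. Your explicit remark that the law of $X_P$ is a product law is a bit of care the paper leaves implicit. One small correction: $k\ge 1$ also needs $\alpha n\ge 1$, not just $r\ge d_R$, although the bound is trivial when $k=0$.
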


\begin{proof}
Let $Y:=\sum_{i=1}^nY_i$. Following the product-moment method, let $I=\{i_1,\ldots,i_k\}$ be a uniformly random $k$-subset of $[n]$. If $k\le r/d_R$, then the right-hand side vertices in $I$ are adjacent to at most $kd_R\le r$ left-hand side vertices. Hence, the corresponding left variables are independent by the $r$-wise independence assumption. Since the induced family $Y_{i_1},\ldots,Y_{i_k}$ is read-$d_L$, by Lemma~\ref{lem: read-Delta} we have
$$
\E\left[\exp\left(\lambda\sum_{j=1}^kY_{i_j}\right)\right]
\leq \lb\prod_{j=1}^k\E[e^{\lambda d_L Y_{i_j}}]\rb^{1/d_L}
= \left(1-p+pe^{\lambda d_L}\right)^{k/d_L}.
$$
Therefore, by Lemma~\ref{lem:sym-exp} for $k\leq\lf \alpha n\rf$
$$
\Pr[Y\ge \alpha n]\leq \frac{(1-p+pe^{\lambda d_L})^{k/d_L}}{e^{\lambda\alpha k}}.
$$
Optimizing over $\lambda>0$ in the standard way, the minimum is attained at
$$
\lambda=\frac{1}{d_L}\ln\frac{\alpha(1-p)}{p(1-\alpha)},
$$
and the resulting bound is
$$
\Pr[Y\ge \alpha n]\leq \exp\left(-\frac{k}{d_L}\D(\alpha\|p)\right).
$$
Taking $k=\min\{\lf r/d_R\rf, \lf \alpha n\rf\}$ concludes the proof.
\end{proof}

\begin{theorem}\label{thm:d1-d2-improved}
Under the assumptions of Theorem~\ref{thm:d1-d2-simple}, fix $0<p<\alpha<1$. Let $k:=\min\left\{
\left\lfloor \frac r{d_R}\right\rfloor,
\lfloor\alpha n\rfloor
\right\}$, and assume that for some $d_0\in[d_L]$ the following inequality is satisfied
\begin{equation}\label{eq: exact log 1/eps}
(d_0+1)\ln\frac{m}{d_Rk}-\ln m+\ln((d_0+1)!)\geq k\left(\frac{1}{d_0}-\frac{1}{d_L}\right)\ln\frac{\alpha}{p}
% \footnote{When $r\ll m$ and $d_L\gg d_0$, the right-hand side can be simplified to $ \frac{r}{d_Rd_0}\ln\frac{\alpha}{p}$.}.
\end{equation}
Then
$$
\Pr\left[\sum_{i=1}^nY_i\ge \alpha n\right]\leq 2\exp\left(-\frac{k}{d_0}\D(\alpha\|p)\right).
$$
In particular, whenever $d_0<d_L$, this improves the leading exponent of Theorem~\ref{thm:d1-d2-simple} by a factor of $d_L/d_0$.
\end{theorem}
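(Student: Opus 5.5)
We plan to follow the proof of Theorem~\ref{thm:d1-d2-simple}, but split the expectation over the random $k$-subset $I$ according to a \emph{good} event. Let $G$ be the event that in the bipartite subgraph induced by the right vertices in $I$ and their neighbourhood, every left vertex $X_j$ has at most $d_0$ neighbours in $I$. On $G$ the sampled family $Y_{i_1},\ldots,Y_{i_k}$ is read-$d_0$. It still depends on at most $kd_R\le r$ underlying variables, so these are independent, and Lemma~\ref{lem: read-Delta} gives
$$
\E\Bigl[e^{\lambda\sum_{i\in I}Y_i}\Bigr]\le (1-p+pe^{\lambda d_0})^{k/d_0}.
$$
On $G^c$ we fall back on the read-$d_L$ bound $(1-p+pe^{\lambda d_L})^{k/d_L}$. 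Lemma~\ref{lem:sym-exp} then bounds $\Pr[Y\ge\alpha n]$ by $e^{-\lambda\alpha k}$ times the sum of these two terms, weighted by $\Pr[G]\le1$ and $\Pr[G^c]$ respectively.

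Next we choose $\lambda=\frac1{d_0}\ln\frac{\alpha(1-p)}{p(1-\alpha)}$, which is optimal for the good term. The good term then becomes exactly $\exp(-\frac{k}{d_0}\D(\alpha\|p))$. For the bad term we need
$$
\Pr[G^c]\,e^{-\lambda\alpha k}(1-p+pe^{\lambda d_L})^{k/d_L}\le \exp\Bigl(-\tfrac{k}{d_0}\D(\alpha\|p)\Bigr),
$$
which yields the factor $2$. To handle the mismatch we bound the bad term's ratio to the good term. Since $d_L\ge d_0$, the map $q\mapsto (1-p+pe^{q\lambda})^{1/q}$ gives, by Hölder or by comparing $\lambda d_L$ with $\lambda d_0$,
$$
(1-p+pe^{\lambda d_L})^{1/d_L}\le (1-p+pe^{\lambda d_0})^{1/d_0}\, e^{\lambda(\cdots)}.
$$
The cruder route we would try first is $1-p+pe^{\lambda d}\le e^{\lambda d}\cdot(\text{const})$. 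This shows the ratio of the bad to the good term is at most roughly $\exp\bigl(k(\frac1{d_0}-\frac1{d_L})\ln\frac{\alpha}{p}\bigr)$. The plan is to verify that, at the chosen $\lambda$, the per-coordinate ratio is at most $(\alpha/p)^{1/d_0-1/d_L}$, which is where the right-hand side of \eqref{eq: exact log 1/eps} comes from. It then suffices to show $\Pr[G^c]\le \exp\bigl(-k(\frac1{d_0}-\frac1{d_L})\ln\frac{\alpha}{p}\bigr)$.

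The remaining step, which we expect to be the main obstacle, is the union bound for $\Pr[G^c]$. For a fixed left vertex $j$ of degree $d_L$, the probability that at least $d_0+1$ of its neighbours fall in $I$ is at most $\binom{d_L}{d_0+1}\binom{k}{d_0+1}/\binom{n}{d_0+1}\lesssim \binom{d_L}{d_0+1}(k/n)^{d_0+1}$. Summing over the $m$ left vertices and using $d_L/n=d_R/m$ gives roughly
$$
m\,\frac{(d_Lk/n)^{d_0+1}}{(d_0+1)!}=m\,\frac{(d_Rk/m)^{d_0+1}}{(d_0+1)!}.
$$
Its negative logarithm is exactly the left-hand side of \eqref{eq: exact log 1/eps}, and the hypothesis closes the argument. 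The care needed is in the hypergeometric estimate $\binom{k}{s}/\binom{n}{s}\le (k/n)^s$ together with $\binom{d_L}{s}\le d_L^s/s!$, so that the constants match the stated condition exactly.
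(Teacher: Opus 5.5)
Your proposal is correct and follows the paper's proof essentially step for step. The paper also splits on $\{\Lambda_k\le d_0\}$, evaluates at $\lambda_0=\frac1{d_0}\ln z$ with $z=\frac{\alpha(1-p)}{p(1-\alpha)}$, bounds the bad-to-good ratio per coordinate by $(\alpha/p)^{1/d_0-1/d_L}$, and uses the union bound $m\binom{d_L}{d_0+1}(k/n)^{d_0+1}\le \frac{m}{(d_0+1)!}(d_Rk/m)^{d_0+1}$. The one step you left as a plan is settled by $1-p+pz^{d_L/d_0}\le z^{d_L/d_0-1}(1-p+pz)$, which holds because $z>1$. Equivalently, your ``const'' must be exactly $(1-p+pz)/z=p/\alpha$; taking it to be $1$ would only give the weaker ratio $(\alpha/p)^{1/d_0}$, which does not match the stated hypothesis.
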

\begin{remark}
    If $\lf r/d_R\rf\leq \alpha n$, then $k=\lf r/d_R\rf$, and instead it is sufficient that
    $$d_0\ln\frac{m}{r}-\ln r+\ln((d_0+1)!)\geq \left\lf \frac{r}{d_R}\right\rf\left(\frac{1}{d_0}-\frac{1}{d_L}\right)\ln\frac{\alpha}{p}$$
\end{remark}
\begin{proof}[Proof Sketch]
Let $I$ be a uniformly random $k$-subset of right vertices, and let $\Lambda_k$ denote the maximum left-degree
in the subgraph induced by $I$. The variables indexed by $I$ depend on at most $r$
underlying variables, and hence the relevant variables are fully independent
by $r$-wise independence. On the event $\{\Lambda_k\leq d_0\}$, the induced
family is read-$d_0$; on the complementary event, we only use the trivial
read-$d_L$ bound.

Thus the exponential product-moment reduction, Lemma~\ref{lem:sym-exp}, gives
\[
\Pr\left[\sum_i Y_i\geq \alpha n\right]
\leq
(1-\veps_\Lambda)e^{-kF_{d_0}(\lambda)}
+
\veps_\Lambda e^{-kF_{d_L}(\lambda)},
\]
where
$
F_d(\lambda):=\lambda\alpha-\frac{1}{d}\ln(1-p+pe^{\lambda d})
$
and $\veps_\Lambda\geq \Pr[\Lambda_k>d_0]$. We evaluate this at the
Chernoff-optimal value for read degree $d_0$,
\[
\lambda_0=\frac1{d_0}\ln\frac{\alpha(1-p)}{p(1-\alpha)}.
\]
The first term is then at most
\[
\exp\left(-\frac{k}{d_0}\D(\alpha\|p)\right).
\]
It remains to ensure that the second term is no larger. This follows from
\[
\ln\frac1{\veps_\Lambda}
\geq
k\left(\frac1{d_0}-\frac1{d_L}\right)\ln\frac{\alpha}{p}.
\]
Finally, a union bound over left vertices gives
\[
\ln\frac1{\veps_\Lambda}
\geq
(d_0+1)\ln\frac{m}{d_Rk}-\ln m+\ln((d_0+1)!),
\]
which is exactly the assumed condition. The details of these two estimates
are deferred to Appendix~\ref{app:read-d-thm-regular}.
\end{proof}

\begin{remark}
If the assumption in Theorem~\ref{thm:d1-d2-improved} is strengthened to
$$
d_0\ln\frac{m}{r}-\ln r+\ln((d_0+1)!)\geq k\left(\frac{1}{d_0}-\frac{1}{d_L}\right)\ln\frac{\alpha}{p}+\omega(1),
$$
then the second term in the proof is $e^{-\omega(1)}$ times the first term. Hence the result improves to
$$
\Pr\left[\sum_{i=1}^nY_i\ge \alpha n\right]\leq (1+o(1))\exp\left(-\frac{k}{d_0}\D(\alpha\|p)\right),
$$
ignoring the floors.
\end{remark}

% The case of interest if when $r=o(m)$, which implies 
% $r/d_R\lesssim pm/d_R <\alpha n$,
% thus we may assume that 
% $k=\lf r/d_R\rf$. 
% Furthermore, 
The condition in Theorem~\ref{thm:d1-d2-improved} becomes intuitive if we ignore the lower-order term $\ln r$ and the harmless factor $1/d_0-1/d_L\le 1/d_0$. Then the requirement is roughly
$$
d_0\ln\frac mr
\gtrsim
\frac{k}{d_0}\ln\frac{\alpha}{p},
$$
or equivalently
$$
d_0^2
\gtrsim
\frac{k\ln(\alpha/p)}{\ln(m/r)}.
$$

The corollary below uses a concrete choice of $d_0$ for which the condition of Theorem~\ref{thm:d1-d2-improved} is satisfied.
% Its proof is a direct verification of the hypotheses of Theorem~\ref{thm:d1-d2-improved} and is deferred to Appendix~\ref{app:read-cor1}.

\begin{corollary}\label{cor:regular improved dL}
Assume the setup of
Theorem~\ref{thm:d1-d2-simple}, and suppose 
$$d_R = O(1),\qquad\ln(\alpha/p)=\Theta(1),\qquad r\to\infty,\qquad \text{and}\quad r=o(m).$$
Then, regardless of $d_L$,
$$
\Pr\left[\sum_{i=1}^n Y_i\geq \alpha n\right]
\leq 2
\exp\left(-\D(\alpha\|p)\cdot
\Omega\left(\min\left\{k,\sqrt{k\ln\frac{m}{d_Rk}}\right\}\right)
\right), \qquad k:=\min\left\{
\left\lfloor \frac r{ d_R}\right\rfloor,
\lfloor\alpha n\rfloor
\right\}.
$$
\end{corollary}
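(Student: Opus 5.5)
We derive the corollary from Theorem~\ref{thm:d1-d2-simple} and Theorem~\ref{thm:d1-d2-improved} by choosing $d_0$ explicitly. There are two regimes. If $d_L\le d_0$ for the $d_0$ we would otherwise pick, then Theorem~\ref{thm:d1-d2-simple} already gives the exponent $\frac{k}{d_L}\D(\alpha\|p)\ge \frac{k}{d_0}\D(\alpha\|p)$. So it suffices to exhibit some $d_0\in[d_L]$ satisfying \eqref{eq: exact log 1/eps} with $k/d_0=\Omega(\min\{k,\sqrt{k\ln(m/(d_Rk))}\})$. Whenever the chosen $d_0$ exceeds $d_L$, we replace it by $d_L$ and fall back to the simple theorem, which only helps. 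Write $L:=\ln\frac{m}{d_Rk}$. Since $d_Rk\le r=o(m)$, we have $L\to\infty$.

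We choose
\[
d_0:=\left\lceil C\sqrt{k/L}\,\right\rceil
\]
for a large constant $C$ depending on the $\Theta(1)$ bound on $\ln(\alpha/p)$ and on $d_R$. Then $k/d_0=\Theta(\min\{k,\sqrt{kL}\})$: when $k\le L$ we get $d_0=O(1)$, so $k/d_0=\Theta(k)$, and otherwise $k/d_0=\Theta(\sqrt{kL})$. It remains to verify \eqref{eq: exact log 1/eps}. The right-hand side is at most $\frac{k}{d_0}\ln\frac{\alpha}{p}\le c\,k/d_0$. On the left-hand side we have $(d_0+1)L+\ln((d_0+1)!)-\ln m$, and the only negative term is $-\ln m$.

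The main obstacle is absorbing $-\ln m$. We use $\ln m = L+\ln(d_Rk)$, so the left-hand side equals $d_0L-\ln(d_Rk)+\ln((d_0+1)!)$. This matches the Remark, where $\ln r$ appears. The term $\ln((d_0+1)!)\ge d_0\ln((d_0+1)/e)$ helps here. In the regime $k\le L$ we take $d_0$ to be a large constant, and then $d_0L$ dominates both $ck$ and $\ln(d_Rk)\le \ln k\le L$. In the regime $k>L$, we have $d_0L\approx C\sqrt{kL}$, which beats $ck/d_0\approx (c/C)\sqrt{kL}$. We must still handle $\ln(d_Rk)$, and for this we want $\ln k=O(\sqrt{kL})$, which holds because $\sqrt{kL}\ge\sqrt{k}\gg\ln k$ for large $k$ (note $k\to\infty$ since $r\to\infty$ and $d_R=O(1)$, assuming $\alpha n$ is also large).

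If $L$ is small relative to $\ln k$, the factorial term supplies the slack: $d_0\ln d_0\gtrsim \sqrt{k/L}\ln\sqrt{k/L}$, and together with $d_0L$ this exceeds $\ln k$. We expect this bookkeeping, including the constant $C$ and the edge case $k=\lfloor\alpha n\rfloor<\lfloor r/d_R\rfloor$, to be routine. Finally, Theorem~\ref{thm:d1-d2-improved} yields $2\exp(-\frac{k}{d_0}\D(\alpha\|p))$, which is the claimed bound.
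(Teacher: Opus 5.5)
Your proposal is correct and follows essentially the paper's proof: the same choice $d_0\asymp\sqrt{k/L}$ (the paper takes $d_0=\lceil 2\sqrt{k\ln(\alpha/p)/L}\,\rceil$), the same fallback to Theorem~\ref{thm:d1-d2-simple} when $d_L\le d_0$, and the same rewriting $\ln m=L+\ln(d_Rk)$ with $\ln(d_Rk)$ absorbed into $d_0L$. The paper skips your case split and the factorial term by observing directly that $\tfrac12 d_0L\ge\sqrt{k\ln(\alpha/p)\,L}\ge\ln(d_Rk)$ for large parameters, since $L\to\infty$. Note also that $\ln(d_Rk)\ge\ln k$, not $\le$; this is harmless because $d_R=O(1)$.
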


\begin{proof}
Let
$A:=\frac{m}{d_Rk}$,
\(
d_0:=\left\lceil
2\sqrt{k\frac{\ln(\alpha/p)}{\ln A}}
\right\rceil .
\)
Since $d_R=O(1)$ and $\ln(\alpha/p)=\Theta(1)$,
\[
\frac{k}{d_0}\D(\alpha\|p)
=
\D(\alpha\|p)\cdot
\Omega\left(\min\left\{k,\sqrt{k\ln A}\right\}\right).
\]

First, suppose $d_L\leq d_0$. Then Theorem~\ref{thm:d1-d2-simple} gives
$$
\Pr\left[\sum_{i=1}^nY_i\geq \alpha n\right]
\leq
\exp\left(
-\frac{k}{d_L}\D(\alpha\|p)
\right)
\leq
\exp\left(
-\frac{k}{d_0}\D(\alpha\|p)
\right),
$$
and the desired bound follows.

Now, suppose $d_L>d_0$. We shall verify the condition of
Theorem~\ref{thm:d1-d2-improved} with this choice of $d_0$. By the definition of
$d_0$,
$$
d_0^2\ln A
\geq
4k\ln(\alpha/p),
$$
which implies
$$
\frac12 d_0\ln A
\geq
\frac{k\ln(\alpha/p)}{d_0}.
$$
Moreover, since $kd_R\leq r$ and $r=o(m)$, we have
\(
A=\frac{m}{d_Rk}\geq \frac{m}{r}\to\infty
\).
Together with $d_R=O(1)$ and $\ln(\alpha/p)=\Theta(1)$, this implies that, for sufficiently large parameters,
$$
k\ln\frac{\alpha}{p}\ln A\geq\ln^2(d_Rk)
$$
as either $k$ is bounded and the left-hand side $\to\infty$, or $k$ dominates $\ln^2(d_Rk)$. Thus,
$$
d_0\ln A-\ln(d_Rk)
\geq
\frac12 d_0\ln A.
$$
Combining the last two inequalities gives
$$
d_0\ln A-\ln(d_Rk)
\geq
\frac{k\ln(\alpha/p)}{d_0}.
$$
Therefore, applying Theorem~\ref{thm:d1-d2-improved}, we get
$$
\Pr\left[\sum_{i=1}^nY_i\geq \alpha n\right]
\leq
2\exp\left(
-\frac{k}{d_0}\D(\alpha\|p)
\right),
$$
which concludes the proof.
\end{proof}

\subsection{The general degree case}\label{sec:general case}
The argument for the $d_L$-$d_R$ regular bipartite graph can be extended in the general case. Specifically, let $\Dl$ and $\Dr$ denote the maximum degrees of the left and right vertices in the dependency graph, respectively.
\begin{theorem}\label{thm: general baseline}
Let $X_1,\ldots,X_m\in\{0,1\}$ be $r$-wise independent random variables, and let $Y_1,\ldots,Y_n\in\{0,1\}$ satisfy $\E[Y_i]=p$ for all $i\in[n]$. Suppose that each $Y_i$ is a function of a subset of the $X_j$'s, and that the corresponding dependency graph has maximum left degree at most $\Dl$ and maximum right degree at most $\Dr$. Assume $r\ge \Dr$. Then, for every $0<p<\alpha<1$,
$$
\Pr\left[\sum_{i=1}^n Y_i\ge \alpha n\right]
\le
\exp\left(
-\frac{k}{\Dl}\D(\alpha\|p)
\right),\qquad k:=\min\left\{
\left\lfloor \frac r{\Dr}\right\rfloor,
\lfloor\alpha n\rfloor
\right\}.
$$
\end{theorem}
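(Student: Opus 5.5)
The plan is to repeat the argument of Theorem~\ref{thm:d1-d2-simple} verbatim, checking that regularity was never actually used; only the bounds on the maximum degrees matter. Set $Y:=\sum_i Y_i$. Let $I=\{i_1,\ldots,i_k\}$ be a uniformly random $k$-subset of $[n]$, drawn independently of the $X$'s, with $k=\min\{\lfloor r/\Dr\rfloor,\lfloor\alpha n\rfloor\}$. Since $k\le\lfloor\alpha n\rfloor$, Lemma~\ref{lem:sym-exp} applies and gives
\[
\Pr[Y\ge\alpha n]\le e^{-\lambda\alpha k}\,\E_I\E\left[\exp\left(\lambda\sum_{i\in I}Y_i\right)\right],
\]
so it suffices to bound the inner exponential moment uniformly over every fixed $k$-set $I$.

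Fix such an $I$. Each $Y_i$ depends on at most $\Dr$ of the $X_j$'s, so the union $P_I:=\bigcup_{i\in I}P_i$ has $|P_I|\le k\Dr\le r$. By $r$-wise independence, $(X_j)_{j\in P_I}$ are fully independent Boolean variables. The subfamily $(Y_i)_{i\in I}$ is a function of $X_{P_I}$ only, and each $X_j$ with $j\in P_I$ influences at most $\Dl$ of the sampled $Y_i$, since it influences at most $\Dl$ of all of them. Hence $(e^{\lambda Y_i})_{i\in I}$ is a nonnegative read-$\Dl$ family over independent Boolean variables. Lemma~\ref{lem: read-Delta} then gives
\[
\E\left[\exp\left(\lambda\sum_{i\in I}Y_i\right)\right]\le\left(\prod_{i\in I}\E\left[e^{\lambda\Dl Y_i}\right]\right)^{1/\Dl}=\left(1-p+pe^{\lambda\Dl}\right)^{k/\Dl},
\]
where the last step uses $Y_i\in\{0,1\}$ with $\E Y_i=p$. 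Note that Lemma~\ref{lem: read-Delta} must be applied to the full independent family $X_{P_I}$, not to all of $X_1,\ldots,X_m$. This is legitimate because the joint law of $(Y_i)_{i\in I}$ depends only on the law of $X_{P_I}$, which coincides with that of an independent family. This restriction to $X_{P_I}$ is the only point needing care.

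Combining the two displays gives $\Pr[Y\ge\alpha n]\le\exp\left(-\tfrac{k}{\Dl}\bigl(\lambda\Dl\alpha-\ln(1-p+pe^{\lambda\Dl})\bigr)\right)$. Substituting $\mu=\lambda\Dl$ turns the exponent into the standard Chernoff optimization $\sup_{\mu>0}\bigl(\mu\alpha-\ln(1-p+pe^{\mu})\bigr)=\D(\alpha\|p)$. The supremum is attained at $\mu=\ln\frac{\alpha(1-p)}{p(1-\alpha)}$, which is positive because $\alpha>p$, so $\lambda>0$ as required. This yields the stated bound $\exp\left(-\tfrac{k}{\Dl}\D(\alpha\|p)\right)$. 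The hypothesis $r\ge\Dr$ only ensures $k\ge1$ (together with $\alpha n\ge1$), so the random set $I$ is nonempty. I expect no real obstacle, as every step is a direct transfer of the regular-case proof.
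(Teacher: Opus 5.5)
Your proposal is correct and follows the paper's proof: sample a random $k$-subset $I$, note that its neighborhood has at most $k\Dr\le r$ left vertices so the relevant $X_j$'s are independent, apply the read-$\Dl$ inequality (Lemma~\ref{lem: read-Delta}) to $(e^{\lambda Y_i})_{i\in I}$, and optimize the resulting Chernoff bound through Lemma~\ref{lem:sym-exp}. You also spell out that Lemma~\ref{lem: read-Delta} is applied only to the independent subfamily $X_{P_I}$, a step the paper leaves implicit.
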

\begin{proof}
The proof is the same as in the $d_L$-$d_R$ regular case. Indeed, if $I\subseteq[n]$ is a uniformly random subset of size $k\le r/\Dr$, then the right vertices in $I$ are adjacent to at most $k\Dr\le r$ left vertices. Hence the corresponding underlying $X_j$'s are fully independent by $r$-wise independence. Moreover, the induced family $(Y_i)_{i\in I}$ is read-$\Dl$. Applying the read-$\Delta$ inequality and optimizing the resulting Chernoff bound gives the desired bound.
\end{proof}
Assuming $\frac{r}{\Dr}<\alpha n$,
the rest of the section is dedicated to improving the factor $\frac{r}{\Dl\Dr}$ in the exponent. Let $d_L$ and $d_R$ denote the average left and right degrees, so that $d_Lm = d_Rn$. 
 There are two possible sources of improvement. First, if $d_L\ll \Dl$, then a random set of $k$ right vertices should typically induce maximum left-degree much smaller than $\Dl$. Thus, as in the regular case, one may hope that the induced family is read-$d_0$ for some $d_0\ll \Dl$, leading to an improvement from $1/\Dl$ to $1/d_0$.

Second, if $d_R\ll \Dr$, then the deterministic restriction $k\le r/\Dr$ is overly pessimistic. A uniformly random set of $k$ right vertices should have neighborhood size closer to $kd_R$ left vertices, up to the appropriate correction. Hence, we may hope to take $k$ much larger than $r/\Dr$, as long as the number of vertices on the left does not exceed $r$ with high probability. This would replace the factor $\lfloor r/\Dr\rfloor$ by $k$.

Formally, let $I\subseteq[n]$ be a subset of size $k$ sampled uniformly at random, and let $L_k=|\Gamma(I)|$ and $\Lam_k=\max_{x\in[m]}|\Gamma(x)\cap I|$ denote the number of left vertices and the maximum left degree in the subgraph induced by $I$, respectively.

In the good event $\{L_k\le r\}\cap\{\Lam_k\le d_0\}$ the variables $(Y_i)_{i\in I}$ depend on at most $r$ underlying variables and form a read-$d_0$ family. Therefore, if
$
\lambda_0:=\frac1{d_0}\ln\frac{\alpha(1-p)}{p(1-\alpha)}
$,
then
$$
e^{-\lambda_0\alpha k}
\E\left[\exp\left(\lambda_0\sum_{i\in I}Y_i\right)\mid L_k\le r\cap\Lam_k\le d_0\right]
\le
\exp\left(
-\frac{k}{d_0}\D(\alpha\|p)
\right)=:G.
$$
Thus, our goal becomes to show that the two bad events ${L_k>r}$ and ${\Lam_k>d_0}$ do not contribute more than the target term $G$. Let
$$
\Pr[L_k>r]\le \veps_L,
\qquad
\Pr[\Lam_k>d_0]\le \veps_\Lam.
$$
The idea is as before: if a bad event gives a ``bad term", and $\veps$ is the probability of it occurring, then we require $$
\ln\frac1\veps
\ge
\ln\frac{\text{bad term}}{G}.
$$
This guarantees that $
\veps\cdot(\text{bad term})\le G
$, which implies
\begin{equation}\label{eq: goal}
\Pr\left[\sum_{i=1}^nY_i\ge \alpha n\right]
\le
3\exp\left(
-\frac{k}{d_0}\D(\alpha\|p)
\right).
\end{equation}

The following theorem formalizes this intuition. The subsequent corollaries give specific parameter choices. 

\begin{theorem}\label{thm:general-improvement}
Assume the setup of Theorem~\ref{thm: general baseline}. Fix $0<p<\alpha<1$, and let $d_0\leq \Dl$ and $k$ be positive integers satisfying $ k\le \lf\alpha n\rf$ and $r>d_Rk$. Suppose first that if $d_0< \Dl$, we have
\begin{equation}\label{eq:cond1}
(d_0+1)\ln\left(\frac{n}{k\Dl}\right)
+\ln((d_0+1)!)
-\ln\frac{d_Lm}{\Dl}
\ge
k\left(\frac1{d_0}-\frac1\Dl\right)\ln\frac{\alpha}{p}
\end{equation}
When $d_0=\Dl$ this condition is omitted, as in this case $\Lk\leq d_0$ with probability $1$.
Suppose also that
\begin{equation}\label{eq:cond2}
(r-d_Rk)^2
\max\left\{
\frac{1}{2k\Dr^2},
\frac{2}{m(\Dl(\Dr-1)+1)}
\right\}
\ge
\left(
\frac{k}{d_0}
-
\frac{\lfloor r/\Dr\rfloor}{\Dl}
\right)
\ln\frac{\alpha}{p}.
\end{equation}
Then
$$
\Pr\left[\sum_{i=1}^nY_i\ge \alpha n\right]
\le
3\exp\left(
-\frac{k}{d_0}\D(\alpha\|p)
\right).
$$
\end{theorem}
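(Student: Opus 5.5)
Apply Lemma~\ref{lem:sym-exp} with a uniformly random $k$-subset $I$ and $\lambda=\lambda_0=\frac1{d_0}\ln\frac{\alpha(1-p)}{p(1-\alpha)}$. Then split $\E_I\E[\exp(\lambda_0\sum_{i\in I}Y_i)]$ according to three events: the good event $\{L_k\le r\}\cap\{\Lam_k\le d_0\}$, the event $\{\Lam_k>d_0\}\cap\{L_k\le r\}$, and the event $\{L_k>r\}$.

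\textbf{The good and read-degree events.} On the good event, the exposed $X_j$'s are independent by $r$-wise independence, and the family $(Y_i)_{i\in I}$ is read-$d_0$. Lemma~\ref{lem: read-Delta} and the standard optimization then bound this contribution by $G$.

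On $\{\Lam_k>d_0\}\cap\{L_k\le r\}$ we still have independence, but we can only use read-$\Dl$. This gives $e^{-\lambda_0\alpha k}(1-p+pe^{\lambda_0\Dl})^{k/\Dl}$. As in the regular case (Theorem~\ref{thm:d1-d2-improved}), bounding $1-p+pe^{\lambda_0\Dl}\le e^{\lambda_0\Dl}\cdot(\text{something})$ shows that this term exceeds $G$ by at most the factor $\exp(k(\frac1{d_0}-\frac1\Dl)\ln\frac\alpha p)$. So it suffices that $\ln(1/\veps_\Lam)$ is at least that quantity.

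To bound $\veps_\Lam$, take a union bound over left vertices $x$. The probability that $I$ contains $d_0+1$ of the $\deg(x)$ neighbours of $x$ is at most $\binom{\deg x}{d_0+1}\binom{n}{k}^{-1}\binom{n-d_0-1}{k-d_0-1}\le \binom{\deg x}{d_0+1}(k/n)^{d_0+1}$. Using $\binom{\deg x}{d_0+1}\le \frac{\deg x}{\Dl}\cdot\frac{\Dl^{d_0+1}}{(d_0+1)!}$ and $\sum_x\deg x=d_Lm$, we get $\veps_\Lam\le \frac{d_Lm}{\Dl}\frac{(k\Dl/n)^{d_0+1}}{(d_0+1)!}$. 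This is exactly condition \eqref{eq:cond1}.

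\textbf{The exposure event $\{L_k>r\}$.} Here independence is lost, and this is the subtle step. I would handle it by shrinking to a deterministic subset. Since $Y_i\ge0$ and $\lambda_0>0$, we can take any fixed sub-collection $I'\subseteq I$ of size $k'=\lfloor r/\Dr\rfloor$ and write $\exp(\lambda_0\sum_{I}Y_i)\le e^{\lambda_0(k-k')}\exp(\lambda_0\sum_{I'}Y_i)$. The $Y$'s in $I'$ expose at most $r$ variables and are read-$\Dl$. Combining these two facts yields a bad term whose ratio to $G$ is at most $\exp((\frac{k}{d_0}-\frac{k'}{\Dl})\ln\frac\alpha p)$.

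Matching this requires carefully tracking the constants in $e^{-\lambda_0\alpha k}$ against the powers of $(1-p+pe^{\lambda_0 d})$, using $e^{\lambda_0 d_0}\le \alpha/p\cdot\frac{1-p}{1-\alpha}$ and $\D(\alpha\|p)\le\alpha\ln(\alpha/p)$ type estimates. I expect this bookkeeping to be the main obstacle, and I would first try to reduce it to the same algebra used in the regular-case appendix.

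It then remains to show $\Pr[L_k>r]\le \exp(-(r-d_Rk)^2\max\{\cdot\})$. Note that $\E L_k\le \sum_i\Pr[i\in I]\deg(i)=kd_R$. The first branch comes from McDiarmid's inequality for sampling without replacement: changing one element of $I$ changes $L_k$ by at most $\Dr$, which gives $\exp(-2(r-d_Rk)^2/(k\Dr^2))$, more than enough for the stated $1/(2k\Dr^2)$. The second branch should come from viewing $L_k=\sum_x\mathbf 1[x\in\Gamma(I)]$ as a sum of $m$ indicators. These indicators are increasing functions of the negatively associated inclusion indicators of $I$, and each depends on a neighborhood overlapping at most $\Dl(\Dr-1)+1$ others. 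A Janson/fractional-chromatic Hoeffding bound (Lemma~\ref{lemma: NA}-style negative association, or a dependency-graph Hoeffding inequality with chromatic number $\Dl(\Dr-1)+1$) then gives $\exp(-2(r-d_Rk)^2/(m(\Dl(\Dr-1)+1)))$. Taking the better branch and invoking \eqref{eq:cond2} bounds this contribution by $G$. Summing the three contributions gives $3G$.
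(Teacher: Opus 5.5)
Your proposal follows the paper's proof essentially step for step. It uses the same $\lambda_0$ and the same three bad/good events; your split, which intersects $\{\Lambda_k>d_0\}$ with $\{L_k\le r\}$, is in fact slightly more careful than the paper's. It also uses the same fallback to $\lfloor r/\Delta_R\rfloor$ exposed outputs with the remaining factors bounded by $e^{\lambda_0}$, the same union bound for $\varepsilon_\Lambda$, and the same martingale plus Janson-with-negative-association bounds for $\varepsilon_L$. Your without-replacement McDiarmid constant $2/(k\Delta_R^2)$ is valid and stronger than the paper's Azuma constant.

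Two small points on the steps you left open. The bookkeeping you defer closes exactly via the identity $\D(\alpha\|p)+(1-\alpha)\ln z=\ln(\alpha/p)$, where $z=e^{\lambda_0 d_0}=\frac{\alpha(1-p)}{p(1-\alpha)}$, together with $1-p+pz^{\Delta_L/d_0}\le z^{\Delta_L/d_0-1}(1-p+pz)$ and $1-p+pz=\frac{1-p}{1-\alpha}$. The estimate $\D(\alpha\|p)\le\alpha\ln(\alpha/p)$ you mention would instead leave an extra positive term, so condition \eqref{eq:cond2} would no longer suffice. Second, the Janson step must be the negatively-associated variant (Lemma~\ref{lem:janson-na}), not the plain dependency-graph version, because disjoint neighbourhoods do not give independent indicators under sampling without replacement.
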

\begin{remark}
    Let $M:=\frac{d_Lm}{\Dl}$, then~\eqref{eq:cond1} can be written as $$
    (d_0+1)\ln\left(\frac{M}{d_Rk}\right)
+\ln((d_0+1)!)
-\ln M
\ge
k\left(\frac1{d_0}-\frac1\Dl\right)\ln\frac{\alpha}{p},
    $$
    which is the same form as the condition in Theorem~\ref{thm:d1-d2-improved} with $m$ replaced by $M$.
\end{remark}
\begin{proof}[Proof Sketch]
The first condition~\eqref{eq:cond1} is obtained as in the proof of
Theorem~\ref{thm:d1-d2-improved}, by controlling the event
${\Lambda_k>d_0}$ and comparing its contribution with the good-event term
$G$.

For the second condition~\eqref{eq:cond2}, we control the event
${L_k>r}$. Azuma's inequality and the Janson fractional chromatic number~\cite{Janson2004LargeDF}
argument give
$$
\ln\frac1{\veps_L}
\ge
(r-d_Rk)^2
\max\left\{
\frac{1}{2k\Dr^2},
\frac{2}{m(\Dl(\Dr-1)+1)}
\right\}.
$$
A direct calculation for the corresponding term $\ln\frac{\text{bad term}}{G}$ gives the required lower bound on $\ln(1/\veps_L)$. The full
proof is deferred to Appendix~\ref{app:read-d-thm-general}.
\end{proof}

Since the regular case already shows how the $\veps_\Lam$ term can be used to improve the read-degree factor from $1/\Dl$ to $1/d_0$, we first isolate the other source of improvement: replacing $r/\Dr$ by a larger value of $k$ (assuming there is enough slack in $r/\Dr < \lf\alpha n\rf$).

\begin{corollary}\label{cor:general improved dR}
Assume the setup of Theorem~\ref{thm: general baseline}. 
Let
\[
\widetilde d_R
:=
d_R+\Delta_R\sqrt{\frac{2\ln(\alpha/p)}{\Delta_L}}.
\]
Then
\[
\Pr\left[\sum_{i=1}^nY_i\ge \alpha n\right]
\le
2\exp\left(
-\frac{k}{\Delta_L}D(\alpha\Vert p)
\right),
\qquad
k:=\min\left\{
\left\lfloor \frac r{\widetilde d_R}\right\rfloor,
\lfloor\alpha n\rfloor
\right\}.
\]
\end{corollary}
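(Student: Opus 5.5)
The plan is to apply Theorem~\ref{thm:general-improvement} with $d_0:=\Dl$, so that condition~\eqref{eq:cond1} is vacuous and only~\eqref{eq:cond2} has to be checked. With only one bad event (the event $\{L_k>r\}$), the argument behind Theorem~\ref{thm:general-improvement} gives the constant $2$ in place of $3$. The good-event term is then $G=\exp(-\frac{k}{\Dl}\D(\alpha\|p))$. If $k\le\lfloor r/\Dr\rfloor$, Theorem~\ref{thm: general baseline} already gives the claim, so we may assume $k>r/\Dr$. We also need $r>d_Rk$. This holds because $k\le r/\widetilde d_R$, so $d_Rk\le r\,d_R/\widetilde d_R<r$ whenever $\Dr>0$.

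The key step is to verify~\eqref{eq:cond2} using only its first branch, $\frac{(r-d_Rk)^2}{2k\Dr^2}$, which comes from Azuma's inequality. Since $k\le r/\widetilde d_R$, we have $r\ge k\widetilde d_R$, and therefore
\[
r-d_Rk\ \ge\ k(\widetilde d_R-d_R)\ =\ k\Dr\sqrt{\frac{2\ln(\alpha/p)}{\Dl}}.
\]
Squaring and dividing by $2k\Dr^2$ gives
\[
\frac{(r-d_Rk)^2}{2k\Dr^2}\ \ge\ \frac{k\ln(\alpha/p)}{\Dl}.
\]
The right-hand side of~\eqref{eq:cond2} with $d_0=\Dl$ is $\bigl(\frac{k}{\Dl}-\frac{\lfloor r/\Dr\rfloor}{\Dl}\bigr)\ln\frac{\alpha}{p}\le\frac{k}{\Dl}\ln\frac{\alpha}{p}$. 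So~\eqref{eq:cond2} holds, in fact with room to spare, since the $\lfloor r/\Dr\rfloor$ term was simply dropped.

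The main obstacle is making sure the constant is $2$ rather than $3$. I would go back to the decomposition in Lemma~\ref{lem:sym-exp}, evaluated at $\lambda_0=\frac1{\Dl}\ln\frac{\alpha(1-p)}{p(1-\alpha)}$. The event $\{\Lk\le\Dl\}$ is certain, so the expectation splits into exactly two pieces.
\begin{itemize}
\item On $\{L_k\le r\}$, the underlying variables are independent and the family is read-$\Dl$, so Lemma~\ref{lem: read-Delta} bounds this piece by $G$.
\item On $\{L_k>r\}$, I use the crude bound $e^{\lambda_0 Y}\le e^{\lambda_0 k}$. Alternatively, I take a sub-subset of size $\lfloor r/\Dr\rfloor$ and apply the baseline, which is exactly where the $\lfloor r/\Dr\rfloor$ term in~\eqref{eq:cond2} originates.
\end{itemize}
In either case, the bad term divided by $G$ is at most $\exp\bigl((\frac{k}{\Dl}-\frac{\lfloor r/\Dr\rfloor}{\Dl})\ln\frac{\alpha}{p}\bigr)$. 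This is the same quantity the paper's appendix uses to derive~\eqref{eq:cond2}. The Azuma estimate $\veps_L\le\exp\bigl(-\frac{(r-d_Rk)^2}{2k\Dr^2}\bigr)$, applied to the Doob martingale of $L_k$ over sequential sampling where each step changes $L_k$ by at most $\Dr$ and $\E L_k\le d_Rk$, then makes the bad contribution at most $G$. Summing the two pieces gives $2G$.
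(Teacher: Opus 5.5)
Your proposal is correct and follows the paper's proof: you set $d_0=\Delta_L$ so that only the event $\{L_k>r\}$ remains, fall back on Theorem~\ref{thm: general baseline} in the trivial case, and verify \eqref{eq:cond2} through its Azuma branch via $r-d_Rk\ge k(\widetilde d_R-d_R)$, which gives the constant $2$. One small slip: with the crude bound $e^{\lambda_0\sum_{i\in I}Y_i}\le e^{\lambda_0 k}$, the ratio of the bad term to $G$ is $\exp\bigl(\frac{k}{\Delta_L}\ln\frac{\alpha}{p}\bigr)$, not the smaller quantity you state ``in either case''; the conclusion is unaffected, because your Azuma estimate already gives $\ln\frac{1}{\varepsilon_L}\ge\frac{k}{\Delta_L}\ln\frac{\alpha}{p}$.
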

\begin{proof}
If $\widetilde d_R \geq \Delta_R$, then $k\leq \min\{\lf r/\Dr\rf, \lf\alpha n\rf\}$, and Theorem~\ref{thm: general baseline} gives the desired bound.

Thus, we may assume $\widetilde d_R < \Delta_R$. Set $d_0=\Delta_L$, then we only have one bad event $L_k>r$ to account for via condition~\eqref{eq:cond2}. 

Since $k\le r/\widetilde d_R$, we have
$
r-d_Rk\ge k(\widetilde d_R-d_R)
$.
Therefore, by the definition of $\widetilde d_R$, we get
$$
\frac{(r-d_Rk)^2}{2k\Dr^2}
\geq \frac{k(\widetilde d_R-d_R)^2}{2\Dr^2}\geq\frac{k}{\Dl}\ln\frac{\alpha}{p},
$$
which implies~\eqref{eq:cond2} in Theorem~\ref{thm:general-improvement}. Hence the good-event term and the $L_k>r$ term each contribute at most 
$$
\exp\left(-\frac{k}{\Delta_L}D(\alpha\|p)\right),
$$
which gives the factor $2$.
\end{proof}

We can also adapt Corollary~\ref{cor:regular improved dL} to the general setting.
\begin{corollary}\label{cor:general improved dL}
Assume the setup of Theorem~\ref{thm: general baseline}. Let
\(
M:=\frac{d_Lm}{\Dl}
\).
Suppose $$\ln(\alpha/p)=\Theta(1),\qquad r=o(M),\qquad\text{and}\quad
k\ln\frac{M}{d_Rk}\gg \ln^2(d_Rk)
.$$
Then, regardless of $\Dl$,
\[
\Pr\left[\sum_{i=1}^nY_i\geq \alpha n\right]
\leq
2\exp\left(
-\D(\alpha\|p)\cdot
\Omega\left(
\min\left\{
k,\sqrt{k\ln\frac{M}{d_Rk}}
\right\}
\right)
\right),\qquad\text{for }\;k:=\min\left\{\left\lfloor \frac{r}{\Dr}\right\rfloor,\lfloor \alpha n\rfloor\right\}.
\]
\end{corollary}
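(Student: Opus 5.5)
We mirror the proof of Corollary~\ref{cor:regular improved dL}, now applying Theorem~\ref{thm:general-improvement} in place of Theorem~\ref{thm:d1-d2-improved}. Set
\[
A:=\frac{M}{d_Rk},\qquad d_0:=\left\lceil 2\sqrt{k\frac{\ln(\alpha/p)}{\ln A}}\,\right\rceil .
\]
Since $\ln(\alpha/p)=\Theta(1)$ and $d_0\geq 1$, we have $k/d_0=\Omega(\min\{k,\sqrt{k\ln A}\})$. So it suffices to prove the bound $2\exp(-\frac{k}{d_0}\D(\alpha\|p))$.

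\textbf{Case $\Dl\le d_0$.} Theorem~\ref{thm: general baseline} applies with the same $k=\min\{\lf r/\Dr\rf,\lf\alpha n\rf\}$. It gives $\exp(-\frac{k}{\Dl}\D(\alpha\|p))\le\exp(-\frac{k}{d_0}\D(\alpha\|p))$, and we are done.

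\textbf{Case $\Dl>d_0$.} We verify the hypotheses of Theorem~\ref{thm:general-improvement} with this $d_0$ and this $k$.

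First, the side conditions hold. We have $k\le\lf\alpha n\rf$ by definition. We also have $d_Rk\le \Dr k\le r$; the strict inequality $r>d_Rk$ can be arranged, or the degenerate case handled by noting that $L_k\le k\Dr\le r$ always.

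Second, condition~\eqref{eq:cond2} holds automatically in the form we need. Since $k\le \lf r/\Dr\rf$, we have $L_k\le k\Dr\le r$ deterministically. So $\veps_L=0$ and the bad event $\{L_k>r\}$ contributes nothing. This is also why the final constant is $2$ rather than $3$.

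I would phrase this carefully. Formally, I would not invoke~\eqref{eq:cond2} itself: its right-hand side $\frac{k}{d_0}-\frac{\lf r/\Dr\rf}{\Dl}$ is positive here, while its left-hand side may be small. Instead I would rerun the proof of Theorem~\ref{thm:general-improvement} with only the $\Lambda_k$ bad event present, exactly as in Theorem~\ref{thm:d1-d2-improved}. That yields the factor $2$.

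Third, condition~\eqref{eq:cond1} must be checked. Using the Remark after Theorem~\ref{thm:general-improvement}, it reads
\[
(d_0+1)\ln A+\ln((d_0+1)!)-\ln M\ \ge\ k\Big(\frac1{d_0}-\frac1\Dl\Big)\ln\frac{\alpha}{p}.
\]
Since $\ln M=\ln A+\ln(d_Rk)$ and $\ln((d_0+1)!)\ge0$, it suffices to show
\[
d_0\ln A-\ln(d_Rk)\ \ge\ \frac{k\ln(\alpha/p)}{d_0}.
\]
By the choice of $d_0$, we have $d_0^2\ln A\ge 4k\ln(\alpha/p)$, so $\frac12 d_0\ln A\ge \frac{k\ln(\alpha/p)}{d_0}$. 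It therefore remains to show $\frac12 d_0\ln A\ge \ln(d_Rk)$.

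This last step is the main obstacle. In the regular corollary it came from $d_R=O(1)$ and $r=o(m)$; here those are replaced by the explicit hypothesis $k\ln A\gg\ln^2(d_Rk)$. The argument splits on the size of $d_0$:
\begin{itemize}
\item If $d_0\ge 2$ via the square-root term, then $\frac12 d_0\ln A\ge \sqrt{k\ln(\alpha/p)\ln A}$. Since $\ln(\alpha/p)=\Theta(1)$, this is $\Omega(\sqrt{k\ln A})$, which dominates $\ln(d_Rk)$ by the hypothesis.
\item If instead $d_0=1$ because of the ceiling, then $\frac12\ln A\ge\ln(d_Rk)$ must come from the hypothesis directly. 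Here $k\ln(\alpha/p)\le \frac14\ln A$, so $k=O(\ln A)$. Then $k\ln A\gg\ln^2(d_Rk)$ together with $k\lesssim\ln A$ gives $\ln^2 A\gg \ln^2(d_Rk)$, as needed.
\end{itemize}
We also use $A\ge M/r\to\infty$ (from $r=o(M)$ and $d_Rk\le r$), which makes $\ln A>0$ and ensures $d_0$ is well defined.

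With both conditions verified, the (adapted) conclusion of Theorem~\ref{thm:general-improvement} gives the claimed bound, uniformly in $\Dl$.
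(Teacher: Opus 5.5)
Your proposal is correct and follows the paper's proof essentially step for step: the same choice of $A$ and $d_0$, the fallback to Theorem~\ref{thm: general baseline} when $\Dl\le d_0$, the verification of \eqref{eq:cond1} via $(d_0+1)\ln A-\ln M=d_0\ln A-\ln(d_Rk)$, and rerunning the proof of Theorem~\ref{thm:general-improvement} with the deterministically empty event $\{L_k>r\}$ omitted to get the constant $2$. Your case split on $d_0=1$ is harmless but unnecessary, because $\frac12 d_0\ln A\ge\sqrt{k\ln(\alpha/p)\ln A}$ holds whenever $d_0\ge 2\sqrt{k\ln(\alpha/p)/\ln A}$, and this includes the case $d_0=1$.
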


The proof is the same as that of Corollary~\ref{cor:regular improved dL}, with the parameter $M$ in place of $m$; we defer the details to Appendix~\ref{app:general improved dL}. In particular, the condition
$
k\ln\frac{M}{d_Rk}\gg \ln^2(d_Rk)
$
 holds when $d_R=O(1)$, analogous to the regular-case assumption. More generally, the assumption $d_R=O(1)$  in the regular case can also be replaced by the same asymptotic condition.
\\

We can combine Corollaries~\ref{cor:general improved dL}
and~\ref{cor:general improved dR}. The intuition is as follows. Since the value of \(k\) is at most \(r/d_R\), we
choose \(d_0\) as in Corollary~\ref{cor:general improved dL}, but with
\(r/d_R\) in place of \(k\). We then define the effective right degree $\widetilde d_R$ as
in Corollary~\ref{cor:general improved dR}, with this \(d_0\) in place of
\(\Delta_L\). This gives an improvement in both $\Dl$ and $\Dr$ whenever
\(d_0<\Delta_L\) and \(\widetilde d_R<\Delta_R\), so we add these inequalities as assumptions in the Corollary below, otherwise, we can fall back on one of the previous results.
% \iffalse
\begin{corollary}\label{cor:general improved both}
Assume the setup of Theorem~\ref{thm: general baseline}, and let
\(0<p<\alpha<1\).
Let
\[
M:=\frac{d_Lm}{\Delta_L}
\qquad\text{and}\qquad
d_0:=
\left\lceil
2
\sqrt{
\frac{(r/d_R)\ln(\alpha/p)}
{\ln(M/r)}
}
\right\rceil
.\] 
Suppose that
\[
\frac{2\ln(\alpha/p)}{(1-d_R/\Dr)^2}< d_0 < \Dl,
\qquad r=o(M), 
\qquad\text{and}\quad \frac{r}{d_R}\ln\frac{\alpha}{p}\ln\frac{M}{r}
\geq  \ln^2 r.
\]
Define
\[
\widetilde d_R
:=
d_R+\Delta_R
\sqrt{
\frac{2\ln(\alpha/p)}{d_0}
}.
\]
Then we have
\[
\Pr\left[
\sum_{i=1}^nY_i\ge \alpha n
\right]
\le
3\exp\left(
-\frac{k}{d_0}D(\alpha\Vert p)
\right)\qquad k:=
\min\left\{
\left\lfloor\frac r{\widetilde d_R}\right\rfloor,
\lfloor \alpha n\rfloor
\right\}.
\]
In particular, provided \(\lfloor r/\widetilde d_R\rfloor\le \lfloor\alpha n\rfloor\) the baseline exponent
\[
\frac{r}{\Delta_L\Delta_R}D(\alpha\Vert p)
\]
is improved by replacing \(\Delta_L\) with \(d_0\) and
\(\Delta_R\) with \(\widetilde d_R\).
\end{corollary}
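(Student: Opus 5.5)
This follows by applying Theorem~\ref{thm:general-improvement} with the given $d_0$ and with $k=\min\{\lfloor r/\widetilde d_R\rfloor,\lfloor\alpha n\rfloor\}$. So the work is to verify the hypotheses $k\le\lfloor\alpha n\rfloor$, $r>d_Rk$, condition~\eqref{eq:cond1}, and condition~\eqref{eq:cond2}.

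\textbf{Step 1: the easy hypotheses.} The bound $k\le\lfloor\alpha n\rfloor$ holds by definition of $k$. For $r>d_Rk$, note that $\widetilde d_R>d_R$, so $d_Rk\le d_R r/\widetilde d_R<r$. The first inequality of the hypotheses, $2\ln(\alpha/p)/(1-d_R/\Dr)^2<d_0$, rearranges to $\Dr\sqrt{2\ln(\alpha/p)/d_0}<\Dr-d_R$, i.e.\ $\widetilde d_R<\Dr$. This guarantees that $k$ may exceed $\lfloor r/\Dr\rfloor$, so we are genuinely in the improved regime, although the argument below does not need it.

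\textbf{Step 2: condition~\eqref{eq:cond2}.} We use only the first term of the maximum, as in the proof of Corollary~\ref{cor:general improved dR} with $\Dl$ replaced by $d_0$. Since $k\le r/\widetilde d_R$, we have $r-d_Rk\ge k(\widetilde d_R-d_R)$, and hence
\[
\frac{(r-d_Rk)^2}{2k\Dr^2}\ge \frac{k(\widetilde d_R-d_R)^2}{2\Dr^2}=\frac{k}{d_0}\ln\frac{\alpha}{p}\ge \Bigl(\frac{k}{d_0}-\frac{\lfloor r/\Dr\rfloor}{\Dl}\Bigr)\ln\frac{\alpha}{p}.
\]

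\textbf{Step 3: condition~\eqref{eq:cond1}.} This is the main obstacle, and we need it because $d_0<\Dl$. By the Remark after Theorem~\ref{thm:general-improvement}, it suffices to show
\[
(d_0+1)\ln\frac{M}{d_Rk}-\ln M+\ln((d_0+1)!)\ge \frac{k}{d_0}\ln\frac{\alpha}{p}.
\]
Use $d_Rk\le r$ together with $k\le r/d_R$. The left side is then at least
\[
(d_0+1)\ln\frac{M}{r}-\ln M=d_0\ln\frac{M}{r}-\ln r,
\]
and the right side is at most $\frac{r}{d_R d_0}\ln(\alpha/p)$. By the choice of $d_0$ we have $d_0^2\ln(M/r)\ge 4(r/d_R)\ln(\alpha/p)$, so $\tfrac12 d_0\ln(M/r)\ge \frac{(r/d_R)\ln(\alpha/p)}{d_0}$.

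It remains to show $\tfrac12 d_0\ln(M/r)\ge\ln r$. From the definition of $d_0$,
\[
\tfrac12 d_0\ln(M/r)\ge \sqrt{(r/d_R)\ln(\alpha/p)\ln(M/r)}\ge \ln r,
\]
where the last inequality is exactly the hypothesis $\frac{r}{d_R}\ln\frac{\alpha}{p}\ln\frac{M}{r}\ge\ln^2 r$. The term $\ln((d_0+1)!)\ge0$ can be dropped, and $r=o(M)$ ensures $\ln(M/r)>0$, so the choice of $d_0$ is well defined. Combining the two halves gives condition~\eqref{eq:cond1}.

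Theorem~\ref{thm:general-improvement} then yields the bound $3\exp(-\frac{k}{d_0}D(\alpha\|p))$.

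For the final ``in particular'' statement: when $\lfloor r/\widetilde d_R\rfloor\le\lfloor\alpha n\rfloor$ we have $k=\lfloor r/\widetilde d_R\rfloor$. Since $d_0<\Dl$ and $\widetilde d_R<\Dr$ by Step 1, the exponent $\frac{r}{\widetilde d_R d_0}D$ improves on the baseline $\frac{r}{\Dl\Dr}D$ in both factors.
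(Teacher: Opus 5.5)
Your proposal is correct and follows the paper's proof. Like the paper, you verify condition~\eqref{eq:cond2} through the Azuma term, as in Corollary~\ref{cor:general improved dR} with $d_0$ in place of $\Dl$, and you verify condition~\eqref{eq:cond1} from the definition of $d_0$ together with the hypothesis $\frac{r}{d_R}\ln\frac{\alpha}{p}\ln\frac{M}{r}\ge\ln^2 r$, then invoke Theorem~\ref{thm:general-improvement}. The only differences are cosmetic: you substitute $r$ for $d_Rk$ on the left of~\eqref{eq:cond1} at once, where the paper keeps $d_Rk$ and uses $\ln r\ge\ln(d_Rk)$ separately, and you explicitly check the hypothesis $r>d_Rk$, which the paper leaves implicit.
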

Note that the condition $\frac{2\ln(\alpha/p)}{(1-d_R/\Dr)^2}< d_0$ implies that $\widetilde d_R<\Dr$.

The proof is a direct verification of the hypotheses of
Theorem~\ref{thm:general-improvement}; we defer the details to
Appendix~\ref{app:general improved both}.

\section{Linear Hashing with Random Inputs}\label{sec:lin hashing}
\label{sec:balls-bins-weak-randomness}
This section analyzes the balls-and-bins model in which the hash function is a random linear map over
\(\F_2\), while the input keys are not fixed adversarially.  The point is to interpolate between two
standard settings.  In the worst-case setting, one fixes an arbitrary set of keys \(B\subseteq \F_2^u\) and
then samples a random linear hash function.  Jaber, Kumar, and Zuckerman~\cite{10.1145/3717823.3718208} show that even in this worst-case
model the maximum load has the same order as fully random hashing, up to constants.  Here we keep
the same linear-hashing setup, but additionally assume that the input sequence has conditional 
min-entropy.  This lets us control the probability that a sampled tuple of keys contains many linear
dependencies, which is exactly the obstruction to applying the usual product-moment upper-tail
argument.\\

The main result of this section is a fixed-bin tail bound for this semi-random-input model. Theorem~\ref{thm:main-balls} bounds the \(k\)-th product moment of the load of a fixed bin in terms of span-hit probabilities for a random tuple of input keys. Corollary~\ref{cor:optimal k} uses this estimate for pointwise block sources and gives an explicit one-bin tail bound. A union bound over the \(n\) bins then gives the corresponding maximum-load estimate, which we compare with worst-case-input linear hashing and with Bshouty's fixed-bin estimate in Table~\ref{tab:maxload}.

Throughout, let
\[
        n=2^\ell,
        \qquad U:=\F_2^u\setminus\{0\}
\]
and view \(U\) as the universe of possible keys and \(\F_2^\ell\) as the set of bins. 

Let \(\calH\) be the family of linear maps
\[
        h:\F_2^u\to \F_2^\ell.
\]
Equivalently, choose \(\ell\) independent uniform vectors
\(Z^{(1)},\ldots,Z^{(\ell)}\in\F_2^u\), and define
\[
        h(v)=\bigl(\langle v,Z^{(1)}\rangle,\ldots,
        \langle v,Z^{(\ell)}\rangle\bigr),
        \qquad v\in\F_2^u.
\]
For every nonzero \(x\in\F_2^u\), \(h(x)\) is uniform on \(\F_2^\ell\).  More generally, if
\(x_1,\ldots,x_d\) are linearly independent, then
\(h(x_1),\ldots,h(x_d)\) are mutually independent and uniform on \(\F_2^\ell\).  This last fact is the
only hash-family property used below.

Let \(R_1,\ldots,R_m\in U\) be the $u$-bit representations of the $m$ keys.  For a bin
\(y\in\F_2^\ell\), define its load and the maximum load, respectively, as
\[
        L_y:=\sum_{i=1}^m \one_{\{h(R_i)=y\}},
        \qquad
        L_{\max}:=\max_{y\in\F_2^\ell} L_y.
\]

We write
\[
\OPT(m, n) = \frac{m}{n}\,\xi,
\qquad
\xi :=
\begin{cases}
\frac{\theta}{\log \theta}, & \theta\ge 2,\\
1, & \theta<2.
\end{cases}
\]
for the expected maximum load for a fully random function from $[m]$ to $[n]$,
where \(\theta(m,n) = \frac{n\log n}{m}\).

Our goal is to obtain an upper tail $\Pr[L_{\max}\ge  R\OPT(m,n)]$, or, equivalently,
$$
\Pr[L_{\max}\ge \alpha\frac{m}{n}],
$$
for $\alpha=R\xi$, given that $R>1$ and $\xi\geq 1$.  Since there are \(n\) bins, it suffices to
prove a sufficiently strong one-bin tail bound and then apply a union bound.

\subsubsection*{Span-hit probability assumptions}
 The goal of the assumptions below
is to rule out the possibility that a typical small random subcollection of $R_1\ldots, R_m$
keys has many linear dependencies, since such dependencies are precisely what
prevents random linear hashing from behaving like fully random hashing.

The cleanest sufficient assumption is a pointwise conditional min-entropy bound~\ref{def:block-source}.
However, a weaker condition, where  we only need to control span hits among a small random tuple of indices, would suffice.

\begin{definition}[Span-hit probability sequence]
\label{def:random-tuple-rank-defect-profile}
Let \(k\ge 1\), and let \(\rho_1,\ldots,\rho_{k-1}\geq 0\).  We say that the input
sequence has span-hit probability sequence \((\rho_j)_{j=1}^{k-1}\) if the following holds.

Let \(I=\{i_1,\ldots,i_k\;|\;i_1<\ldots<i_k\}\) be a uniformly random \(k\)-subset of \([m]\), sampled independently of \(R_1,\ldots,R_m\).  Then, for every \(j=1,\ldots,k-1\),
\[
        \Pr\!\left[
        R_{i_{j+1}}\in
        \Span(R_{i_1},\ldots,R_{i_{j}})
        \,\middle|\,
        i_1,\ldots,i_k,\,
        R_{i_1},\ldots,R_{i_{j}}
        \right]
        \le \rho_j2^j.
\]
We additionally assume $\rho_j2^j\leq 1$, $j=1,\ldots,k-1$.
\end{definition}

Under Definition~\ref{def:block-source}, we may take
\[
        \rho_j=\min\{1/2^j,p_{\rm src}\}.
\]

In comparison, suppose the input is obtained by sampling without replacement
from a fixed set \(B\subseteq U\).  Then, conditionally on the previously
revealed values of the keys,
\[
        \Pr\!\left[
        R_{i_{j+1}}\in
        \Span(R_{i_1},\ldots,R_{i_{j}})
        \,\middle|\,
        R_{i_1},\ldots,R_{i_{j}}
        \right]
        \le
        \frac{
        |\Span(R_{i_1},\ldots,R_{i_{j}})\cap B|
        }{
        |B|-j
        }
        \le
        \frac{2^j-1}{|B|-j}.
\]

\subsection{A single-bin tail bound}
\begin{theorem}\label{thm:main-balls}
Fix a bin \(y\in \F_2^\ell\), and denote by
\(
        L_y:=\sum_{i=1}^m \one_{\{h(R_i)=y\}}
\).
Let \(a\) be an integer with \(1\le a\le m\), and let \(1\le k\le a\).
Assume that, for a uniformly random ordered
\(k\)-tuple of distinct indices \(i_1,\ldots,i_k\in[m]\), sampled independently
of the input sequence, we have
\[
        \Pr\!\left[
        R_{i_{j+1}}\in
        \Span(R_{i_1},\ldots,R_{i_j})
        \,\middle|\,
        i_1,\ldots,i_k,\,
        R_{i_1},\ldots,R_{i_j}
        \right]
        \le 2^j\rho_j
\]
for every \(j=1,\ldots,k-1\). Then
\[
        \Pr[L_y\ge a]
        \le
        \frac{\binom{m}{k}}{\binom{a}{k}}
        \frac{1}{n^k}
        \prod_{j=1}^{k-1}
        \bigl(1+(n-1)2^j\rho_j\bigr).
\]

Consequently,
\[
        \Pr[L_{\max}\ge a]
        \le
        \min_{1\le k\le a}
        \frac{\binom{m}{k}}{\binom{a}{k}}
        \frac{1}{n^{k-1}}
        \prod_{j=1}^{k-1}
        \bigl(1+(n-1)2^j\rho_j\bigr).
\]
\end{theorem}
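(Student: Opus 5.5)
The plan is to apply the basic template \eqref{eq:sym-prod} (Lemma~\ref{lem: tail reduction}) with $X_i:=\one_{\{h(R_i)=y\}}$. Since $a$ is an integer and $k\le a$, this gives
\[
\Pr[L_y\ge a]\le \frac{\binom{m}{k}}{\binom{a}{k}}\,\E_I\E\Big[\prod_{i\in I}X_i\Big].
\]
A uniformly random $k$-subset listed in increasing order and a uniformly random ordered $k$-tuple of distinct indices induce the same average of the symmetric product. So we may work with the ordered tuple $(i_1,\ldots,i_k)$ from the hypothesis. It remains to show
\[
\E\Big[\prod_{j=1}^k \one_{\{h(R_{i_j})=y\}}\Big]\le n^{-k}\prod_{j=1}^{k-1}\bigl(1+(n-1)2^j\rho_j\bigr).
\]

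For the product moment, condition on the indices and on the realized keys $R_{i_1},\ldots,R_{i_k}$. The hash $h$ is independent of the inputs. Let $d$ be the dimension of $\Span(R_{i_1},\ldots,R_{i_k})$. The event that all $k$ keys land in $y$ is contained in the event that a basis of $d$ of them lands in $y$. Since the keys are nonzero and linearly independent keys hash to independent uniform values, this has probability at most $n^{-d}$. Write $d=1+\sum_{j=1}^{k-1}(1-D_j)$, where $D_j:=\one\{R_{i_{j+1}}\in\Span(R_{i_1},\ldots,R_{i_j})\}$ (the first key is nonzero). Then
\[
n^{-d}=n^{-k}\prod_{j=1}^{k-1} n^{D_j}=n^{-k}\prod_{j=1}^{k-1}\bigl(1+(n-1)D_j\bigr).
\]

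Next take expectation over the keys by peeling from the last factor backwards. The factors for $j'<j$ are measurable with respect to $R_{i_1},\ldots,R_{i_j}$ and the indices. Conditioned on these, the hypothesis gives $\E[D_j\mid\cdot]\le 2^j\rho_j$, so $\E[1+(n-1)D_j\mid\cdot]\le 1+(n-1)2^j\rho_j$. Inducting from $j=k-1$ down to $1$ yields the claimed product, and hence the single-bin bound.

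For the maximum load, apply a union bound over the $n=2^\ell$ bins. This turns $n^{-k}$ into $n^{-(k-1)}$. Since the single-bin bound holds for every $1\le k\le a$, we take the minimum over $k$.

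The step needing the most care is the tower-property peeling. The conditioning in the hypothesis must be on the indices and on the prefix $R_{i_1},\ldots,R_{i_j}$ only, not on later keys, so the expectations must be taken in the order $j=k-1,k-2,\ldots$. A second point to check is the subset-versus-ordered-tuple equivalence. The hypothesis here is stated for ordered tuples, so this is just symmetry of the product under permuting $I$.
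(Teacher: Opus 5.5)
Your proposal is correct and follows essentially the same route as the paper's proof. Both use the product-moment reduction via Lemma~\ref{lem: tail reduction}, the basis argument giving the bound $n^{-D}$, and the rank-defect factorization $n^{-D}=n^{-k}\prod_j n^{D_j}$ bounded factor by factor with the span-hit hypothesis, followed by a union bound over the $n$ bins. Your explicit handling of ordered-tuple versus sorted-subset sampling, and of the backward order of conditioning, spells out details the paper leaves implicit.
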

\begin{proof}Fix \(y\in\F_2^\ell\), and set
\(
        X_i:=\one_{\{h(R_i)=y\}}
\).
By Lemma~\ref{lem: tail reduction}, it is sufficient to upper bound
\(\E[X_{i_1}\cdots X_{i_k}]
\) for a $k$-tuple of distinct indices $1\leq i_1<\ldots<i_k\leq m$ sampled uniformly at random from $k$-subsets of $[m]$.
We now bound this expectation.

Condition on the realized vectors \(R_{i_1},\ldots,R_{i_k}\), and let
\[
        D:=\dim\Span(R_{i_1},\ldots,R_{i_k}).
\]
Choose a basis among these \(k\) vectors. If all \(k\) vectors hash to \(y\),
then in particular all vectors in this basis hash to \(y\). Since the basis
vectors are linearly independent, their images under a random linear map are
mutually independent and uniform in \(\F_2^\ell\). Therefore, using the independence of (h) from the input sequence,
\[
        \E_h[X_{i_1}\cdots X_{i_k}
        \mid R_1,\ldots, R_m]=\E_h[X_{i_1}\cdots X_{i_k}
        \mid R_{i_1},\ldots, R_{i_k}]
        \le n^{-D}.
\]

For \(j=1,\ldots,k\), define
\[
        D_j:=
        \one_{\{R_{i_j}\in
        \Span(R_{i_1},\ldots,R_{i_{j-1}})\}},
        \qquad D_1:=0.
\]
Then the span dimension increases exactly when \(D_j=0\), so that
\(
        D=k-\sum_{j=1}^k D_j
\).
Consequently,
\[
        n^{-D}
        =
        n^{-k}\prod_{j=1}^k n^{D_j}.
\]
By the span-hit assumption, it holds that
\[
        \Pr[D_{j+1}=1
        \mid i_1,\ldots,i_k,\,
        R_{i_1},\ldots,R_{i_j}]
        \le 2^j\rho_j 
\]
for $j=1,\ldots, k-1$. 
Hence,
\[
        \E[n^{D_{j+1}}
        \mid i_1,\ldots,i_k,\,
        R_{i_1},\ldots,R_{i_j}]
        \le
        1+(n-1)2^j\rho_j ,
\]
which implies the desired bound
\[
        \E\left[\prod_{j=1}^k n^{D_j}\right]
        \le
        \prod_{j=1}^{k-1}
        \bigl(1+(n-1)2^j\rho_j\bigr).
\]
Therefore,
\[
        \Pr[L_y\geq a] 
        \le  \frac{\binom{m}{k}}{\binom{a}{k}}
        n^{-k}
        \prod_{j=1}^{k-1}
        \bigl(1+(n-1)2^j\rho_j\bigr).
\]
The maximum-load bound follows by a union bound over the \(n\) bins.
\end{proof}

\begin{corollary}
\label{cor:optimal k}
Let the assumptions of Theorem~\ref{thm:main-balls} hold with $\rho_j\leq p_{\rm src}$. Fix $\alpha>1$, and set $a := \left\lceil\alpha \frac{m}{n}\right\rceil$. Assume $1\le a\le m$, $\alpha m/n\geq 1$ and
\[
        1\le \log\frac{a}{mp_{\rm src}}
        \qquad\text{and}\qquad
        \left(\log\frac{a}{mp_{\rm src}}\right)^2\le a .
\]
Then, for an absolute constant \(C>0\),
\[
        \Pr[L_y\ge \alpha\frac{m}{n}]
        \le
        C\alpha^C
        (np_{\rm src})^{\log\alpha}
        2^{-\frac12\log^2\alpha}.
\]
Consequently,
\[
        \Pr[L_{\max}\ge \alpha \frac{m}{n}]
        \le
        Cn\alpha^C
        (np_{\rm src})^{\log\alpha}
        2^{-\frac12\log^2\alpha}.
\]
\end{corollary}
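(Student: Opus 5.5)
The plan is to specialize Theorem~\ref{thm:main-balls} with $\rho_j\le p_{\rm src}$ and choose
\[
k\approx \log\frac{a}{m p_{\rm src}}\approx \log\frac{\alpha}{n p_{\rm src}} .
\]
This $k$ is exactly where the gain $\alpha^{-k}$ balances the growth of the rank-defect product. The two hypotheses of the corollary are precisely $k\ge 1$ and $k^2\le a$, which is why they are stated in terms of $\log\frac{a}{mp_{\rm src}}$. We take $k:=\lfloor \log\frac{a}{mp_{\rm src}}\rfloor$, or its ceiling, whichever keeps $1\le k$ and $k^2\le a$. Since $a=\lceil \alpha m/n\rceil$, this $k$ equals $j_0+\log\alpha$ up to an additive $O(1)$, where $j_0:=\log\frac{1}{np_{\rm src}}$.

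\textbf{Step 1 (binomial ratio).} We bound $\binom{m}{k}/\binom{a}{k}\le (m/a)^k\prod_{i<k}(1-i/a)^{-1}\le (m/a)^k e^{O(k^2/a)}$. By the assumption $k^2\le a$ this is $O((m/a)^k)$. Combined with $n^{-k}$, this gives $(m/(an))^k\le \alpha^{-k}$.

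\textbf{Step 2 (the product).} For each $j$,
\[
1+(n-1)2^j\rho_j\le 1+2^{\,j-j_0}.
\]
For $j\le j_0$ these factors multiply to at most $\prod_{t\ge 0}(1+2^{-t})=O(1)$. For $j>j_0$ each factor is at most $2\cdot 2^{\,j-j_0}$, so the product over $j_0<j\le k-1$ is at most
\[
2^{(k-j_0)^2/2+O(k-j_0)}.
\]
If $j_0<0$, i.e.\ $np_{\rm src}>1$, all factors fall in the second regime and the same estimate holds, with $j_0$ possibly negative. Altogether
\[
\Pr[L_y\ge a]\le C\,2^{-k\log\alpha+(k-j_0)^2/2+O(k-j_0)} .
\]

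\textbf{Step 3 (optimize).} The exponent $-k\log\alpha+(k-j_0)^2/2$ is minimized at $k=j_0+\log\alpha$, where it equals
\[
-j_0\log\alpha-\tfrac12\log^2\alpha .
\]
Rewriting $2^{-j_0\log\alpha}=(np_{\rm src})^{\log\alpha}$ gives the main term. Rounding $k$ to an integer shifts $k$ by $O(1)$, and the lower-order terms $O(k-j_0)=O(\log\alpha+1)$ contribute. Both are absorbed into $C\alpha^C$, since $2^{O(\log\alpha)}=\alpha^{O(1)}$. Finally, $\Pr[L_y\ge \alpha m/n]=\Pr[L_y\ge a]$ because $L_y$ is integer-valued, and the union bound over the $n$ bins gives the $L_{\max}$ statement.

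The main obstacle is bookkeeping rather than any new idea. We must check that the chosen integer $k$ actually satisfies $1\le k\le a$ and $k^2\le a$; this is where the stated hypotheses are used. We must also verify that replacing the exact value $\log\frac{a}{mp_{\rm src}}$ by $j_0+\log\alpha$, with the ceiling in $a$ and the floor in $k$, only costs factors of the form $\alpha^{O(1)}$ and absolute constants. This includes the regime $np_{\rm src}\ge 1$, where $j_0\le 0$.
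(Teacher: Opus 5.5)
Your proposal is correct and follows the paper's proof essentially step for step. You use the same choice $k=\lfloor\log\frac{a}{mp_{\rm src}}\rfloor$ (the floor already satisfies $1\le k$ and $k^2\le a$ by the hypotheses, so the ceiling alternative is never needed), the same bound $\binom{m}{k}\binom{a}{k}^{-1}n^{-k}\le e\,\alpha^{-k}$ via $k^2\le a$, and the same split of the product at $j\approx\log\frac{1}{np_{\rm src}}$ to obtain $2^{\frac12\log^2\alpha+O(\log\alpha)}$; the differences are cosmetic (you use $n$ in place of $n-1$ in the threshold, phrase the choice of $k$ as minimizing a quadratic, and treat the ceiling in $a$ and the integrality of $L_y$ explicitly, which the paper glosses over).
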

\begin{proof}[Proof Sketch]
    The proof is an optimization of the bound from Theorem~\ref{thm:main-balls}.
With $\rho_j\leq p_{\rm src}$, the relevant expression is
\[
A_k:=
\frac{\binom{m}{k}}{\binom{a}{k}}\frac1{n^k}
\prod_{j=1}^{k-1}(1+(n-1)p_{\rm src}2^j).
\]
We set
\[
k:=\left\lf\log\frac{a}{mp_{\rm src}}\right\rf
=
\log\frac1{np_{\rm src}}+\log\alpha+O(1).
\]
The binomial term contributes approximately $ \alpha^{-k}
        \leq
        \alpha^{O(1)}
        (np_{\rm src})^{\log\alpha}
        2^{-\log^2\alpha}$, while the product term
contributes $\alpha^{O(1)}
2^{\frac12(\log\alpha)^2}$. Combining these two
estimates gives the stated bound. The details are in
\mbox{Appendix~\ref{app:max-load-optimization}}.
\end{proof}

\begin{table}[t]
\centering
\small
\begin{tabular}{c|c|c}
\textbf{Reference} & \textbf{Method} & \textbf{Tail at threshold $R\frac{\log n}{\log\log n}$} \\
\hline
Ideal hashing
& classical balls-and-bins
& $n^{1-R+o(1)}$
\\[1.2ex]

\cite{10.1145/3717823.3718208}

& potential method
& $O(R^{-2})$
\\[1.2ex]

\cite{bshouty2026notesecondorderexpectedmaximumload}
& optimized potential method
& $
O\!\left(
\frac{(\log\log n)^2}{R^2\ell^{\,2-2/R}}
\right)$
\\[1.2ex]

\cite{bshouty2026notesecondorderexpectedmaximumload}
& fixed bin $+$ union bound
& 
$2^{\ell-\beta_R^2+O(\beta_R)}$
\\[1.2ex]

This work
& fixed bin $+$ union bound
& $
2^\ell(np_{\rm src})^{\beta_R}2^{-\frac{\beta_R^2}{2}+O(\beta_R)}
$
\end{tabular}
\caption{Comparison of maximum-load upper-tail bounds in the balanced case $m=n=2^\ell$, where \(\beta_R=\log \lb\frac{R\log n}{\log\log n}\rb\).}
\label{tab:maxload}
\end{table}

Applying the preceding fixed-bin estimate and then taking a union bound over the \(n=2^\ell\) bins gives the following maximum-load comparison in the balanced case (see Table~\ref{tab:maxload}). 
\section{Stochastic Processes and Random Walks}\label{sec:RWs}
This section applies the product-moment framework to concentration for dependent stochastic processes, with Markov chains as the main example. Theorem~\ref{thm: spectral bound} gives a fixed-index product bound for Markov-chain observables in terms of the gaps between the sampled times. Averaging this bound over a uniformly random \(k\)-subset and using negative association of the gaps gives a spectral tail bound of the usual scale. We then use the same viewpoint to recover a mixing-time-based bound: Theorem~\ref{thm: mixing time} applies the spectral product bound to the \(T\)-step chain \(M^T\), where \(T\) is a total-variation mixing time, and obtains a Chernoff-type upper tail depending on \(t/T\). In the small-deviation regime, this yields an exponent of order \(\delta^2\mu t/((1-\mu)T)\).
\\

Let $X_1,\ldots,X_t$ be random variables taking values in $[0,1]$ with $\E[X_i]=\mu$, $i=1,\ldots, t$, and let
$
X:=\sum_{i=1}^t X_i
$.
The basic principle is that concentration can be obtained from upper bounds on products
$
\E[X_{i_1}\cdots X_{i_k}]
$
for fixed tuples of indices
$
1\leq i_1<\cdots<i_k\leq t
$.
In particular, consider the following abstract framework. Suppose that---for all $k$ or only for those $k$ of interest---we have for every fixed tuple $i_1<\cdots<i_k$ a product-moment bound of the form
\begin{equation}
\label{eqn:corr-decay1}
\E[X_{i_1}\cdots X_{i_k}]
\leq
c\prod_{j=1}^{k-1}\psi_\mu(i_{j+1}-i_j),
\end{equation}
where $\psi_\mu$ is a non-increasing function of the gap between consecutive sampled indices, and $c$ depends only on the initial distribution. This type of estimate says that dependence between sampled variables decays as the sampled times $i_j$ become farther apart. The exact form of $\psi_\mu$ depends on the process under consideration. 
\iffalse ***
Note in particular that when $X_i \in \{0,1\}$, the following "correlation decay on the line" bound suffices for (\ref{eqn:corr-decay1}) to hold---where we can take $C = c$ if $c \geq 1$: 
\begin{equation}
\label{eqn:corr-decay1}
\forall j~(1 \leq j \leq k-1), 
\E[X_{i_{j+1}} \bigm| X_{i_1} = X_{i_2} = \cdots = X_{i_j} = 1]
\leq
C \ddot \psi_\mu(i_{j+1}-i_j). 
\end{equation}
*** \fi 

Now, in order to bound $\E[X_{i_1}\cdots X_{i_k}]$ for a $k$-tuple $(i_1,\ldots,i_k)$ sampled uniformly at random, we show that the gaps
$$
g_0=i_1,
\qquad
g_j=i_{j+1}-i_j
\quad\text{for }j=1,\ldots,k-1,
$$
 as functions of a uniformly sampled $k$-subset $(i_1,\ldots, i_k)$, are negatively associated. Consequently, for non-increasing $\psi_\mu$,
$$
\E\left[\prod_{j=1}^{k-1}\psi_\mu(g_j)\right]
\leq
\prod_{j=1}^{k-1}\E[\psi_\mu(g_j)].
$$
Thus, if
$$
p_k:=\E[\psi_\mu(G)]
$$
for a gap $G$ with the relevant marginal distribution, then the product-moment hypothesis implies
$$
\E S_k(X_1,\ldots,X_t)
\leq
c\binom{t}{k}p_k^{k-1}.
$$
This, in turn, gives an upper tail estimate for \(\Pr[X\geq \alpha t]\). The final step is to optimize this finite-degree bound over $k$.\\

We note that the abstract framework above applies whenever one can establish suitable product-moment bounds for fixed tuples. For example, in the binary case $X_i\in\{0,1\}$ the inequality follows from a \emph{correlation decay} guarantee of the form
$$
\Pr\left[
X_{i_j}=1
\mid
X_{i_1}=\cdots=X_{i_{j-1}}=1
\right]
\leq
\psi_\mu(i_j-i_{j-1}), \qquad j=2,\ldots,k.$$

In our main application below, we assume that the stochastic process is a Markov chain and recover an asymptotically optimal tail bound.

Let
$
X_i=f_i(v_i)
$, 
where $(v_1,\ldots,v_t)$ is a walk on a finite-state Markov chain on a state space $[n]$ and $f_i:[n]\to[0,1]$ have common stationary mean $\E_\pi[f_i(v_i)]=\mu$. In this setting, Theorem~\ref{thm: spectral bound} gives the concrete choice
$$
\psi_\mu(g)=\mu+\lambda(M)^g(1-\mu).
$$
We then use the blocking idea of \cite{Chung2012ChernoffHoeffdingBF} to obtain a bound depending only on the total-variation mixing time. 

\subsection{Spectral product bounds for Markov chains}
In this subsection, using standard linear algebra, we prove the fixed-tuple product bound for an arbitrary ergodic finite-state Markov chain.
\begin{theorem}\label{thm: spectral bound}
Let $M$ be an ergodic Markov chain on $[n]$ with stationary distribution $\pi$. $M$ also denotes the transition matrix. Let \(\lambda:=\lambda(M)\) be the \(L_2(\pi)\)
spectral norm of \(M\).
Let $\varphi$ be the initial distribution of $v_1$ on $[n]$. Fix
$
1\leq i_1<\cdots<i_k\leq t$,
and let $g_j$, $j\in[k-1]$ be defined as before.
Then
$$
\E_\varphi\left[\prod_{j=1}^k f_{i_j}(v_{i_j})\right]
\leq
\|\ph\|_{\pi^{-1}}\sqrt \mu
\prod_{j=1}^{k-1}
\left(\mu+\lambda^{g_j}(1-\mu)\right),
$$
 where $\|\ph\|_{\pi^{-1}}^2 = \sum_{v\in[n]}\frac{\varphi^2(v)}{\pi(v)}$.
Moreover,
\iffalse
$$
\E_\varphi\left[\prod_{j=1}^k f_{i_j}(v_{i_j})\right]
\leq
\left(
p+\sqrt p\,\lambda^{g_0}\|h-1\|_{2,\pi}
\right)
\prod_{j=1}^{k-1}
\left(p+(1-p)\lambda^{g_j}\right).
$$
\fi
if $\varphi=\pi$, then
$$
\E_\pi\left[\prod_{j=1}^k f_{i_j}(v_{i_j})\right]
\leq
\mu\prod_{j=1}^{k-1}
\left(\mu+\lambda^{g_j}(1-\mu)\right).
$$
\end{theorem}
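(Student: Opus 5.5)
The plan is to write the product moment as an inner product in $L_2(\pi)$, so that each sampled time contributes one operator, and then bound the operators one at a time. For a function $f:[n]\to[0,1]$ let $D_f$ be multiplication by $f$. Let $h_1$ be the density of the law of $v_{i_1}$ with respect to $\pi$. Using the Markov property in the function convention,
\[
\E_\varphi\Bigl[\prod_{j=1}^k f_{i_j}(v_{i_j})\Bigr]
=\bigl\langle h_1,\; D_{f_{i_1}}M^{g_1}D_{f_{i_2}}M^{g_2}\cdots M^{g_{k-1}}D_{f_{i_k}}\mathbf 1\bigr\rangle_\pi .
\]
Since $h_1=\widetilde M^{\,i_1-1}h$, where $h=d\varphi/d\pi$, and $\widetilde M$ is a contraction on $L_2(\pi)$ (it is a stochastic operator with stationary distribution $\pi$), we get $\|h_1\|_{2,\pi}\le\|\varphi\|_{\pi^{-1}}$. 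When $\varphi=\pi$ we simply have $h_1=\mathbf 1$.

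Next I would symmetrize by splitting each $D_f=D_{\sqrt f}D_{\sqrt f}$ and regrouping. This gives
\[
\langle \sqrt{f_{i_1}}\,h_1,\; A_1A_2\cdots A_{k-1}\sqrt{f_{i_k}}\rangle_\pi,
\qquad A_j:=D_{\sqrt{f_{i_j}}}\,M^{g_j}\,D_{\sqrt{f_{i_{j+1}}}} .
\]
Two easy facts handle the ends: $\|\sqrt{f_{i_k}}\|_{2,\pi}=\sqrt\mu$, and $\|\sqrt{f_{i_1}}h_1\|_{2,\pi}\le\|h_1\|_{2,\pi}$ because $f\le1$. Cauchy--Schwarz then reduces the general statement to the operator-norm bound
\[
\|A_j\|_{L_2(\pi)\to L_2(\pi)}\le \mu+\lambda^{g_j}(1-\mu).
\]

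For the stationary case I would not use Cauchy--Schwarz on the first factor. Instead I would pass $\sqrt{f_{i_1}}$ to the other side, so that both end vectors are $\sqrt{f_{i_1}}$ and $\sqrt{f_{i_k}}$, each of norm $\sqrt\mu$. This produces the leading factor $\mu$ in place of $\|\varphi\|_{\pi^{-1}}\sqrt\mu$.

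The main step is the operator bound. For $\|x\|=1$ decompose $\sqrt{f'}x=a\mathbf 1+y$ with $y\perp\mathbf 1$ (here $f'$ is the function on the right of $A_j$). Then $M^g(a\mathbf 1+y)=a\mathbf 1+M^gy$ with $\|M^gy\|\le\lambda^g\|y\|$. Cauchy--Schwarz gives $|a|\le\sqrt\mu$, and $a^2+\|y\|^2\le1$. So
\[
\|A_jx\|\le|a|\sqrt\mu+\lambda^g\sqrt{1-a^2},
\]
and at $a=\sqrt\mu$ this is only $\mu+\lambda^g\sqrt{1-\mu}$, which is weaker than the target.

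\textbf{This is the expected obstacle}: obtaining the factor $(1-\mu)$ rather than $\sqrt{1-\mu}$. To get it I would try to exploit that the mean-one component and the mean-zero component interact through $\sqrt f$ on both sides at once. Concretely, I would bound the bilinear form $\langle\sqrt f u, M^g\sqrt{f'}x\rangle$ by splitting both $\sqrt f u$ and $\sqrt{f'}x$ into mean and mean-zero parts. The mean--mean term gives $ab\le\mu$. The cross terms vanish, since $M^g$ fixes $\mathbf 1$, $\widetilde M$ fixes $\mathbf 1$ as well, and $M^g$ maps $\mathbf 1^\perp$ into itself. The fluctuation term is at most $\lambda^g\|y\|\|y'\|\le\lambda^g\sqrt{(1-a^2)(1-b^2)}$. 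Maximizing $ab+\lambda^g\sqrt{(1-a^2)(1-b^2)}$ over $a,b\in[0,\sqrt\mu]$ gives exactly $\mu+\lambda^g(1-\mu)$, attained at $a=b=\sqrt\mu$. The function is increasing in both variables there, since $\lambda^g\le1$.

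This two-sided splitting is the reason for putting $\sqrt f$ on both sides of each $M^{g_j}$. Once the operator bound holds for each $A_j$, the two displayed estimates follow by multiplying the operator norms along the chain.
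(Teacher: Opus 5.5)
Your argument follows the paper's proof step for step. You use the same representation via $\widetilde M^{\,i_1-1}h$ and the same symmetrization with $\sqrt f$ on both sides of each $M^{g_j}$. You split the bilinear form $\langle \sqrt f u, M^g\sqrt{f'}x\rangle_\pi$ into mean and mean-zero parts in the same way, with vanishing cross terms and fluctuation term at most $\lambda^g\sqrt{(1-a^2)(1-b^2)}$, and you finish with the same Cauchy--Schwarz at the two ends.

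The one step that fails as you justify it is the final maximization. The function $F(a,b)=ab+\lambda^g\sqrt{(1-a^2)(1-b^2)}$ is not increasing in each variable on $[0,\sqrt\mu]^2$. For instance, $F(a,0)=\lambda^g\sqrt{1-a^2}$ is decreasing in $a$. When $\lambda^g=1$, writing $a=\cos\alpha$, $b=\cos\beta$ gives $F=\cos(\alpha-\beta)$, which for fixed $b$ peaks at $a=b$. Knowing only that the partial derivatives are nonnegative at the corner $(\sqrt\mu,\sqrt\mu)$ does not show that the corner is a global maximum.

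The value you claim is still correct, and the paper gets it from a one-line inequality. Since
\[
(1-a^2)(1-b^2)=(1-ab)^2-(a-b)^2\le(1-ab)^2,
\]
you get
\[
F(a,b)\le ab+\lambda^g(1-ab)=\lambda^g+(1-\lambda^g)ab\le\lambda^g+(1-\lambda^g)\mu=\mu+(1-\mu)\lambda^g .
\]
This uses $0\le\lambda^g\le 1$ and $ab\le\mu$. Also work with $|a|,|b|$ rather than $a,b$, since the means of $\sqrt f u$ and $\sqrt{f'}x$ need not be nonnegative. With that replacement your proof is complete.
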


\begin{proof}[Proof Sketch]
Write $P_i$ for the diagonal operator with entries $f_i$ and set
$H_i=P_i^{1/2}$. By the Markov property, the fixed-tuple product moment can be
written as a product of operators
\[
\langle h,M^{i_1-1}P_{i_1}M^{g_1}P_{i_2}\cdots M^{g_{k-1}}P_{i_k}1\rangle_\pi,
\]
where $h=d\varphi/d\pi$. After inserting the square roots $H_i$, the problem
reduces to bounding
\[
\|H_iM^gH_j\|_{2,\pi}.
\]
The orthogonal decomposition into constants and mean-zero functions, together
with the definition of \(\lambda(M)\), gives
\[
\|H_iM^gH_j\|_{2,\pi}\leq \mu+\lambda^g(1-\mu).
\]
Multiplying these bounds over the gaps gives the claimed product-moment
estimate. The full proof is given in Appendix~\ref{app:spectral-product-bound}.
\end{proof}

To deal with the expectation over a random choice of $(i_1,\ldots, i_k)$, we prove an elementary fact about the gaps induced by a uniformly sampled $k$-subset of $[t]$.

\begin{lemma}[Gap distribution for a random $k$-subset]
Let $g_0,\ldots, g_{k-1}$ be defined as before, with $g_k = t+1-i_k$.
Then $g_0,\ldots,g_{k}$ are negatively associated.
\end{lemma}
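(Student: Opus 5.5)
The plan is to realize the gap vector as a conditioned vector of independent geometric random variables and then invoke Lemma~\ref{lemma: NA}.

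Write $I=\{i_1<\cdots<i_k\}$, uniform among $k$-subsets of $[t]$. The map
\[
(i_1,\ldots,i_k)\mapsto(g_0,\ldots,g_k), \qquad g_0=i_1,\quad g_j=i_{j+1}-i_j,\quad g_k=t+1-i_k,
\]
is a bijection. Its image is the set of integer vectors with $g_j\ge 1$ for every $j$ and $\sum_{j=0}^k g_j=t+1$. Hence $(g_0,\ldots,g_k)$ is uniformly distributed on the compositions of $t+1$ into $k+1$ positive parts.

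Next, let $Y_0,\ldots,Y_k$ be i.i.d.\ geometric random variables on $\{1,2,\ldots\}$ with $\Pr[Y_j=y]=(1-q)q^{y-1}$ for some fixed $q\in(0,1)$. For any composition $(y_0,\ldots,y_k)$ of $t+1$,
\[
\Pr[Y_0=y_0,\ldots,Y_k=y_k]=(1-q)^{k+1}q^{\,t+1-(k+1)}.
\]
This value does not depend on the particular composition. So the conditional law of $(Y_0,\ldots,Y_k)$ given $\sum_j Y_j=t+1$ is uniform on compositions, which is exactly the law of $(g_0,\ldots,g_k)$.

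The geometric pmf is log-concave on the integers. On its support, $p(y)^2=p(y-1)p(y+1)$ holds with equality. Off the support the pmf is zero, and the support $\{1,2,\ldots\}$ is an interval with no internal zeros, so log-concavity holds there as well. Lemma~\ref{lemma: NA} then gives that the conditional distribution is negatively associated, and so $g_0,\ldots,g_k$ are negatively associated.

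The only step that needs care is confirming the uniformity of the conditional distribution and the log-concavity convention, including the boundary at $y=1$ where $p(0)=0$. Both are routine. As an alternative to geometric variables, one could condition i.i.d.\ variables uniform on $\{1,\ldots,t+1\}$ on their sum; this also works, since that pmf is log-concave and the sum constraint forces every $Y_j\le t+1-k$. The geometric choice is cleaner. The subsets of $\{g_0,\ldots,g_k\}$ used later, for instance $g_1,\ldots,g_{k-1}$, are negatively associated as well, because negative association is preserved under taking subcollections.
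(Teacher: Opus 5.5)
Your proposal is correct and follows the paper's proof essentially step for step. Both arguments use the bijection onto uniformly random compositions of $t+1$ into $k+1$ positive parts, realize that law as i.i.d.\ geometric variables conditioned on their sum (each composition has probability $(1-q)^{k+1}q^{t-k}$), and then apply Lemma~\ref{lemma: NA} using the log-concavity of the geometric pmf.
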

\begin{proof}
The tuple \(i_1<\cdots<i_k\) is uniquely determined by the gap vector \(g_0,\ldots,g_k\), since \(i_1=g_0\) and \(i_j=g_0+\cdots+g_{j-1}\). The gaps satisfy \(g_j\geq1\) and
\[
    \sum_{j=0}^k g_j=t+1.
\]
Therefore, the gap vector $(g_j)_{0\leq j\leq k}$ is uniform over all positive partitions of \(t+1\) into \(k+1\) parts.

Let \(g'_0,\ldots,g'_k\) be i.i.d. geometric random variables with \(\Pr[g'_j=s]=(1-q)q^{s-1}\), $s\in\mathbb Z_{+}$ for some \(q\in (0,1)\). The geometric probability mass function is log-concave, since \(p_s^2=p_{s-1}p_{s+1}\) on its support, which is the interval $\mathbb Z_+$. Thus, by Lemma~\ref{lemma: NA} $ (g'_0,\ldots,g'_k)\mid \sum_jg'_j=t+1$ are negatively associated.

For any \(y_0,\ldots,y_k\in\mathbb Z_+\) with \(\sum_j y_j=t+1\),
\[
    \Pr[g'_0=y_0,\ldots,g'_k=y_k]
    =
    (1-q)^{k+1}q^{\sum_j y_j-k-1}
    =
    (1-q)^{k+1}q^{t-k},
\]
which is constant when \(\sum_jg'_j=t+1\) is fixed. Hence,
\(    (g'_0,\ldots,g'_k)\mid \sum_jg'_j=t+1
\)
is uniform over all positive partitions of \(t+1\) into \(k+1\) parts and, therefore, has the same distribution as \((g_0,\ldots,g_k)\).  Hence, \(g_0,\ldots,g_k\) are negatively associated. 
\end{proof}

Next, we average the fixed-index product bound over a uniformly chosen $k$-subset of $[t]$. By Theorem~\ref{thm: spectral bound} for every fixed $k$-subset $\{i_1,\ldots, i_k\}$ with $1\leq i_1<\ldots<i_k\leq t$ we have
$$
\E_\varphi\left[\prod_{j=1}^k f_{i_j}(v_{i_j})\right]
\leq
c(\varphi)\cdot
\prod_{j=1}^{k-1}
\left(\mu+\lambda^{g_j}(1-\mu)\right),
$$
where $c(\varphi)=\begin{cases}
    \mu &\varphi =\pi,\\
    \sqrt{\mu}\|\varphi\|_{\pi^{-1}}&\text{for an arbitrary } \varphi.
\end{cases}$
\\
The random variables $g_0,\ldots, g_k$ are negatively associated and have the same marginal distribution. Since the map $g\mapsto \mu+\lam^g(1-\mu)$ is non-increasing, negative association gives
$$
\E_{I}\prod_{j=1}^{k-1} (\mu+\lam^{g_j}(1-\mu))\leq 
\left(\mu+(1-\mu)\E[\lambda^{g_0^{(k)}}]
\right)^{k-1}
$$
Consequently,
$$
\E S_k(f_1(v_1),\ldots,f_t(v_t))
\leq
c(\varphi)\binom{t}{k}
\left(
\mu+(1-\mu)\E[\lambda^{g_0^{(k)}}]
\right)^{k-1},
$$
which then implies concentration.
We now estimate the
marginal expectation \(\mathbb E[\lambda^{g_0}]\), given $g_0=i_1$.

First, note that \[\Pr[i_1=s] = \binom{t-s}{k-1}/\binom{t}{k} = \frac{k}{t}\prod_{\ell= 1}^{s-1}\lb1-\frac{k-1}{t-\ell}\rb\leq \frac{k}{t},\qquad s=1,\ldots, t-k+1\]
Thus,
$$
\E[\lambda^{i_1}]=
\sum_{s=1}^{t-k+1} \lambda^s\Pr[g_0=s]
\leq
\frac{k}{t}\sum_{s\geq1}\lambda^s=
\frac{\lambda}{1-\lambda}\frac{k}{t}.
$$
In other words, we set
$$
p_k(\lambda):=
\mu+(1-\mu)\frac{\lambda k/t}{1-\lambda},
$$

so that $$
\Pr[X\geq \alpha t]
\leq
c(\varphi)\frac{\binom{t}{k}}{\binom{\alpha t}{k}}
p_k(\lam)^{k-1}.
$$

\subsection{A mixing-time bound}
We now derive a concentration bound depending only on the mixing time of the chain. This setting is of interest when the spectral norm of the original transition matrix is not known or is uninformative. 
\begin{theorem}\label{thm: mixing time}
Let $M$ be an ergodic finite-state Markov chain with stationary distribution $\pi$, and let $T=T(\varepsilon)$ be its total-variation mixing time for some fixed $\varepsilon\leq 1/8$. Let $(v_1,\ldots,v_t)$ be a $t$-step walk, and assume for simplicity that $T$ divides $t$.
For each $i\in[t]$, let $f_i:[n]\to[0,1]$ satisfy
$
\E_\pi f_i=\mu.
$
Set
$
X:=\sum_{i=1}^t f_i(v_i)
$, $t_0:=t/T$.

Then for every $0<\mu<\alpha<1$ and every integer $1\leq k\leq \lfloor \alpha t_0\rfloor$,
$$
\Pr[X\geq \alpha t]
\leq
c(\varphi)
\frac{\binom{t_0}{k}}{\binom{\lfloor \alpha t_0\rfloor}{k}}
p_k^{k-1},
$$
where
$
p_k
=
\mu+(1-\mu)\frac{k}{t_0}
$ and $c(\varphi)=\begin{cases}
    \mu &\text{if }\varphi =\pi;\\
    \sqrt{\mu}\|\varphi\|_{\pi^{-1}}&\text{for arbitrary } \varphi.
\end{cases}$

In particular, if $$\alpha=(1+\delta)\mu, \qquad\delta=o(1),\qquad
\frac{\mu\delta}{1-\mu}=o(1)
\qquad\text{and}\quad
 \frac{\mu\delta^2}{1-\mu}t_0\to\infty,$$
then choosing
$$
k=
\left\lfloor
\frac{\mu\delta t_0}{3(1-\mu)}
\right\rfloor
$$
for large enough $t$, so that $k\geq 1$, gives
$$
\Pr[X\geq (1+\delta)\mu t]
\leq
c(\varphi)
\exp\left(
-\frac{\mu\delta^2}{6(1-\mu)}\frac{t}{T}
+
o\left(\frac{\mu\delta^2}{1-\mu}\frac{t}{T}\right)
\right).
$$
\end{theorem}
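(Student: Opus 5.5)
The plan is to use the blocking idea of \cite{Chung2012ChernoffHoeffdingBF}: split the walk into $T$ interleaved subsequences, apply the spectral product bound to each of them with the chain $M^T$, and then recombine the pieces. The recombination uses convexity in place of a union bound, so no factor $T$ is lost. For each offset $s\in[T]$, let $X^{(s)}:=\sum_{\ell=0}^{t_0-1}f_{s+\ell T}(v_{s+\ell T})$, so that $X=\sum_s X^{(s)}$. The subsequence $(v_{s+\ell T})_{\ell}$ is a walk of length $t_0$ on the chain $M^T$. Its initial distribution is $\varphi M^{s-1}$, and by Lemma~\ref{lem: M^T} we have $\lambda(M^T)\le 1/2$. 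The density of $\varphi M^{s-1}$ with respect to $\pi$ is $\widetilde M^{s-1}h$, where $h=d\varphi/d\pi$. Since $\widetilde M$ is a Markov operator with stationary distribution $\pi$, it is a contraction on $L_2(\pi)$. Hence $\|\varphi M^{s-1}\|_{\pi^{-1}}\le\|\varphi\|_{\pi^{-1}}$, so every block inherits the constant $c(\varphi)$. If $\varphi=\pi$, every block is stationary.

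The first step is the recombination. By Lemma~\ref{lem: z-cont}, $S_k$ of the $t_0$ summands of block $s$ is at least $\binom{X^{(s)}}{k}_{\mathrm{lin}}$. The map $z\mapsto\binom{z}{k}_{\mathrm{lin}}$ is nondecreasing and convex, being piecewise linear with nondecreasing slopes $\binom{\lfloor z\rfloor}{k-1}$. Suppose $X\ge\alpha t$, i.e.\ the average of $X^{(s)}$ over $s$ is at least $\alpha t_0$. Then Jensen's inequality gives
\[
\frac1T\sum_s S_k^{(s)}\ \ge\ \frac1T\sum_s\binom{X^{(s)}}{k}_{\mathrm{lin}}\ \ge\ \binom{\alpha t_0}{k}_{\mathrm{lin}}\ \ge\ \binom{\lfloor\alpha t_0\rfloor}{k}.
\]
Markov's inequality then yields $\Pr[X\ge\alpha t]\le \max_s \E S_k^{(s)}/\binom{\lfloor\alpha t_0\rfloor}{k}$.

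The second step bounds $\E S_k^{(s)}$. For each block I apply Theorem~\ref{thm: spectral bound} to $M^T$, average over a uniform $k$-subset of $[t_0]$, and use the negative association of the gaps exactly as in the preceding subsection. This gives $\E S_k^{(s)}\le c(\varphi)\binom{t_0}{k}p_k(\lambda)^{k-1}$ with $\lambda\le 1/2$. Since $\lambda/(1-\lambda)\le 1$, we get $p_k(\lambda)\le\mu+(1-\mu)k/t_0$, which is the claimed general bound.

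The third step is the asymptotic optimization, which I expect to be the main obstacle because of careful error bookkeeping rather than any new idea. I would write
\[
\ln\frac{\binom{t_0}{k}}{\binom{\lfloor\alpha t_0\rfloor}{k}}=\sum_{j<k}\ln\frac{t_0-j}{\lfloor\alpha t_0\rfloor-j}
\]
and expand it as $k\ln(1/\alpha)+\frac{k^2(1-\alpha)}{2\alpha^2t_0}$ plus lower-order terms. Similarly, $(k-1)\ln p_k\le k\ln\mu+\frac{(1-\mu)k^2}{\mu t_0}$ plus lower-order terms. With $\alpha=(1+\delta)\mu$, $\delta=o(1)$ and $\alpha\approx\mu$, the total log bound is
\[
-k\delta+\frac{3(1-\mu)}{2\mu}\cdot\frac{k^2}{t_0}+\text{error}.
\]
Choosing $k=\lfloor\mu\delta t_0/(3(1-\mu))\rfloor$ minimizes the main terms and gives $-\frac{\mu\delta^2 t_0}{6(1-\mu)}$. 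The remaining work is to check that every error term is $o(\mu\delta^2t_0/(1-\mu))$. These errors are of three kinds: the $O(k\delta^2)$ term from $\ln(1+\delta)$; the cubic terms $O(k^3/(\alpha t_0)^2)$ from expanding the logarithms, which are small because $k/(\mu t_0)=O(\delta/(1-\mu))$; and the floor effects of size $O(k/t_0)$ and $O(1)$. These are controlled by the hypotheses $\delta=o(1)$, $\mu\delta/(1-\mu)=o(1)$, which makes $p_k\approx\mu$ in relative terms, and $\mu\delta^2t_0/(1-\mu)\to\infty$. The last hypothesis also ensures $1\le k\le\lfloor\alpha t_0\rfloor$ for large $t$.
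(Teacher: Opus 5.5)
Your first two steps are the paper's proof. This covers the blocking into the $T$ interleaved walks of $M^T$, the contraction estimate $\|\varphi M^{s-1}\|_{\pi^{-1}}=\|\widetilde M^{s-1}h\|_{2,\pi}\le\|\varphi\|_{\pi^{-1}}$, Jensen for the convex nondecreasing interpolation $\binom{z}{k}_{\mathrm{lin}}$, and Theorem~\ref{thm: spectral bound} for $M^T$ with $\lambda(M^T)\le 1/2$ plus negative association of the gaps, giving $p_k(\lambda)\le\mu+(1-\mu)k/t_0$. So the general inequality is established. Writing the binomial ratio as the exact product $\prod_{j<k}\frac{t_0-j}{\lfloor\alpha t_0\rfloor-j}$ is, if anything, cleaner than the paper's entropy approximation.

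The gap is in the asymptotic step, and the paper's appendix has the same one: it effectively replaces $(k-1)\ln p_k$ by $k\ln p_k$. Since $\ln\mu<0$, your bound $(k-1)\ln p_k\le k\ln\mu+\frac{(1-\mu)k^2}{\mu t_0}+\cdots$ drops a positive term. In fact $(k-1)\ln p_k=k\ln p_k+\ln(1/p_k)$, and $\ln(1/p_k)\approx\ln(1/\mu)$ is none of your three error types. When $\mu\to0$ the hypotheses do not make it $o(\mu\delta^2t_0/(1-\mu))$.

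Better bookkeeping cannot fix this, because the displayed conclusion is then false. Take the two-state chain $M=\mathbf 1\pi$ with $\pi(1)=\mu$, so $T=1$, $t_0=t$, and the $v_i$ are i.i.d. Let $f_i=\mathbf 1_{\{1\}}$, $\varphi=\pi$, $\delta=1/\ln t$ and $\mu=(\ln t)^2\ln\ln t/t$. All hypotheses hold, with $\mu\delta^2t_0=\ln\ln t\to\infty$. Yet $X\sim\mathrm{Bin}(t,\mu)$ exceeds $(1+\delta)\mu t$, which is $\sqrt{\ln\ln t}$ standard deviations above its mean, with probability $(\ln t)^{-1/2+o(1)}$. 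The claimed bound is only $\mu(\ln t)^{-1/6+o(1)}=t^{-1+o(1)}$.

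What your computation does prove is either of the following:
\begin{itemize}
\item the asymptotic bound with prefactor $c(\varphi)/p_k\le c(\varphi)/\mu$, i.e.\ $1$ for $\varphi=\pi$ and $\|\varphi\|_{\pi^{-1}}/\sqrt{\mu}$ in general;
\item the stated form under the extra hypothesis $\ln(1/\mu)=o(\mu\delta^2t_0/(1-\mu))$, e.g.\ $\mu$ bounded away from $0$.
\end{itemize}

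There are two smaller slips.
\begin{itemize}
\item The quadratic term of the binomial ratio is $\frac{(1-\alpha)k^2}{2\alpha t_0}$, not $\frac{(1-\alpha)k^2}{2\alpha^2 t_0}$. Only the former yields your total coefficient $\frac{3(1-\mu)}{2\mu}$.
\item A blanket cubic error $O(k^3/(\alpha t_0)^2)$ is $o(\mu\delta^2t_0/(1-\mu))$ only if $\delta=o((1-\mu)^2)$. The hypotheses do not give this when $\mu\to1$, since they only force $\delta=o(1-\mu)$.
\end{itemize}
The one-sided estimate $\ln\frac{t_0-j}{\alpha t_0-j}\le\ln\frac1\alpha+\frac{(1-\alpha)j}{\alpha t_0(1-j/(\alpha t_0))}$, from convexity of $-\ln(1-x)$, handles both.
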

Thus the product-moment argument recovers the mixing-time concentration scale
$$
\exp\left(
-\Theta\left(\frac{\mu\delta^2t}{T}\right)
\right).
$$

\begin{proof}
Since $\epsilon\leq1/8$, by Lemma~\ref{lem: M^T} we have
$$
\lambda(M^T)\leq \frac12.
$$ 
Let $t_0:=t/T$ and partition the walk into subwalks with transition matrix $M^T$ $$X^{(r)}:=\sum_{\ell=0}^{t_0-1} f_{r+\ell T}(v_{r+\ell T}).$$
Then
$
X=\sum_{r=1}^T X^{(r)}.
$
Let $S_k^{(r)}:=S_k((f_{r+\ell T}(v_{r+\ell T}))_{\ell=0,\ldots,t_0-1})$, which coincides with $\binom{X^{(r)}}{k}$ for indicators.
Then following \cite{Chung2012ChernoffHoeffdingBF}, we use $\frac{1}{T}\sum_{r=1}^TS_k^{(r)}$ as an analogue of their exponential moment.
In particular, we claim that
\[
\Pr[X\geq \alpha t]\leq \frac{1}{T}\sum_{r=1}^T\frac{\E[\Skr]}{\binom{\lf\alpha t_0\rf}{k}}
\]
Indeed, in the case $f_i\in\{0,1\}$, this is simply because $\frac{X}{T}=\frac{1}{T}\sum X^{(r)}\geq \alpha t_0$ implies $$
\binom{\alpha t_0}{k}\leq \binom{\frac{1}{T}\sum X^{(r)}}{k}\leq \frac{1}{T}\sum_{r=1}^T\binom{X^{(r)}}{k},
$$
where the second inequality is by Jensen, since $\binom{x}{k}$ is convex in $x$. If $f_i\in[0,1]$, then by Lemma~\ref{lem: z-cont} we have
\[
\frac1T\sum_{r=1}^T S_k^{(r)}
\ge
\frac1T\sum_{r=1}^T
\binom{X^{(r)}}{k}_{\mathrm{lin}}
\]
and by convexity and monotonicity
\[\frac1T\sum_{r=1}^T
\binom{X^{(r)}}{k}_{\mathrm{lin}}
\ge
\binom{\frac1T\sum_{r=1}^T X^{(r)}}{k}_{\mathrm{lin}}\geq \binom{\left\lf \frac1T\sum_{r=1}^T X^{(r)}\right\rf}{k}\geq \binom{\alpha t_0}{k}
\]

Thus, it remains to bound $\E[\Skr]$. Since
\(\widetilde M\) is an \(L_2(\pi)\)-contraction, 
\[\|\varphi M^{r-1}\|_{\pi^{-1}}=\| \widetilde M^{r-1}h\|_{2,\pi}\le\|h\|_{2,\pi} =\|\varphi\|_{\pi^{-1}},\] and by Theorem~\ref{thm: spectral bound} we have \[
\E[\Skr]\leq c(\varphi)\cdot\binom{t_0}{k}\lb\mu + (1-\mu)\frac{k}{t_0}\rb^{k-1}=c(\varphi)\binom{t_0}{k}p_k^{k-1}.
\]

Thus, ignoring floors, we obtain
$$
\Pr[X\geq \alpha t]
\leq
c(\varphi)
\frac{\binom{t_0}{k}}{\binom{\alpha t_0}{k}}
p_k^{k-1}.
$$
The stated Chernoff-style bound follows by optimizing this expression over $k$; the calculation is deferred to Appendix~\ref{app:mixing-optimization}.
\end{proof}

\printbibliography
\appendix
\section{Deferred Proofs of Section~\ref{sec:read}}
\subsection{Proof of Theorem~\ref{thm:d1-d2-improved}}\label{app:read-d-thm-regular}
The proof idea is that if the number of sampled right vertices $k$ is small enough, then the maximum left-degree in the induced subgraph is typically much smaller than $d_L$.

Formally, let $I=\{i_1,\ldots,i_k\}$ be a uniformly random $k$-subset of $[n]$, where $k\leq\lfloor r/d_R\rfloor$, so that the vertices in $I$ are adjacent to at most $kd_R\leq r$ left variables, which are independent. Let $\Lambda_k$ denote the maximum left-degree in the subgraph induced by $I$, and assume that for some $\veps_\Lambda\in (0,1)$ we have
$$
\Pr[\Lambda_k>d_0]\leq \veps_\Lambda.
$$
If $\Lambda_k\le d_0$, then the induced family is read-$d_0$, otherwise, it is still read-$d_L$. By Lemmas~\ref{lem:sym-exp} and~\ref{lem: read-Delta}
$$
\E\left[\exp\left(\lambda\sum_{j=1}^kY_{i_j}\right)\right]\leq (1-\veps_\Lambda)\left(1-p+pe^{\lambda d_0}\right)^{k/d_0}+\veps_\Lambda\left(1-p+pe^{\lambda d_L}\right)^{k/d_L}.
$$
Combining this with the Jensen lower bound for the hypergeometric denominator gives
\begin{equation}\label{eq: tail-bound1}
\Pr[Y\ge \alpha n]\leq (1-\veps_\Lambda)e^{-kF_{d_0}(\lambda)}+\veps_\Lambda e^{-kF_{d_L}(\lambda)},
\end{equation}
where
$$
F_d(\lambda):=\lambda\alpha-\frac{1}{d}\ln(1-p+pe^{\lambda d}).
$$
We evaluate this bound at
\(
\lambda_0=\frac{1}{d_0}\ln\frac{\alpha(1-p)}{p(1-\alpha)}.
\)
Note that
\(F_{d_0}(\lambda_0)=\frac{1}{d_0}\D(\alpha\|p).
\)
It remains to show that the first term dominates the tail bound in~\eqref{eq: tail-bound1}, for which it would suffice to prove that
$$
\ln\frac{1}{\veps_\Lambda}\geq k\left(F_{d_0}(\lambda_0)-F_{d_L}(\lambda_0)\right).
$$
We first bound the gap between the two exponents. Let
\(
z:=\frac{\alpha(1-p)}{p(1-\alpha)}
\), so that
$$
F_{d_0}(\lambda_0)-F_{d_L}(\lambda_0)=\frac{1}{d_L}\ln(1-p+pz^{d_L/d_0})-\frac{1}{d_0}\ln(1-p+pz).
$$
Since $z>1$ and $d_L\ge d_0$, we have
$
1-p+pz^{d_L/d_0}\leq z^{d_L/d_0-1}(1-p+pz)
$. This implies that
$$
F_{d_0}(\lambda_0)-F_{d_L}(\lambda_0)\leq \left(\frac{1}{d_0}-\frac{1}{d_L}\right)\left(\ln z-\ln(1-p+pz)\right)=\left(\frac{1}{d_0}-\frac{1}{d_L}\right)\ln\frac{\alpha}{p}.
$$
Now, to estimate $\veps_\Lambda$, we use the union bound over the bad events that a left-vertex is adjacent to more than $d_0$ vertices in $I$. For a fixed left vertex $x$, if $|N(x)\cap I|\ge d_0+1$, then some $(d_0+1)$-subset of its $d_L$ neighbors is contained in $I$. Thus,
$$
\Pr[|N(x)\cap I|\ge d_0+1]\leq \binom{d_L}{d_0+1}\left(\frac{k}{n}\right)^{d_0+1}.
$$
Therefore, by the union bound, we obtain
$$
\veps_\Lambda\leq m\binom{d_L}{d_0+1}\left(\frac{k}{n}\right)^{d_0+1}\leq
 \frac{m}{(d_0+1)!}\lb\frac{d_Lk}{n}\rb^{d_0+1}=\frac{m}{(d_0+1)!}\lb\frac{d_Rk}{m}\rb^{d_0+1}
$$
since $d_Lm=d_Rn$.
In other words, it is either
$$
\ln\frac{1}{\veps_\Lambda}\geq (d_0+1)\ln\frac{n}{d_Lk}-\ln m+\ln((d_0+1)!)
$$
for an arbitrary $k\leq \lf r/d_R\rf$, or, more cleanly,
$$
\ln\frac{1}{\veps_\Lambda}\geq d_0\ln\frac{m}{r}-\ln r+\ln((d_0+1)!)
$$
when $k\approx r/d_R$.

By the assumption of the theorem, we conclude that
$$
\ln\frac{1}{\veps_\Lambda}\geq k\left(\frac{1}{d_0}-\frac{1}{d_L}\right)\ln\frac{\alpha}{p}\geq k\left(F_{d_0}(\lambda_0)-F_{d_L}(\lambda_0)\right).
$$
Hence, the bad term is at most the good term, and so
$$
\Pr[Y\ge \alpha n]\leq 2e^{-kF_{d_0}(\lambda_0)}=2\exp\left(-\frac{k}{d_0}\D(\alpha\|p)\right).
$$
Taking $k:=\min\left\{
\left\lfloor \frac r{d_R}\right\rfloor,
\lfloor\alpha n\rfloor
\right\}$ yields the stated bound.

\iffalse
\subsection{Proof of Corollary~\ref{cor:read-reg}}\label{app:read-cor1}
We can rewrite the assumptions on $r$ as
$$
\ln r\leq \frac{3}{2}
\sqrt{
\frac{r\ln(\alpha/p)\ln(m/r)}
{d_R}
}
\quad
\text{and}
\quad
\frac{4r\ln(\alpha/p)}{d_R\ln(m/r)}\leq d_L^2.
$$
The second inequality immediately implies $d_0\leq d_L$ by the definition of $d_0$. Also note that
$$
d_0\ln\frac{m}{r}
\geq
2\sqrt{
\frac{r\ln(\alpha/p)\ln(m/r)}
{d_R}
}.
$$
Thus,
$$
d_0\ln\frac{m}{r}-\ln r
\geq
\frac{1}{2}
\sqrt{
\frac{r\ln(\alpha/p)\ln(m/r)}
{d_R}
}\geq \frac{r\D(\alpha\|p)}{d_Rd_0},
$$
where the second inequality is again by the definition of $d_0$.
Since $\ln((d_0+1)!)\geq 0$,
 we obtain
$$
d_0\ln\frac{m}{r}-\ln r+\ln((d_0+1)!)
\geq
\frac{r\ln(\alpha/p)}{d_Rd_0}.
$$
Therefore, the sufficient condition in Theorem~\ref{thm:d1-d2-improved} is satisfied, and the concentration bound follows
$$
\Pr\left[\sum_{i=1}^nY_i\geq \alpha n\right]
\leq
2\exp\left(
-\frac{\lfloor r/d_R\rfloor}{d_0}\D(\alpha\|p)
\right).
$$
\fi
\subsection{Proof of Theorem~\ref{thm:general-improvement}}\label{app:read-d-thm-general}
One of the estimates in the proof of the main theorem uses a concentration inequality based on fractional chromatic number~\cite{Janson2004LargeDF}. Specifically, the variant we use is proved by the same argument, but the independence assumption inside each cover class is weakened to negative association.

\begin{lemma}
\label{lem:janson-na}
Let $(Z_x)_{x\in V}$ be random variables taking values in $[0,1]$, and set
$
Z:=\sum_{x\in V} Z_x 
$.
Suppose that $\{(V_j,w_j)\}_j$ is a fractional cover of $V$, i.e.
$V_j\subseteq V$, $w_j\geq 0$, and
$$
\sum_{j:x\in V_j} w_j\geq 1
\qquad\text{for every }x\in V.
$$
Assume moreover that, for each $j$, the variables $(Z_x)_{x\in V_j}$ are
negatively associated. Let
$$
W:=\sum_j w_j .
$$
Then, for every $t>0$,
$$
\Pr[Z\geq \E Z+t]
\leq
\exp\left(-\frac{2t^2}{W|V|}\right).
$$
\end{lemma}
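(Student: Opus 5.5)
We will follow Janson's fractional-cover argument for sums of dependent variables, with the independence inside each class replaced by negative association. Write $Z-\E Z=\sum_x (Z_x-\E Z_x)$. By the cover property, after shrinking some weights (or, equivalently, splitting each $x$'s coefficient), we can write $Z=\sum_j w_j Z^{(j)}$ with $Z^{(j)}:=\sum_{x\in V_j} c_{x,j} Z_x$. Here $c_{x,j}\in[0,1]$ and $\sum_j w_j c_{x,j}=1$ for every $x$. The simplest way to arrange this is to replace $V_j$ by subsets, which preserves negative association and only decreases $W$, so that $\sum_{j:x\in V_j}w_j=1$ exactly. Since restricting an NA family to a subset keeps it NA, we may assume from now on that the cover is exact. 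Then $Z=\sum_j w_j Z^{(j)}$ with $Z^{(j)}=\sum_{x\in V_j}Z_x$.

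Fix $\theta>0$. Set $q_j:=w_j/W$, so that $\sum_j q_j=1$. Convexity of the exponential (Jensen with weights $q_j$) gives
\[
e^{\theta(Z-\E Z)}=\exp\Bigl(\sum_j q_j\, W\theta (Z^{(j)}-\E Z^{(j)})\Bigr)\le \sum_j q_j \exp\bigl(W\theta (Z^{(j)}-\E Z^{(j)})\bigr).
\]
For each class, the functions $Z_x\mapsto e^{W\theta Z_x}$ are non-decreasing. Negative association, applied iteratively (splitting off one coordinate at a time, with products of nonnegative non-decreasing functions remaining non-decreasing), then yields
\[
\E e^{W\theta Z^{(j)}}\le \prod_{x\in V_j}\E e^{W\theta Z_x}.
\]
Hoeffding's lemma for $[0,1]$ variables bounds each centered factor by $e^{W^2\theta^2/8}$. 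Hence
\[
\E e^{\theta(Z-\E Z)}\le \sum_j q_j e^{W^2\theta^2|V_j|/8}.
\]

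The main obstacle is now the exponent $W^2|V_j|$: we need $\sum_j q_j e^{W^2\theta^2|V_j|/8}$ to be controlled by $e^{\theta^2 W|V|/8}$. Janson resolves this by choosing class-dependent parameters in a Hölder-type decomposition rather than a uniform $W\theta$. Concretely, we will use weights $q_j\propto w_j|V_j|^{1/2}$ or, more simply, accept the crude bound $|V_j|\le |V|$. The crude bound gives $e^{\theta^2W^2|V|/8}$, which loses a factor of $W$. Rather than accept that loss, we plan to follow Janson precisely. Apply the generalized Hölder inequality $\E\prod_j F_j^{w_j/W}\le\prod_j(\E F_j)^{w_j/W}$, using the multiplicative form
\[
e^{\theta(Z-\E Z)}=\prod_j e^{\theta w_j (Z^{(j)}-\E Z^{(j)})}.
\]
Hölder with exponents $p_j=W/w_j$ gives
\[
\E e^{\theta(Z-\E Z)}\le\prod_j\bigl(\E e^{\theta W(Z^{(j)}-\E Z^{(j)})}\bigr)^{w_j/W}\le \prod_j e^{\theta^2 W w_j|V_j|/8}=e^{\theta^2 W\sum_j w_j|V_j|/8}.
\]
Since the cover is exact, $\sum_j w_j|V_j|=\sum_x\sum_{j\ni x}w_j=|V|$. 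This gives $e^{\theta^2W|V|/8}$. Markov's inequality then yields
\[
\Pr[Z\ge\E Z+t]\le e^{-\theta t+\theta^2W|V|/8},
\]
and optimizing at $\theta=4t/(W|V|)$ gives $\exp(-2t^2/(W|V|))$.

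The delicate points to check are the reduction to an exact cover, and that the NA product inequality holds for the exponentials after centering. Centering only multiplies by a constant, so it is harmless.
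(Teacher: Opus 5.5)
Your argument is correct. Its one substantive change to Janson is factoring $\E e^{sZ^{(j)}}\le\prod_{x\in V_j}\E e^{sZ_x}$ inside a class by negative association, peeling off one coordinate at a time, and that is exactly the step the paper modifies.

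Where you differ is the outer decomposition. The paper defers to Janson's proof, which applies Jensen in the form
\[
e^{\theta(Z-\E Z)}\le\sum_j q_j\exp\bigl(\tfrac{\theta w_j}{q_j}(Z^{(j)}-\E Z^{(j)})\bigr)
\]
with the tuned weights $q_j\propto w_j|V_j|^{1/2}$. You mentioned this choice and then set it aside. With it, every exponent becomes $\theta^2S^2/8$ with $S=\sum_j w_j|V_j|^{1/2}$, and Cauchy--Schwarz gives $S^2\le W\sum_j w_j|V_j|=W|V|$.

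Your generalized H\"older step with exponents $W/w_j$ reaches $\exp(\theta^2W|V|/8)$ directly. The geometric mean turns $\prod_j e^{\theta^2Ww_j|V_j|/8}$ straight into the cover identity, so you need neither weight tuning nor Cauchy--Schwarz. Janson's route produces the intermediate quantity $S^2$, which can be smaller than $W|V|$ when class sizes differ, but the final bound is the same. So H\"older is not what Janson does, contrary to your attribution, but it is a perfectly good and somewhat slicker alternative.

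One small repair is needed. You cannot always get an exact cover just by shrinking each $V_j$. For example, take $V=\{x\}$ covered by two copies of $\{x\}$, each of weight $0.7$. You should also allow splitting $(V_j,w_j)$ into $(V_j,w')$ and $(V_j\setminus\{x\},w_j-w')$, which leaves $W$ unchanged. Alternatively, keep your coefficients $c_{x,j}\in[0,1]$. The same H\"older argument then goes through, because $\sum_j w_j\sum_{x\in V_j}c_{x,j}^2\le\sum_x\sum_{j\ni x}w_jc_{x,j}=|V|$.
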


\begin{proof}[Proof Sketch]
This follows from the same proof as Janson's fractional-cover Hoeffding
bound~\cite{Janson2004LargeDF}. The only point where independence
inside each cover class is used is to factor the exponential moment over that
class. Namely, for a cover class $V_j$ and $\lambda>0$, independence gives
$$
\E\exp\left(\lambda\sum_{x\in V_j} Z_x\right)
=
\E\prod_{x\in V_j} e^{\lambda Z_x}
=
\prod_{x\in V_j}\E e^{\lambda Z_x}.
$$
Under negative association, the equality is replaced by the inequality
$$
\E\exp\left(\lambda\sum_{x\in V_j} Z_x\right)
=
\E\prod_{x\in V_j} e^{\lambda Z_x}
\leq
\prod_{x\in V_j}\E e^{\lambda Z_x},
$$
since each function $z\mapsto e^{\lambda z}$ is increasing. This is all that
is needed for the upper-tail argument, and the rest of Janson's proof is
unchanged.
\end{proof}

We next prove the technical lemma that immediately implies Theorem~\ref{thm:general-improvement}.

\begin{lemma}\label{lem:read-technical}
\begin{enumerate}[label=(\roman*)]
\item In order to obtain~\eqref{eq: goal}
it suffices to have
$$
\ln\frac1{\veps_\Lam} \ge k\left(\frac1{d_0}-\frac1\Dl\right)\ln\frac{\alpha}{p}
\qquad\text{and}\qquad
\ln\frac1{\veps_L} \ge
\left[\left(
\frac{k}{d_0}-\frac{\lf r/\Dr\rf}{\Dl} \right)\ln\frac{\alpha}{p}
\right]_+.
$$
\item
Moreover, the following estimates may be used. First,
% $$
% \Pr[\Lam_k>d_0]\le
% \left(\frac{t\Dl}{n}\right)^{d_0+1}
% \frac{d_Lm/\Dl}{(d_0+1)!}.
% $$
% Hence we may take
\begin{equation}\label{eq: eps_Lambda}
\ln\frac{1}{\veps_\Lam}\geq(d_0+1)\ln\lb\frac{n}{k\Dl}\rb +\ln((d_0+1)!) - \ln\frac{d_Lm}{\Dl}.
\end{equation}
Second, if $r>d_Rk$, then
% $$
% \Pr[L_k>r]\le\exp\left(-(r-d_Rk)^2\max\left\{\frac{1}{2k\Dr^2},\frac{2}{m(\Dl(\Dr-1)+1)}\right\}
% \right).
% $$
% Equivalently, 
we may take
$$
\ln\frac1{\veps_L}\ge(r-d_Rk)^2\max\left\{\frac{1}{2k\Dr^2},\frac{2}{m(\Dl(\Dr-1)+1)}\right\}.
$$
\end{enumerate}
\end{lemma}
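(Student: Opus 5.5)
For part~(i), I would start from Lemma~\ref{lem:sym-exp} with $\lambda_0=\frac1{d_0}\ln z$, where $z:=\frac{\alpha(1-p)}{p(1-\alpha)}$. I then split $\E_I\E[\exp(\lambda_0\sum_{i\in I}Y_i)]$ according to three events: the good event $\{L_k\le r\}\cap\{\Lam_k\le d_0\}$, the event $\{L_k\le r\}\cap\{\Lam_k>d_0\}$, and the event $\{L_k>r\}$.

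On the good event, the variables $(Y_i)_{i\in I}$ depend on at most $r$ independent underlying variables and form a read-$d_0$ family. By the computation preceding~\eqref{eq: goal}, this contribution (after multiplying by $e^{-\lambda_0\alpha k}$) is at most $G$.

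On the second event the variables are still independent, but the family is only read-$\Dl$. Lemma~\ref{lem: read-Delta} then gives the term $e^{-kF_{\Dl}(\lambda_0)}$, exactly as in Appendix~\ref{app:read-d-thm-regular}. The estimate $1-p+pz^{\Dl/d_0}\le z^{\Dl/d_0-1}(1-p+pz)$ shows that this bad term is at most $G\cdot\exp\bigl(k(\frac1{d_0}-\frac1\Dl)\ln\frac\alpha p\bigr)$. Multiplying by its probability $\veps_\Lam$ gives at most $G$, under the first hypothesis of~(i).

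On $\{L_k>r\}$ independence is lost, so I would reduce to a smaller set. Put $k':=\lf r/\Dr\rf$ and let $J$ be a uniform $k'$-subset of $I$. Since $\sum_{i\in I}Y_i=\frac{k}{k'}\E_J\sum_{i\in J}Y_i$, Jensen gives
\[
\exp\Bigl(\lambda_0\sum_{i\in I}Y_i\Bigr)\le\E_J\exp\Bigl(\lambda_0\tfrac{k}{k'}\sum_{i\in J}Y_i\Bigr).
\]
Any $k'$ outputs touch at most $r$ underlying variables and are read-$\Dl$, so Lemma~\ref{lem: read-Delta} applies to $J$. It yields $(1-p+pz^{s})^{k'/\Dl}$ with $s=\frac{k\Dl}{k'd_0}$. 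Using the same inequality $1-p+pz^{s}\le z^{s-1}(1-p+pz)$ and comparing with $G=e^{-kF_{d_0}(\lambda_0)}$, the logarithm of the ratio should come out as at most $(\frac{k}{d_0}-\frac{k'}{\Dl})\ln\frac\alpha p$. Hence $\veps_L$ times this bad term is at most $G$, and summing the three pieces gives the factor $3$ in~\eqref{eq: goal}.

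\textbf{Main obstacle.} The delicate point is that conditioning on $\{L_k>r\}$ destroys the uniformity of $J$. I would handle this by noting that the bound for a fixed $J$ is pointwise in $J$: for every fixed $I$ and every $J\subseteq I$ of size $k'$, the bound holds deterministically. So it survives any conditioning on $I$, and this is what I would verify first.

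For part~(ii), the estimate~\eqref{eq: eps_Lambda} is a union bound over left vertices $x$. The event $|\Gamma(x)\cap I|\ge d_0+1$ forces some $(d_0+1)$-subset of $\Gamma(x)$ to lie in $I$, which has probability at most $\binom{\deg x}{d_0+1}(k/n)^{d_0+1}$. Since $\deg x\le\Dl$, I would write $\binom{\deg x}{d_0+1}\le \frac{\deg x}{\Dl}\cdot\frac{\Dl^{d_0+1}}{(d_0+1)!}$. Summing over $x$ then gives
\[
\veps_\Lam\le\frac{d_Lm}{\Dl}\cdot\frac{1}{(d_0+1)!}\Bigl(\frac{k\Dl}{n}\Bigr)^{d_0+1},
\]
using $\sum_x\deg x=d_Lm$.

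For $\veps_L$, note that $\E L_k\le\sum_i\Pr[i\in I]\deg(i)=d_Rk$, so it suffices to bound $\Pr[L_k\ge\E L_k+(r-d_Rk)]$. I would give two bounds and take the better one.
\begin{enumerate}[label=(\alph*)]
\item Expose $i_1,\ldots,i_k$ one at a time. The Doob martingale of $L_k$ has increments at most $\Dr$, since swapping one sampled output changes $|\Gamma(I)|$ by at most $\Dr$ (a standard fact for sampling without replacement). Azuma's inequality then gives $\exp(-(r-d_Rk)^2/(2k\Dr^2))$.
\item Write $L_k=\sum_x Z_x$ with $Z_x=\one\{\Gamma(x)\cap I\neq\emptyset\}$. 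Two left vertices conflict if they share a right neighbour; the resulting conflict graph has maximum degree at most $\Dl(\Dr-1)$, so it can be properly coloured with $\Dl(\Dr-1)+1$ colours. Within a colour class the sets $\Gamma(x)$ are disjoint. The indicators of a uniform $k$-subset are negatively associated, and monotone functions of disjoint blocks preserve negative association, so each class is negatively associated. Lemma~\ref{lem:janson-na} with $W=\Dl(\Dr-1)+1$ and $|V|=m$ then gives $\exp(-2(r-d_Rk)^2/(m(\Dl(\Dr-1)+1)))$.
\end{enumerate}
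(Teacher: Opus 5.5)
Your proposal is correct and follows the paper's proof. It uses the same three-way split of $\E_I\E[\exp(\lambda_0\sum_{i\in I}Y_i)]$, the same union bound for $\veps_\Lam$, and the same Azuma and negatively-associated Janson estimates for $\veps_L$. The only local difference is on $\{L_k>r\}$: the paper keeps $\lf r/\Dr\rf$ sampled outputs and bounds the remaining factors by $e^{\lambda_0}$, while your Jensen average over a random $\lf r/\Dr\rf$-subset with rescaled $\lambda$ gives a slightly smaller bad term; both reduce, via $1-p+pz^{q}\le z^{q-1}(1-p+pz)$ for $q\ge 1$, to the same exponent $\bigl(\frac{k}{d_0}-\frac{\lf r/\Dr\rf}{\Dl}\bigr)\ln\frac{\alpha}{p}$.
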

\begin{proof}
    \begin{enumerate}[label=(\roman*)]
    \item 
    The first condition is the same as in Theorem~\ref{thm:d1-d2-improved}. We prove the second condition analogously. 

    Note that when $k\leq \lf r/\Dr\rf$, the event $L_k>r$ is empty. Otherwise, set $s: = \lf r/\Dr\rf$ and $d:=\Dl$ as the fallback parameters and argue as follows. 
    The corresponding bad term is at most
    $$
e^{\lambda(k-s-\alpha k)}
(1-p+pe^{\lambda d})^{s/d},
$$
since $s$ right vertices depend on at most $s\Dr\leq r$ underlying variables, while the remaining $k-s$ variables are bounded by $1$.
Thus, we have
    \[
\ln\frac{\text{bad term}}{G}
=
\frac{k}{d_0}\D(\alpha\|p) + \lam(k-s-\alpha k) + \frac{s}{d}\ln(1-p+pe^{\lam d}).
\]
Since $D(\alpha\|p)+(1-\alpha)\ln z=
\ln\frac{\alpha}{p}
$ and we use $\lambda_0=\frac{1}{d_0}\ln z$ for $
z:=\frac{\alpha(1-p)}{p(1-\alpha)}$, we obtain
\begin{align*}
&\frac{k}{d_0}D(\alpha\|p)
-\lambda_0(\alpha k-k+s)
+
\frac{s}{d}\ln(1-p+pe^{\lambda_0 d})
\\
&=
\frac{k}{d_0}\left(D(\alpha\|p)+(1-\alpha)\ln z\right)
-
\frac{s}{d_0}\ln z
+
\frac{s}{d}\ln(1-p+pz^{d/d_0})
\\
&=
\frac{k}{d_0}\ln\frac{\alpha}{p}
+
s\left[
\frac{1}{d}\ln(1-p+pz^{d/d_0})
-
\frac{1}{d_0}\ln z
\right].
\end{align*}
Next, we bound the square bracket. Since
$
1-p+pz^{d/d_0}
\le
z^{d/d_0-1}(1-p+pz)
$,
we get
$$
\frac{1}{d}\ln(1-p+pz^{d/d_0})
\le
\left(\frac{1}{d_0}-\frac{1}{d}\right)\ln z
+
\frac{1}{d}\ln(1-p+pz).
$$
Therefore, our expression becomes
$$
\frac{k}{d_0}\ln\frac{\alpha}{p}
-\frac{s}{d}\ln\frac{z}{1-p+pz} =
\left(
\frac{k}{d_0}
-
\frac{s}{d}
\right)
\ln\frac{\alpha}{p},
$$
where the last equality is because
$$
1-p+pz
=
1-p+p\frac{\alpha(1-p)}{p(1-\alpha)}
=
(1-p)\left(1+\frac{\alpha}{1-\alpha}\right)
=
\frac{1-p}{1-\alpha}.
$$
We take the positive part as $\veps_L\leq 1$.

\item First, we bound $\veps_\Lam$. 
Note that if for some left vertex $j$ more than $d_0$ of its neighbors are in $I$, it is equivalent to saying that some $(d_0+1)$-subset of $N(j)$ is contained in $I$. Thus, since each right vertex has probability $k/n$ of being in $I$ and $\deg(j)\leq \Dl$, we have \begin{align*}
\Pr[\Lambda_k>d_0] &\leq \sum_{j=1}^m\binom{\deg(j)}{d_0+1}\frac{(k)_{d_0+1}}{(n)_{d_0+1}}\leq \frac{(k/n)^{d_0+1}}{(d_0+1)!}\sum_{j=1}^m\deg(j)^{d_0+1}\\
&\leq \frac{(k/n)^{d_0+1}}{(d_0+1)!} md_L\Dl^{d_0} = \lb\frac{k\Dl}{n}\rb^{d_0+1}\frac{md_L/\Dl}{(d_0+1)!}
.
% \footnote{When $\Dl = d_L$, this coincides with the bound obtained in the previous section.}
\end{align*}
If this is our choice of $\veps_\Lam$, then~\eqref{eq: eps_Lambda} follows.

Second, we bound $\veps_L$. To show that $\ln\frac{1}{\veps_L}\geq \frac{(r-d_Rk)^2}{2k\Dr^2}$, define the Doob martingale
\[
M_\ell:= \E[L_k\mid i_1,\ldots,i_\ell],
\qquad \ell=0,\ldots,k ,
\]
where $I={i_1,\ldots,i_k}$ is exposed sequentially. Revealing one right vertex can change the neighborhood size on the left by at most $\Dr$, so \(
|M_\ell-M_{\ell-1}| \le \Delta_R
\). Then by Azuma's inequality,
\[
\Pr[L_k - \E L_k > \varepsilon]
\le
\exp\!\left(
-\frac{\varepsilon^2}{2k\Delta_R^2}
\right).
\]

Therefore,
\[
\Pr[L_{k}>r]
\le
\exp\!\left(
-\frac{(r-\E[L_k])_+^2}{2k\Delta_R^2}
\right).
\]

For the second tail estimate, we follow Janson's approach~\cite{Janson2004LargeDF} to obtain
\begin{equation}\label{eq: Janson}
\Pr[L_k>r]
\le
\exp\left(
-\frac{2(r-\E L_k)_+^2}{\chi m}
\right),
\end{equation}
where
$
\chi\le \Dl(\Dr-1)+1
$.
Specifically, we use Lemma~\ref{lem:janson-na} with $V:=[m]$ and $Z_j=\bone[N(j)\cap I\neq \emptyset]$, $j=1,\ldots, m$, so that $L_k = Z=\sum_{j=1}^mZ_j$. 

Indeed, the indicators $(\bone_{i\in I})_{i\in[n]}$ are negatively associated, since $I$ is sampled without replacement~\cite{joagdev1983negative}. Consider the auxiliary graph on the left vertices, where $j$ and $j'$ are adjacent whenever
$
N(j)\cap N(j')\neq\emptyset
$.
If a set of left vertices is independent in this auxiliary graph, then the corresponding neighborhoods are disjoint. Hence the associated variables $Z_j$ are monotone non-decreasing functions of disjoint subfamilies of negatively associated variables, and are therefore negatively associated.

The fractional chromatic number $\chi$ of this auxiliary graph is at most its maximum degree plus one. Since each left vertex has at most $\Dl$ neighbors, and each such right neighbor is adjacent to at most $\Dr-1$ other left vertices, we have
$$
\chi\le \Dl(\Dr-1)+1.
$$
This gives~\eqref{eq: Janson}.

Finally, we use the simple expectation bound
$$
\E L_k\le
\E_S\left[\sum_{i\in I}\deg(i)\right]=\sum_{i=1}^n\deg(i)\Pr[i\in I]=\frac{k}{n}\sum_{i=1}^n\deg(i)
=d_Rk.$$
Thus, if $r>d_Rk$, then
$
(r-\E L_k)_+^2\ge (r-d_Rk)^2
$, which completes the proof.
\end{enumerate}
\end{proof}

% \subsection{Proof of Corollary~\ref{cor:right-degree-improvement}}\label{app:corr2 proof}
% We apply Lemma~\ref{lem:read-technical} with $d_0=\Dl$. Then the event $\Lam_k>d_0$ is empty, so it remains only to control the contribution of the event $L_k>r$.

\subsection{Proof of Corollary~\ref{cor:general improved dL}}\label{app:general improved dL}
Let
\(
A:=\frac{M}{d_Rk}
\),
\(
d_0:=\left\lceil
2\sqrt{k\frac{\ln(\alpha/p)}{\ln A}}
\right\rceil 
\).
Since $\ln(\alpha/p)=\Theta(1)$,
\[
\frac{k}{d_0}\D(\alpha\|p)
=
\D(\alpha\|p)\cdot
\Omega\left(
\min\left\{k,\sqrt{k\ln A}\right\}
\right).
\]

First, suppose $\Dl\leq d_0$. Then Theorem~\ref{thm: general baseline} gives
\[
\Pr\left[\sum_{i=1}^nY_i\geq \alpha n\right]
\leq
\exp\left(
-\frac{k}{\Dl}\D(\alpha\|p)
\right)
\leq
\exp\left(
-\frac{k}{d_0}\D(\alpha\|p)
\right),
\]
and the desired bound follows.

Now, suppose $\Dl>d_0$. We verify condition~\eqref{eq:cond1}. By definition of
$d_0$,
\[
d_0^2\ln A
\geq
4k\ln(\alpha/p),
\]
which implies
\[
\frac12 d_0\ln A
\geq
\frac{k\ln(\alpha/p)}{d_0}.
\]
Moreover, since $k\leq r/\Dr$ and $d_R\leq \Dr$, we have $d_Rk\leq r$.
Thus
\(
A=\frac{M}{d_Rk}\geq \frac{M}{r}\to\infty.
\)
Together with the assumption
\(
k\ln A\gg \ln^2(d_Rk),
\)
and $\ln(\alpha/p)=\Theta(1)$, this implies that, for sufficiently large parameters,
\[
d_0\ln A-\ln(d_Rk)
\geq
\frac12d_0\ln A.
\]
Combining the last two inequalities gives
\[
d_0\ln A-\ln(d_Rk)
\geq
\frac{k\ln(\alpha/p)}{d_0}.
\]
Since $\Dl>d_0$, this implies
\[
d_0\ln A-\ln(d_Rk)
\geq
k\left(\frac1{d_0}-\frac1\Dl\right)\ln\frac{\alpha}{p}.
\]
Finally, since $M=Ad_Rk$, we have
\[
(d_0+1)\ln A-\ln M
=
d_0\ln A-\ln(d_Rk).
\]
Therefore,
\[
(d_0+1)\ln\left(\frac{M}{d_Rk}\right)
+\ln((d_0+1)!)
-\ln M
\geq
k\left(\frac1{d_0}-\frac1\Dl\right)\ln\frac{\alpha}{p}.
\]
This is condition~\eqref{eq:cond1}.

Since $k\leq r/\Dr$, the vertices in $I$ are adjacent to at most $k\Dr\leq r$
left vertices. Applying the proof of
Theorem~\ref{thm:general-improvement} with the bad event $L_k>r$ omitted, using only the $\Lam_k$ bad event and
condition~\eqref{eq:cond1}, gives the factor $2$ instead of $3$:
\[
\Pr\left[\sum_{i=1}^nY_i\geq \alpha n\right]
\leq
2\exp\left(
-\frac{k}{d_0}\D(\alpha\|p)
\right).
\]
Substituting the lower bound on $k/d_0$ concludes the proof.

\subsection{Proof of Corollary~\ref{cor:general improved both}}
\label{app:general improved both}
We first verify condition~\eqref{eq:cond1}. Since \(d_Rk\le r\) and by the definition of \(d_0\), we have
\[
d_0^2\ln\frac{M}{d_Rk}
\ge
d_0^2\ln\frac{M}{r}
\ge
4k\ln\frac{\alpha}{p}.
\]
Therefore,
\[
\frac14 d_0\ln\frac{M}{d_Rk}\geq \frac{k}{d_0}\ln\frac{\alpha}{p}
\]
Next, the assumption
\[
\frac{r}{d_R}\ln\frac{\alpha}{p}\ln\frac{M}{r}
\geq
\ln^2 r
\]
implies, by the definition of \(d_0\), that
\[
d_0\ln\frac{M}{r}\geq 2\ln r\geq 2\ln (d_Rk)
\]
Therefore, we get
\[
d_0\ln\frac{M}{d_Rk}
-
\ln(d_Rk)
\ge
\frac{k}{d_0}\ln\frac{\alpha}{p}.
\]
Using \(\ln((d_0+1)!)\ge0\), we get
\[
(d_0+1)\ln\frac{M}{d_Rk}
+\ln((d_0+1)!)
-\ln M
\ge
k\left(
\frac1{d_0}
-
\frac1{\Delta_L}
\right)
\ln\frac{\alpha}{p}.
\]
This is condition~\eqref{eq:cond1}.

It remains to verify condition~\eqref{eq:cond2}. Since
\(k\le r/\widetilde d_R\leq r/d_R\), we have
\(
r-d_Rk
\ge
k(\widetilde d_R-d_R)
\), and by the definition of \(\widetilde d_R\),
\[
\frac{(r-d_Rk)^2}{2k\Delta_R^2}
\ge
\frac{k}{d_0}\ln\frac{\alpha}{p}\ge
\left(
\frac{k}{d_0}
-
\frac{\lfloor r/\Delta_R\rfloor}{\Delta_L}
\right)
\ln\frac{\alpha}{p}.
\]
Finally, applying Theorem~\ref{thm:general-improvement} gives
\[
\Pr\left[
\sum_{i=1}^nY_i\ge \alpha n
\right]
\le
3\exp\left(
-\frac{k}{d_0}D(\alpha\Vert p)
\right).
\]

\section{Deferred Proofs of Section~\ref{sec:lin hashing}}
\subsection{Proof of Corollary~\ref{cor:optimal k}}\label{app:max-load-optimization}
Applying Theorem~\ref{thm:main-balls} with \(\rho_j=p_{\rm src}\), we obtain
\begin{equation}
\label{eq:Ak-pointwise}
        \Pr[L_y\ge a]
        \le
        A_k:=
        \frac{\binom mk}{\binom ak}
        \frac1{n^k}
        \prod_{j=1}^{k-1}
        \bigl(1+(n-1)2^jp_{\rm src}\bigr).
\end{equation}

We choose $k$ to be
\(
        k:=\left\lf\log\frac{a}{mp_{\rm src}}\right\rf
\).
Since \(a=\alpha \frac mn\), we have
\[
        \log\frac{a}{mp_{\rm src}}
        =
        \log\frac1{np_{\rm src}}
        +
        \log\alpha.
\]

First, we estimate the binomial term. We get
$$
\ln \frac{\binom mk}{\binom ak} = k\ln\frac{m}{a} + \sum_{i=0}^{k-1}\ln\frac{1-i/m}{1-i/a},
$$
where, since $k^2\leq a$, 
$$
\sum_{i=0}^{k-1}\ln\frac{1-i/m}{1-i/a}\leq -\sum_{i=0}^{k-1}\ln\left(1-\frac{i}{a}\right)\leq 2\sum_{i=1}^{k-1}\frac{i}{a} = \frac{k(k-1)}{a}\leq 1.
$$
Therefore,
\[        \frac{\binom mk}{\binom ak}\frac1{n^k}
        \leq e
        \left(\frac{m}{an}\right)^k
        = e\alpha^{-k}.
\]
Using \(k\geq \log(1/(np_{\rm src}))+\log\alpha-1\), we get
\[
        \alpha^{-k}
        \lesssim
        \alpha
        (np_{\rm src})^{\log\alpha}
        2^{-\log^2\alpha}.
\]

It remains to estimate the product term. Let \(
        b:=\log\frac1{(n-1)p_{\rm src}}
\). 
Then
\[
        \log\prod_{j=1}^{k-1}
        \bigl(1+(n-1)p_{\rm src}2^j\bigr)
        =
        \sum_{j=1}^{k-1}\log(1+2^{j-b}).
\]
Since
\(
        \log(1+2^x)
        =
        x_+
        +
        \log(1+2^{-|x|})\) for $x\in\mathbb R$, where 
\(x_+:=\max\{x,0\}
\), we obtain
\[
        \sum_{j=1}^{k-1}\log(1+2^{j-b})
        \le
        \sum_{j=1}^{k-1}(j-b)_+
        +
        O(1)\leq
        \frac12\lb\log^2\alpha+\log \alpha\rb + C
\]
where the last inequality is because \(k=b+\log\alpha+O(1)\).

Combining the above, we obtain the desired bound
\[
        \Pr[L_y\ge \alpha m/n]
        \le
        C\alpha^{3/2}
        (np_{\rm src})^{\log\alpha}
        2^{-\frac12\log^2\alpha}.
\]
The maximum-load bound follows by multiplying by \(n\).

\subsection*{Intuition behind the choice of $k$. Comparison with the fully random case}
Since
\begin{equation}\label{eq: Ak ratio}
        \frac{A_{k+1}}{A_k}
        =
        \frac{m-k}{a-k}
        \left(
        \frac1n+\frac{n-1}{n}2^kp_{\rm src}
        \right),
\end{equation}
 \(A_k\) decreases as long as this ratio is less than \(1\).

If \(2^kp_{\rm src}\ll 1/n\), then the product term is negligible and the ratio is
approximately
\(
        \frac{m-k}{n(a-k)}
\). Thus, the optimal choice for $k$ in this case would be
\[
        k
        =
        \frac{na-m}{n-1}
        \approx
        a-\frac mn .
\]
This is exactly the ratio from \cite{siamdm/SchmidtSS95}.

Note that the product term is also $1$ in the case when $X_1,\ldots, X_m$ are fully independent, which, combined with the above choice of $k$, recovers the standard Chernoff-style bound\cite{siamdm/SchmidtSS95}.

Equivalently, it is sufficient that
\[
        \frac1p_{\rm src}\gg n2^{a-m/n}.
\]
In this regime,
\[
        \Pr[L_{\max}\ge a]
        \lesssim
        n\min_{1\le k\le a}
        \frac{\binom mk}{\binom ak}
        \frac1{n^k}.
\]

If \(2^kp_{\rm src}\gg 1/n\), then~\eqref{eq: Ak ratio} leads to
\[
        \frac{m-k}{a-k}2^kp_{\rm src} \approx1,
\]
which reduces to
\[
        k+\log\frac{m-k}{a-k}
        =
        \log\frac{1}{p_{\rm src}}+O(1).
\]
When \(k=o(a)\), this gives
\[
        k=
        \log\frac{a}{mp_{\rm src}}+O(1),
\]
which is the choice used in Corollary~\ref{cor:optimal k}.
\section{Deferred Proofs of Section~\ref{sec:RWs}}
\subsection{The spectral product bound: Proof of Theorem~\ref{thm: spectral bound}}\label{app:spectral-product-bound}
    Let \(M\) be an ergodic finite-state Markov chain with stationary distribution
\(\pi\).
Let \(f,h:[n]\to[0,1]\) satisfy
\(
\mathbb E_\pi f=\mathbb E_\pi h=\mu
\) and let \(H_f\) and \(H_h\) denote multiplication by \(\sqrt f\) and \(\sqrt h\),
respectively. Then
\[
\|H_fM^gH_h\|_{2,\pi}
\le
\mu+(1-\mu)\lambda(M)^g .
\]
Indeed, let \(u,v\in L_2(\pi)\) with
\(
\|u\|_{2,\pi}=\|v\|_{2,\pi}=1,
\)
and set
\[
a:=H_fu=\bar a\mathbf 1+a_\perp,
\qquad
b:=H_hv=\bar b\mathbf 1+b_\perp,
\]
where
\(
\bar a=\mathbb E_\pi a\), \(
\bar b=\mathbb E_\pi b
\)
and \(a_\perp,b_\perp\) are mean-zero. Since \(M\mathbf 1=\mathbf 1\) and
\(M\) preserves \(\pi\), the cross terms vanish, and therefore
\[
\langle a,M^gb\rangle_\pi
=
\bar a\bar b+\langle a_\perp,M^gb_\perp\rangle_\pi,
\]
and
\[
|\langle a,M^gb\rangle_\pi|
\le
|\bar a||\bar b|
+
\lambda(M)^g\|a_\perp\|_{2,\pi}\|b_\perp\|_{2,\pi}.
\]
By Cauchy–Schwarz inequality,
\[
|\bar a|^2
=
|\langle H_f\mathbf 1,u\rangle_\pi|^2
\le
\|H_f\mathbf 1\|_{2,\pi}^2
=
\mu,
\]
and similarly \(|\bar b|^2\le \mu\). Additionally, since \(0\le f,h\le 1\), we have
\(
\|a\|_{2,\pi}\le 1\), \(
\|b\|_{2,\pi}\le 1.
\)
 Thus
\[
\|a_\perp\|_{2,\pi}^2\le 1-|\bar a|^2,
\qquad
\|b_\perp\|_{2,\pi}^2\le 1-|\bar b|^2.
\]
Therefore,
\[
|\langle a,M^gb\rangle_\pi|
\le
|\bar a||\bar b|
+
\lambda(M)^g\sqrt{1-|\bar a|^2}\sqrt{1-|\bar b|^2}.
\]
Using $\sqrt{1-|\bar a|^2}\sqrt{1-|\bar b|^2}\leq 1- |\bar a||\bar b|$, since it is equivalent to $|\bar a|^2+|\bar b|^2\geq 2|\bar a\bar b|$. Therefore, as $|\bar a\bar b|\leq \mu$, 
we get
\[
|\langle a,M^gb\rangle_\pi|
\le
\mu+(1-\mu)\lambda(M)^g .
\]
This proves
\[
\|H_fM^gH_h\|_{2,\pi}
\le
\mu+(1-\mu)\lambda(M)^g .
\]

Now, fix \(1\le i_1<\cdots<i_k\le t\). For each $i$, let $P_i$ be the diagonal operator with entries $f_i$, and let \(H_i=P_i^{1/2}\). Then by the Markov property,
$$
\E_\varphi\left[\prod_{j=1}^k f_{i_j}(v_{i_j})\right]=
\varphi M^{i_1-1}P_{i_1}M^{g_1}P_{i_2}\cdots M^{g_{k-1}}P_{i_k}\mathbf 1 .
$$

If the walk starts from stationarity, then
\[
\mathbb E_\pi\left[\prod_{j=1}^k f_{i_j}(v_{i_j})\right]
=
\left\langle
H_{i_1}\mathbf 1,
H_{i_1}M^{g_1}H_{i_2}
\cdots
H_{i_{k-1}}M^{g_{k-1}}H_{i_k}\mathbf 1
\right\rangle_\pi .
\]
By Cauchy-Schwarz, and since
\[
\|H_{i_1}\mathbf 1\|_{2,\pi}
=
\|H_{i_k}\mathbf 1\|_{2,\pi}
=
\sqrt{\mu}
\]
we obtain
\[
\mathbb E_\pi\left[\prod_{j=1}^k f_{i_j}(v_{i_j})\right]
\le
\mu\prod_{j=1}^{k-1}
\left(\mu+(1-\mu)\lambda(M)^{g_j}\right).
\]

For a general initial distribution \(\varphi\), let
\(
h:=\frac{d\varphi}{d\pi}
\), so that
\begin{align*}
\mathbb E_\varphi\left[\prod_{j=1}^k f_{i_j}(v_{i_j})\right]&=
\left\langle
h,
M^{i_1-1}P_{i_1}M^{g_1}P_{i_2}\cdots
M^{g_{k-1}}P_{i_k}\mathbf 1
\right\rangle_\pi.\\
&=
\left\langle
(\widetilde M)^{i_1-1}h,
P_{i_1}M^{g_1}P_{i_2}\cdots M^{g_{k-1}}P_{i_k}\mathbf 1
\right\rangle_\pi,
\end{align*}
where \(\widetilde M\) is the \(L_2(\pi)\)-adjoint of \(M\). Since \(\widetilde M\) is the
time-reversal Markov operator, it is a contraction on \(L_2(\pi)\), and hence
\[
\|(\widetilde M)^{i_1-1}h\|_{2,\pi}
\le
\|h\|_{2,\pi}.
\]
Therefore,
\[
\mathbb E_\varphi\prod_{j=1}^k f_{i_j}(v_{i_j})
\le
\|h\|_{2,\pi}\sqrt{\mu}
\prod_{j=1}^{k-1}
\left(\mu+(1-\mu)\lambda(M)^{g_j}\right).
\]
Since
\[
\|h\|_{2,\pi}
=
\left(\sum_x\frac{\varphi(x)^2}{\pi(x)}\right)^{1/2}
=
\|\varphi\|_{\pi^{-1}},
\]
we get
\[
\mathbb E_\varphi\prod_{j=1}^k f_{i_j}(v_{i_j})
\le
\|\varphi\|_{\pi^{-1}}\sqrt{\mu}
\prod_{j=1}^{k-1}
\left(\mu+(1-\mu)\lambda(M)^{g_j}\right).
\]
This completes the proof.
\subsection{Optimization of mixing-time bound in Theorem~\ref{thm: mixing time}}\label{app:mixing-optimization}
To find the optimal choice of $k$, we start with $k=1$ and increase it as long as the bound improves. Namely, we increase $k$ while the ratio of the bound for $k+1$ to the bound for $k$ is less than $1$. This expression is equal to
\[
\frac{t_0-k}{\alpha t_0-k}p_{k}\lb\frac{p_{k+1}}{p_k}\rb^{k}
\]
Since $p_{k+1} = p_k+\frac{1-\mu}{t_0}$, we have $(p_{k+1}/p_k)^{k}\lesssim\exp(\frac{(1-\mu)k/t_0}{p_k})$. Hence, we need to solve
$$
\frac{t_0-k}{\alpha t_0-k}
p_{k}
\exp\left(
\frac{(1-\mu)k/t_0}{p_k}
\right)
	\leq1	.
$$
Taking natural logarithms gives
$$
\ln\left[
\left(\mu+(1-\mu)\frac{k}{t_0}\right)
\frac{t_0-k}{\alpha t_0-k}
\right]
+
\frac{(1-\mu)k/t_0}{\mu+(1-\mu)k/t_0}
\leq0.$$
The logarithmic part is
$$
\ln\left[
\left(\mu+(1-\mu)k/t_0\right)
\frac{1-k/t_0}{\alpha-k/t_0}
\right] = \ln\left[
\frac{\mu+(1-2\mu)k/t_0+O((k/t_0)^2)}
{\mu+(\alpha-\mu)-k/t_0}
\right].
$$
Expanding around $\alpha-\mu=0$ and $k/t_0=0$, we get
$$
\ln\left[
\frac{\mu+(1-2\mu)k/t_0+O((k/t_0)^2)}
{\mu+(\alpha-\mu)-k/t_0}
\right]=
-\frac{\alpha-\mu}{\mu}
+
2\frac{1-\mu}{\mu}k/t_0
+
O((\alpha-\mu)^2+(k/t_0)^2).
$$
While the second term is
$$
\frac{(1-\mu)k/t_0}{\mu+(1-\mu)k/t_0}=
\frac{1-\mu}{\mu}k/t_0+O((k/t_0)^2).
$$ Combining, we obtain $$\frac{k}{t_0}\approx\frac{\alpha-\mu}{3(1-\mu)} = \frac{\mu\delta}{3(1-\mu)}.$$
Thus, the optimal choice is $$k=\left\lf\frac{\mu\delta t_0}{3(1-\mu)}\right\rf.$$  This implies a condition $k/t_0=o(1)$.

Now, let us estimate the tail bound. Using $
\ln\binom{n}{pn}=nH(p)+O(\ln n)
$, the natural logarithm of our bound is $$t_0H(k/t_0)-
\alpha t_0H(k/(\alpha t_0))+
(k-1)\ln p_k+
O(\ln t_0).$$
Using the Taylor expansions $H(x)=-x\ln x+x-\frac{x^2}{2}+O(x^3)$ and $\ln(\mu+x)=\ln \mu+\frac{x}{\mu}+O(x^2)$, we get $$H(k/t_0)
=-\frac{k}{t_0}\ln\frac{k}{t_0}
+
\frac{k}{t_0}
\frac{k^2}{2t_0^2}
+
O((k/t_0)^3) \qquad\text{and}\qquad \alpha H(k/(\alpha t_0))
=-\frac{k}{t_0}\ln\frac{k}{\alpha t_0}
+
\frac{k}{t_0}
\frac{k^2}{2\alpha t_0^2}
+
O((k/t_0)^3).$$
Subtracting,
$$
H(k/t_0)-\alpha H(k/(\alpha t_0))=
-\frac{k}{t_0}\ln\alpha
 +
 \frac{k^2}{2t_0^2}
 \left(\frac1\alpha-1\right)
 +
 O((k/t_0)^3).
 $$
Since
$$
\ln\left(\mu+(1-\mu)\frac{k}{t_0}\right)=
\ln\mu
 +
 \frac{1-\mu}{\mu}\frac{k}{t_0}
 +
 O((k/t_0)^2),
 $$
we obtain
$$
\frac{k}{t_0}\ln p_k=
\frac{k}{t_0}\ln\mu
 +
 \frac{1-\mu}{\mu}\frac{k^2}{t_0^2}
 +
 O((k/t_0)^3).
 $$
Combining the above,
$$
 \frac1{t_0}\ln \lb\frac{\binom{t_0}{k}}{\binom{\alpha t_0}{k}}p_k^{k-1}\rb
 \approx
 \frac{k}{t_0}\ln\frac{\mu}{\alpha}
 +
 \frac{k^2}{t_0^2}
 \left[
 \frac12\left(\frac1\alpha-1\right)
 +
 \frac{1-\mu}{\mu}
 \right].
 $$
Since $\alpha=(1+\delta)\mu$,
$$
\ln\frac{\mu}{\alpha}=
-\ln(1+\delta)=
-\delta+O(\delta^2)=
-\frac{\alpha-\mu}{\mu}+O(\delta^2).
 $$
Also, for $\alpha$ close to $\mu$,
$$
\frac12\left(\frac1\alpha-1\right)
+
\frac{1-\mu}{\mu}=
\frac{3(1-\mu)}{2\mu}
 +
 O(\delta),
 $$
 as $1/\alpha-1 = (\mu/\alpha-\mu)/\mu\leq(1-\mu)/\mu$.
Therefore,
$$
 \ln \lb\frac{\binom{t_0}{k}}{\binom{\alpha t_0}{k}}p_k^{k-1}\rb
 \approx
 -\frac{\alpha-\mu}{\mu}k
 +
 \frac{3(1-\mu)}{2\mu}\frac{k^2}{t_0}.
 $$
Using
$$
 \frac{k}{t_0}
 \approx
 \frac{\mu\delta}{3(1-\mu)},
 $$
we get
$$
 \frac1{t_0}\ln \lb\frac{\binom{t_0}{k}}{\binom{\alpha t_0}{k}}p_k^{k-1}\rb
 \approx
 -\delta\cdot \frac{\mu\delta}{3(1-\mu)}
 +
 \frac{3}{2}\frac{1-\mu}{\mu}
 \left(
 \frac{\mu\delta}{3(1-\mu)}
 \right)^2=-\frac{\mu\delta^2}{3(1-\mu)}
+
\frac{\mu\delta^2}{6(1-\mu)}=
-\frac{\mu\delta^2}{6(1-\mu)}.
 $$
That is
$$
 \Pr[X\geq (1+\delta)\mu t]=\Pr[X/T\geq (1+\delta)\mu t_0] 
 \lesssim
 \exp\left(
 -\frac{\mu\delta^2}{6(1-\mu)}t_0
 \right).
 $$
Since $t_0=t/T$,
$$
 \Pr[X\geq (1+\delta)\mu t]
 \leq
 c(\varphi)\cdot
 \exp\left(
 -\frac{\mu\delta^2}{6(1-\mu)}\frac{t}{T}
 +
 o\left(\frac{\mu\delta^2}{1-\mu}\frac{t}{T}\right)
 \right).
 $$

\end{document}